\documentclass[preprint,12pt,authoryear]{elsarticle}
\usepackage{graphicx} 

\usepackage{amsthm}
\usepackage[english]{babel}
\usepackage{amsfonts}
\usepackage[psamsfonts]{eucal}
\usepackage[T1]{fontenc}
\usepackage[utf8]{inputenc}
\usepackage{amssymb,amsbsy,amsmath}

\numberwithin{equation}{section} 
\newtheorem{theorem}{Theorem}[section]
\newtheorem{corollary}{Corollary}[theorem]

\newtheorem{proposition}[theorem]{Proposition}

\usepackage[round]{natbib}
\usepackage{url}
\usepackage{hyperref}
\usepackage{verbatim}
\usepackage{booktabs}
\usepackage{color}
\usepackage{makeidx}
\usepackage{lmodern}
\usepackage{float}
\usepackage{physics}
\usepackage{multirow}
\usepackage{xcolor}
\usepackage{caption}
\usepackage{subfigure}
\usepackage{comment}
\usepackage{empheq}
\usepackage{mathrsfs, mathtools}
\usepackage{latexsym}
\usepackage{amsfonts}
\usepackage[psamsfonts]{eucal}
\usepackage{version}
\usepackage{fancyhdr}
\usepackage{verbatim}
\mathtoolsset{showonlyrefs}
\usepackage{bm}
\usepackage{rotating}

\usepackage{stackengine}

\usepackage{amsmath}
\usepackage{ulem}

\journal{}

\begin{document}

\begin{frontmatter}
\title{Hedging market risk and uncertainty via a robust portfolio approach}

\author[]{Adele Ravagnani$^{(1)}$, Mattia Chiappari$^{(2)}$, Andrea Flori$^{(2)}$, Piero Mazzarisi$^{(1)}$, Marco Patacca$^{(3)}$\\
\medskip
$^{(1)}${\it \small Department of Economics and Statistics, University of Siena}\\
$^{(2)}${\it \small Department of Management, Economics and Industrial Engineering, Politecnico di Milano}\\
$^{(3)}${\it \small Department of Economics, University of Perugia}
}

\begin{abstract}
Shorting for hedging exposes to risk when the market dynamics is uncertain. Managing uncertainty and risk exposure is key in portfolio management practice. This paper develops a robust framework for dynamic minimum-variance hedging that explicitly accounts for forecast uncertainty in volatility and covariance estimation to achieve empirical stability and reduced turnover, further improving other standard performance metrics. The approach combines high-frequency realized variance and covariance measures, autoregressive models for multi-step volatility forecasting, and a box-uncertainty robust optimization scheme. We derive a closed-form solution for the robust hedge ratio, which adjusts the standard minimum-variance hedge by incorporating variance and covariance forecast uncertainty. Using a diversified sample of equity, bond, and commodity ETFs over 2016-2024, we show that robust hedge ratios are more stable and entail lower turnover than standard dynamic hedges. While overall variance reduction is comparable, the robust approach improves downside protection and risk-adjusted performance, particularly when transaction costs are considered. Bootstrap evidence supports the statistical significance of these gains.

\end{abstract}



\begin{keyword}
Hedge ratio, robust portfolio optimization, risk management, realized risk measures
\end{keyword}

\end{frontmatter}

\section{Introduction}

Hedge ratios are fundamental tools in financial risk management, as they represent the optimal short position in a hedging instrument required to minimize portfolio variance while holding a long position equivalent to one unit of the hedged asset. Time variation and instability in hedge ratios represent a concrete economic problem for both institutional and retail investors. Although dynamic models, such as multivariate DCC-GARCH frameworks \citep{kroner1993multivariate,engle2002dynamic}, improve in-sample variance minimization, they often generate highly volatile hedge ratios that require frequent portfolio rebalancing. Research also documents that the optimal hedge ratio varies across asset classes and across market regimes, with sizeable implications for hedging practice \citep{sadorsky2012correlations,basher2016hedging,ahmad2018optimal,antonakakis2018oil,antonakakis2020oil,dutta2020assessment,chiappari2025hedging}. At the same time, multiscale and wavelet-based analyses show that hedge effectiveness depends on the investment horizon, suggesting that a one-size-fits-all hedge may misalign with investors’ time preferences \citep{lien2007wavelet,conlon2011multiscale}. In practice, this instability translates into substantial transaction costs, market impact, and operational complexity, potentially offsetting the theoretical gains from dynamic hedging. Moreover, hedge ratios are extremely sensitive to volatility and covariance estimation error, especially during periods of market stress when correlations spike and forecast uncertainty increases \citep{fabozzi2012}.  Empirical evidence shows that small perturbations in variance–covariance inputs can lead to large shifts in optimal portfolio weights, a phenomenon commonly referred to as estimation risk or error maximization \citep{yin2021}. This instability not only reduces net hedge effectiveness after costs, but may also expose investors to unintended risk concentrations precisely when protection is most needed. Consequently, designing hedge ratios that explicitly account for forecast uncertainty and reduce excessive turnover is not merely a statistical refinement, but an economically meaningful improvement in real-world risk management.

In this paper, we propose a tractable methodology for dynamic, minimum-variance (MV) hedging that explicitly incorporates forecast uncertainty, is tailored to realized measures of volatility and covariance, and is flexible for portfolio rebalancing at a generic time scale. Our approach builds on three pillars. First, we exploit high-frequency realized variance and covariance estimators to obtain more responsive and lower-noise inputs for hedge construction \citep{andersen2003,andersen2009}. Second, we model and forecast these realized measures with parsimonious autoregressive specifications (including HAR-like variants) that capture persistence and long-memory features while remaining computationally light \citep{corsi}. Third, we embed forecasts into a robust optimization framework with box uncertainty sets for variances and covariances, producing hedge ratios that minimize worst-case portfolio variance over a plausible range of risk inputs. This formulation reduces sensitivity to estimation errors, limits overreaction to transient volatility spikes, and allows for solutions at a generic time scale, while preserving analytical tractability.

Empirically, we evaluate the proposed robust hedge ratios on a diversified basket of equity, fixed-income, and commodity Exchange Traded Funds (ETFs), comparing them to standard benchmarks under realistic transaction-cost specifications. We demonstrate that robustness produces smoother hedge ratio dynamics, which translates into lower portfolio turnover and reduced transaction costs. At the same time, we show that including uncertainty preserves the core benefits of standard dynamic hedging models in terms of variance reduction, while delivering more consistent performance across different market environments. Our empirical results indicate that ``robust hedging'' either improves or maintains key risk-adjusted financial metrics, including profitability, Sharpe and Omega ratios, and measures of downside risk, thereby offering investors a more reliable and cost-efficient tool for managing market risk without compromising performance.

The remainder of the paper is organized as follows. Section \ref{sec:method} presents the robust hedging methodology and its econometric implementation; Section \ref{sec:data} describes the ETF data and realized measure construction; Section \ref{sec:results} reports empirical findings; Section \ref{sec:concl} concludes with implications and avenues for further research. Finally, \ref{sec:app1} and \ref{sec:app2} discuss theoretical aspects, while \ref{app:plots}, \ref{app:HAR}, and \ref{app: performance} present additional figures and results.

\section{Methods}\label{sec:method}
Time-varying hedge ratios require frequent rebalancing, which is typically exposed to fluctuations and various forms of market instability, with a significant impact on costs. To tackle this issue, we propose a novel methodology to dynamically estimate minimum-variance (MV) hedge ratios \citep{johnson1960}, based on robust portfolio optimization \citep{fabozzi2012, yin2021} and autoregressive models for realized volatility and covariance.

\subsection{Robust hedge ratio} Given an instrument $S$ and the hedging instrument $F$, let $\Sigma_{t+\tau}$ be the covariance matrix between the returns of the two instruments over the portfolio time horizon $\tau\geq 1$, with $t$ the time at which the hedged portfolio is built. For notational simplicity and without the risk of confusion, we remove redundant notation and indicate $\Sigma=\Sigma_{t+\tau}$. We consider a robust portfolio optimization based on box uncertainty for the variances of the two instruments and the covariance between them. In particular, the variance of $S$ is $\tilde{\sigma}^2_{S}\in [\sigma^2_{S}-\Theta_{S},\sigma^2_{S}+\Theta_{S}]$ for some $\sigma_S^2>0$ with uncertainty value $\Theta_{S}\geq 0$, the variance of the hedging instrument is $\tilde{\sigma}^2_{F}\in [\sigma^2_{F}-\Theta_{F},\sigma^2_{F}+\Theta_{F}]$ for some $\sigma_F^2>0$ with uncertainty value $\Theta_{F}\geq 0$, and the covariance is $\tilde{\sigma}_{SF}\in [\sigma_{SF}-\Theta_{SF},\sigma_{SF}+\Theta_{SF}]$ for some $\sigma_{SF}>0$ with uncertainty value $\Theta_{SF}\geq 0$. As such, the box uncertainty set is defined as 
\begin{equation}
\begin{split}
\mathcal{U}:=\{(\tilde{\sigma}_S^2, \tilde{\sigma}_F^2, \tilde{\sigma}_{SF}): & \:\tilde{\sigma}_S^2\in[\sigma^2_{S}-\Theta_{S},\sigma^2_{S}+\Theta_{S}],\:\tilde{\sigma}_F^2\in [\sigma^2_{F}-\Theta_{F},\sigma^2_{F}+\Theta_{F}], \\
& \:\tilde{\sigma}_{SF}\in [\sigma_{SF}-\Theta_{SF},\sigma_{SF}+\Theta_{SF}] 
\}.
\end{split}
\end{equation}
The robust portfolio optimization problem to solve in order to find the optimal hedge ratio can be stated as the following {\it unconstrained} min-max problem
\begin{equation}\label{eq:1}
\min_{h \in \mathbb{R}}
\;
\max_{(\tilde\sigma_S^2,\tilde\sigma_F^2, \tilde{\sigma}_{SF})\in\mathcal U}
w^{\top}\Sigma w
\end{equation}

where $w:=(1,-h)$ represents the portfolio weights and $h$ is the hedge ratio, i.e., the quantity of the hedging instrument to short over the horizon~$\tau$. Under this parameterization, the objective function, namely the variance of the hedged portfolio, reads as
$$
w^{\top}\Sigma w= \sigma_S^2+h^2\sigma_F^2-2h\sigma_{SF}.
$$
\begin{proposition}[Robust hedge ratio with box uncertainty]\label{prop:1}
Let $\Theta_S,\Theta_F, \Theta_{SF}\ge 0$ and let us assume $\sigma_F^2+\Theta_F>0$. Consider the robust optimization (min-max) problem
\[
\min_{h\in\mathbb{R}}
\;
\max_{(\tilde\sigma_S^2,\tilde\sigma_F^2, \tilde\sigma_{SF})\in\mathcal U}
\Big(
\tilde\sigma_S^2
+ h^2 \tilde\sigma_F^2
- 2 h \tilde\sigma_{SF}
\Big).
\]
Then, the problem admits a unique global minimizer given by
\begin{equation}\label{eq:h_cov}
h^*=\frac{sgn(\sigma_{SF}) \Big(|\sigma_{SF}| - \Theta_{SF}\Big)^+}{\sigma_F^2+\Theta_F}.
\end{equation}
\end{proposition}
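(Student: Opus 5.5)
The plan is to solve the inner maximization in closed form, which turns the problem into minimizing an explicit convex function of $h$, and then to find the minimizer through a subgradient optimality condition. Since the box $\mathcal U$ is a product of intervals and the objective $\tilde\sigma_S^2 + h^2\tilde\sigma_F^2 - 2h\tilde\sigma_{SF}$ is affine and separable in the three uncertain parameters, the maximum is attained coordinatewise:
\begin{itemize}
\item $\tilde\sigma_S^2=\sigma_S^2+\Theta_S$;
\item $\tilde\sigma_F^2=\sigma_F^2+\Theta_F$, because $h^2\ge 0$;
\item $\tilde\sigma_{SF}=\sigma_{SF}-\Theta_{SF}\,\mathrm{sgn}(h)$, so that $-2h\tilde\sigma_{SF}$ is as large as possible. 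For $h=0$ the choice is irrelevant.
\end{itemize}
This gives the worst-case objective
\[
g(h)=\sigma_S^2+\Theta_S+h^2(\sigma_F^2+\Theta_F)-2h\sigma_{SF}+2\Theta_{SF}|h|.
\]

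Next I will check that $g$ is strictly convex and coercive. This holds because $\sigma_F^2+\Theta_F>0$ by assumption and $|h|$ is convex. Hence $g$ has a unique global minimizer, which is characterized by $0\in\partial g(h)$, that is,
\[
0\in 2h(\sigma_F^2+\Theta_F)-2\sigma_{SF}+2\Theta_{SF}\,\partial|h|.
\]
I will treat three cases.
\begin{itemize}
\item If $h>0$, the condition reads $h=(\sigma_{SF}-\Theta_{SF})/(\sigma_F^2+\Theta_F)$. This is consistent only when $\sigma_{SF}>\Theta_{SF}$.
\item If $h<0$, it reads $h=(\sigma_{SF}+\Theta_{SF})/(\sigma_F^2+\Theta_F)$. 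This is consistent only when $\sigma_{SF}<-\Theta_{SF}$.
\item $h=0$ is optimal if and only if $|\sigma_{SF}|\le\Theta_{SF}$, since $\partial|0|=[-1,1]$.
\end{itemize}
These three cases combine into the soft-thresholding formula $h^*=\mathrm{sgn}(\sigma_{SF})(|\sigma_{SF}|-\Theta_{SF})^+/(\sigma_F^2+\Theta_F)$ stated in Proposition~\ref{prop:1}.

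The main point needing care is the first step: justifying that the maximum of a separable affine function over a box is attained at the coordinatewise extremes, and handling the sign dependence of the covariance term. I will show this explicitly via $\max_{|\delta|\le\Theta_{SF}}(-2h\delta)=2\Theta_{SF}|h|$. Uniqueness then follows directly from strict convexity. There is no need for a minimax theorem, since we minimize the pointwise maximum directly.
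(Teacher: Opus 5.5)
Your proposal is correct and follows essentially the same route as the paper's proof. Both solve the inner maximization coordinatewise to obtain the worst-case objective $\sigma_S^2+\Theta_S+h^2(\sigma_F^2+\Theta_F)-2h\sigma_{SF}+2\Theta_{SF}|h|$, use strict convexity from $\sigma_F^2+\Theta_F>0$, and run a three-case subdifferential analysis that yields the soft-thresholding formula; your coercivity remark just makes existence explicit, which the paper obtains by exhibiting the point satisfying the optimality conditions.
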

\begin{proof} See \ref{sec:app1}.\end{proof}

As it will emerge from empirical results, this approach turns out to be highly conservative therefore, we will focus on a sub-case of this problem where the covariance is consider fixed. This amounts to assume $\Theta_{SF} = 0$ in Proposition \ref{prop:1}.
\begin{corollary}
[Robust hedge ratio with box uncertainty only on variances]\label{corollary:1}
Let us consider Proposition \ref{prop:1} with $\sigma_{SF}\in\mathbb{R}$ fixed and so, $\Theta_{SF} = 0$. The unique global minimizer of the robust optimization problem is
\begin{equation}\label{eq:h}
h^*=\frac{\sigma_{SF}}{\sigma_F^2+\Theta_F}.
\end{equation}
\end{corollary}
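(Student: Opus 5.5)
The natural route is to specialize Proposition \ref{prop:1} to $\Theta_{SF}=0$. With $\Theta_{SF}=0$ the covariance interval collapses to the single point $\{\sigma_{SF}\}$, so the uncertainty set $\mathcal U$ only lets the two variances vary and the min-max problem is exactly the one in the Corollary. The formula \eqref{eq:h_cov} then gives
\[
h^*=\frac{sgn(\sigma_{SF})\,\big(|\sigma_{SF}|\big)^+}{\sigma_F^2+\Theta_F}=\frac{sgn(\sigma_{SF})\,|\sigma_{SF}|}{\sigma_F^2+\Theta_F}=\frac{\sigma_{SF}}{\sigma_F^2+\Theta_F},
\]
because $(|x|)^+=|x|$ and $sgn(x)|x|=x$ for every real $x$. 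Under the convention $sgn(0)=0$ this also covers $\sigma_{SF}=0$. Uniqueness is inherited from Proposition \ref{prop:1}.

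The one point needing care is that Proposition \ref{prop:1} is stated with $\sigma_{SF}>0$ in the definition of $\mathcal U$, while the Corollary allows any $\sigma_{SF}\in\mathbb{R}$. I would therefore give a short self-contained argument that does not depend on the sign of $\sigma_{SF}$.

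For fixed $h$, the objective $\tilde\sigma_S^2+h^2\tilde\sigma_F^2-2h\sigma_{SF}$ is separable and nondecreasing in each variance, since the coefficient $h^2\ge 0$. The inner maximum is therefore reached at $\tilde\sigma_S^2=\sigma_S^2+\Theta_S$ and $\tilde\sigma_F^2=\sigma_F^2+\Theta_F$, which gives the worst-case function
\[
g(h)=\sigma_S^2+\Theta_S+h^2(\sigma_F^2+\Theta_F)-2h\sigma_{SF}.
\]
By the assumption $\sigma_F^2+\Theta_F>0$, $g$ is a strictly convex quadratic. Its unique global minimizer is found from $g'(h)=0$ and is exactly \eqref{eq:h}.

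No step is a real obstacle. The only subtlety is this reconciliation of the sign assumption on $\sigma_{SF}$, which the direct computation above removes.
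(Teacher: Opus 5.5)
Your proposal is correct and takes essentially the paper's route: the corollary is obtained by setting $\Theta_{SF}=0$ in Proposition~\ref{prop:1}, and your self-contained check (upper endpoints of the variance intervals, then the unique minimizer of the strictly convex quadratic $g$) is the $\Theta_{SF}=0$ case of the appendix proof, where the nonsmooth term $2|h|\Theta_{SF}$ vanishes. Your concern about the sign assumption $\sigma_{SF}>0$ is fair to raise but harmless, since the appendix argument already handles both signs of $\sigma_{SF}$.
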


\subsection{Modeling volatility and uncertainty} To explicitly model the squared volatility and the covariance over the horizon $\tau$ and the corresponding uncertainty, we consider an autoregressive model of order $p$, namely an AR(p) model
$$
y_{t+1}=\phi_0+\sum_{j=0}^{p-1}\phi_{i+1}y_{t-j}+\eta_{t+1}
$$
where $\phi_1,\ldots,\phi_p$ are the parameters of the model, and $\eta_{t+1}$ is a white noise accounting for both innovation and measurement errors, with the assumption of zero mean and finite second-order moment.\footnote{The parameters of the AR(p) process are estimated via maximum likelihood and, for simplicity, the corresponding estimation error is assumed to be small when compared with other error sources in the analysis below.} Finally, $y_t$ represents a general realized measure of either variance or covariance. Notice that the order of the process is able to capture the long memory of volatility for some proper $p\geq 1$. If $y_t$ is intended as a measure of the realized variance over a unit horizon at time $t$, i.e., $\hat{\sigma}_t^2$, it follows that the estimated volatility over a horizon $\tau$ is the square root of the integrated variance over $\tau$ steps, namely $\hat{\sigma}_{t+\tau}^2 = \sum_{j=1}^\tau y_{t+j}$ where $\hat{y}_{t+j}$ is the $j$-th step-ahead forecast of the AR(p) process. Then, the uncertainty interval $\Theta_{\tau}$ associated with $\hat{\sigma}_{t+\tau}^2$ is described by the standard deviation of the forecast error $e_\tau\equiv\left(\sum_{j=1}^{\tau}y_{t+j}\right)-\left(\sum_{j=1}^\tau\hat{y}_{t+j}\right)$, which can be computed in closed-form for linear autoregressive processes. The full derivation is provided in \ref{sec:app2}, together with a comparison between the closed-form expression and its empirical counterpart. Analogously, we can derive the integrated covariance over $\tau$ steps and its uncertainty interval. Modeling volatility with AR processes is motivated by the seminal paper \citep{corsi}, proposing a simple AR-type model for volatility that considers aggregations over daily, weekly, and monthly time scales, each one associated with a different autoregressive coefficient. Despite its simplicity, the so-called HAR model can reproduce the stylized facts, in particular the long-memory behavior, as evidenced by excellent forecasts. In our work, we employ a model specification similar to HAR for robustness checks. However, we show that simple AR processes are better suited to the general scope of the paper.

Restoring time indexing in Eq. \eqref{eq:h_cov}, the optimal hedge ratio can be expressed in terms of the variance of $F$ and the covariance between $S$ and $F$, that is
\begin{equation}\label{HR_cov_complete}
  h^*_t= \frac{sgn(\sigma_{SF, t+\tau}) \Big(|\sigma_{SF, t+\tau}| - \Theta_{SF, \tau}\Big)^+}{\sigma_{F, t+\tau}^2+\Theta_{F, \tau}}
\end{equation}
with uncertainty intervals $\Theta_{F, \tau}$ and $ \Theta_{SF, \tau}$ over $\tau$ computed as in \ref{sec:app2}. In particular, $\sigma_{F,t}^2$ and $\sigma_{SF,t}$ are measured as the realized variance (RV) and realized covariance (RCV) estimators \citep{andersen2003, andersen2009}. For a given day $t$, it is
\begin{equation}
    \begin{split}
        RV_{x, t} &= \frac{M}{M_{x,t}} \sum_{i=1}^{M_{x,t}} r^2_x(i) \\
        RCV_{xy, t} &= \frac{M}{M_{xy,t}}\sum_{i=1}^{M_{xy,t}} r_x(i)r_y(i) \\
    \end{split}
\end{equation}
where
$$r_x(i) = \log\Bigg(\frac{p_x^{close}(i)}{p_x^{close}(i-1)}\Bigg),$$
with $x$ and $y$ denoting two instruments, $i$ the 5-minutes intervals on day $t$, $M_{x,t}$ ($M_{xy,t}$) the total number of these intervals with available return data for $x$ (for both $x$ and $y$) on $t$, $M$ the maximum number of 5-minute intervals,\footnote{In our estimations, we drop the first and last half hours of the trading day. As a result, we consider the time window from 10:00 AM to 3:30 PM. This gives: $M = 66$.} and $p_x^{close}(i)$ the closing price of $x$ in the interval $i$. The factors $M/M_{x,t}$ and $M/M_{xy,t}$ allow to account for missing observations throughout the trading day, ensuring that the realized estimators remain comparable across days with different data availability.

\section{Data}\label{sec:data}
In order to test our methodology, we focus on a diversified basket of ETFs related to the equity and bond markets, and to commodities; they are listed in Table \ref{tab:etfs}. 

The realized variances and covariances, as well as the returns, of the selected ETFs, are estimated by relying on historical intraday data\footnote{Source: \href{https://www.kibot.com/}{https://www.kibot.com/}.} related to the period 2016-2024. In Table \ref{tab:etfs_stat}, we report some summary statistics of our data set. Finally, Fig.~\ref{fig:correlations_hist} in \ref{app:plots} presents the histogram of pairwise return correlations, showing substantial dispersion across instrument pairs.

\begin{table}[]
    \centering
    \small
    \begin{tabular}{c|c|c}
         Symbol & Description & Class\\
         \hline\hline
         IVV	&ISHARES S\&P 500 INDEX & Equity\\
ICLN	& ISHARES GLOBAL CLEAN ENERGY & Equity \\
QQQM&	INVESCO NASDAQ 100 ETF & Equity \\
ASHR	&DEUTSCHE X-TRACKERS HARVEST CSI300 CHNA & Equity\\
EWH &	ISHARES MSCI HONG KONG INDEX FUND & Equity\\
IEV	& ISHARES S\&P EUROPE 350 INDEX FUND & Equity\\	
CORP	&PIMCO INVESTMENT GRADE CORPORATE BD ETF & Bond\\
IGOV&	ISHARES INTERNATIONAL TREASURY BOND & Bond\\
GOVT	&ISHARES US TREASURY BOND & Bond\\	
BNO	& UNITED STATES BRENT OIL FUND LP ETV & Commodity\\
UNG	& UNITED STATES NATURAL GAS & Commodity\\
AAAU &	GOLDMAN SACHS PHYSICAL GOLD ETF SHARES & Commodity\\
GSG	& ISHARES S\&P GSCI COMMODITY-INDEXED TRUST & Commodity\\

    \end{tabular}
    \caption{Basket of ETFs in the data set.}
    \label{tab:etfs}
\end{table}

\begin{table}[]
    \centering
    \small
    \begin{tabular}{c|c|c|c|c}
         ETF & Start day & N. days& Avg. daily n. & Avg. daily volume \\
         & & in 2016-2024 & 5-minute intervals & ($\times10^3$) \\
         \hline\hline
         IVV & 11/30/2005 & 2264 & 65.72 $\pm$ 0.06 & 2260.03 \\ 
ICLN & 6/25/2008 & 2264 & 51.33 $\pm$ 0.41 & 1516.71 \\ 
QQQM & 9/10/2012 & 1061 & 63.13 $\pm$ 0.21 & 514.27 \\ 
ASHR & 11/6/2013 & 2264 & 64.97 $\pm$ 0.08 & 1961.17 \\ 
EWH & 11/30/2005 & 2264 & 65.55 $\pm$ 0.06 & 2679.14 \\ 
IEV & 11/30/2005 & 2264 & 62.04 $\pm$ 0.12 & 313.98 \\ 
CORP & 9/21/2010 & 2264 & 34.36 $\pm$ 0.23 & 46.53 \\ 
IGOV & 1/30/2009 & 2264 & 43.47 $\pm$ 0.26 & 131.53 \\ 
GOVT & 2/24/2012 & 2264 & 64.33 $\pm$ 0.09 & 3993.48 \\ 
BNO & 6/2/2010 & 2264 & 59.84 $\pm$ 0.19 & 469.20 \\ 
UNG & 4/18/2007 & 2264 & 65.62 $\pm$ 0.06 & 4524.23 \\ 
AAAU & 8/15/2018 & 1604 & 48.82 $\pm$ 0.50 & 672.59 \\ 
GSG & 7/21/2006 & 2264 & 59.12 $\pm$ 0.19 & 502.81 \\ 
    \end{tabular}
    \caption{Summary statistics of the ETF basket.}
    \label{tab:etfs_stat}
\end{table}

\section{Results}\label{sec:results}
After estimating the realized variances and covariances of the financial instruments of interest, we assess the stationarity of the resulting time series by means of Augmented Dickey–Fuller tests. Then, we run the robust methodology described in Section \ref{sec:method}. In order to ensure positivity of the realized variances, we take their logs before fitting AR models. By following \cite{demetrescu2019}, in order to correct the bias which emerges from the reverse transformation from log
forecasts to realized variances, we rely on the variance-based correction. The latter consists in the following approximation
\begin{equation*}
    \mathbb{E}[RV_{t+j}|\mathcal{F}_t] = \exp{\log\mathbb{E}[RV_{t+j}|\mathcal{F}_t]}\simeq \exp{\mathbb{E}[\log RV_{t+j}|\mathcal{F}_t] + \frac{1}{2}Var(\epsilon_{j})}
\end{equation*}
where $\epsilon_{j}$ is the $j$-step-ahead forecast error estimated in-sample, and $\mathcal{F}_t$ represents the information set available at time $t$. 

The AR(p) processes are estimated on the train set, i.e., in-sample, and then, the hedge ratios are computed out-of-sample. We quantify the hedged portfolio effectiveness by means of the following metrics
\begin{equation}\label{eq_def_HE}
    \begin{split}
        HE & = 1 -\frac{Var(R_h)}{Var(R_S)} \\
        HE_C & = 1 -\frac{Var(R_h|R_S 
 < \delta)}{Var(R_S| R_S < \delta)} \\
 HE_R & = \frac{\mathbb{E}[R_h| R_S < \delta]}{\mathbb{E}[R_S| R_S < \delta]}
    \end{split}
\end{equation}
where $\delta$ is a given threshold, $R_{x,t}$ is the return of $x$ on day $t$, and $Var(R_S)$ and $Var(R_h)$ are the variances of the daily returns of the unhedged and hedged portfolio, respectively. In particular, at time $t + \tau$ the latter is $R_{h, t + \tau} = R_{S, t + \tau} - h_t R_{F, t + \tau}$. 

Hedge effectiveness ($HE$) is a standard measure of hedging performance in this context \citep{chen2024hedge}, quantifying the reduction in the hedged portfolio's variance relative to the long position on investment $S$ (unhedged portfolio). We introduce here a threshold-based modification of $HE$, namely {\it conditioned} hedge effectiveness ($HE_C$), to measure hedge effectiveness for returns below a given threshold, with a focus on left-sided $S$ movements. This is to better capture the role of robustness in the context of negative market events. Finally, $HE_R$ is introduced with a similar objective, but with a focus on expected returns instead of variances, to measure the extent to which the loss of the hedged portfolio is reduced relative to the long position during left-tail movements in $S$.

\subsection{The robust hedge ratio is almost always zero}
The robust hedge ratio we propose in Eq. \eqref{HR_cov_complete} is derived by introducing uncertainty intervals for the variances of the two instruments and the covariance between them. However, when it is derived for the couples of ETFs in our dataset, we find that this approach is extremely conservative. The uncertainty on the covariance is usually higher than its absolute value and so the term
$$\Big(|\sigma_{SF, t + \tau}| - \Theta_{SF, \tau}\Big)^+$$ in Eq. \eqref{HR_cov_complete} drives the hedge ratio to $0$. In Table \ref{tab:theta_ratios} the distributions of the ratios between the uncertainty intervals, $\Theta_{SF, \tau}$ and $\Theta_{F, \tau}$, and the corresponding mean values are compared for three prediction horizons $\tau$ and two AR(p) models. We notice that the ratios related to the covariance tend to assume higher values than those of the variance and they are mainly higher than $1$. This second issue drives the robust hedge ratio $h_t^*$ of Eq. \eqref{HR_cov_complete} to $0$, as it is shown in Table \ref{tab:nonzeroHR}. Here details about the distribution of the fraction of days such that $h_t^*$ is different from $0$ are provided and it is evident that in the vast majority of days the portfolio weight of the hedging instrument is $0$. Therefore, in the following we consider a less conservative robust approach by assuming that the covariance is fixed and focusing on the uncertainty on squared volatility. This corresponds to the hedge ratio defined in Eq. \eqref{HR_cov_complete} with $\Theta_{SF, \tau} = 0$, i.e., 
\begin{equation}\label{HR_cov}
    h^*_t= \frac{\sigma_{SF, t+\tau}}{\sigma_{F, t+\tau}^2+\Theta_{F, \tau}},
\end{equation}
which is the time-varying counterpart of Eq. \eqref{eq:h} in Corollary \ref{corollary:1}.

\begin{table}[]
    \centering
    \small
    \begin{tabular}{c|c|c|c|c||c|c|c}
         & $\tau$ & $q_{0.25}(K_{SF})$ & $q_{0.50}(K_{SF})$ & $q_{0.75}(K_{SF})$ & $q_{0.25}(K_{S})$ & $q_{0.50}(K_{S})$ & $q_{0.75}(K_{S})$\\
        \hline
        $p=1$ & 1 &  4.2 &  6.2 &  11.8 & 1.6 &  2.4 &  3.7 \\ 
        & 5 & 2.5 &  3.3 &  6.1
& 1.0 &  1.4 &  2.2 \\ & 10 & 2.1 &  2.5 &  5.0
& 0.9 &  1.3 &  1.8 \\
        \hline\hline 
        $p=5$ & 1 &  4.8 &  7.6 &  11.9
& 1.7 &  2.7 &  4.6 \\
& 5 &  2.5 &  3.8 &  6.4
& 1.0 &  1.4 &  2.6 \\
& 10 &  2.1 &  3.1 &  5.2
& 0.9&  1.3 &  2.4\\
    \end{tabular}
    \caption{Details about the distributions of $K_{SF} = \Theta_{SF, \tau}/|\langle\sigma_{SF, t + \tau}\rangle_t|$, $K_S = \Theta_{S, \tau}/\langle\sigma^2_{S, t + \tau}\rangle_t$.
    }
    \label{tab:theta_ratios}
\end{table}

\begin{table}[]
    \centering
    \small
    \begin{tabular}{c|c|c|c|c|c|c}
         &$\tau$ & mean($f_{h^*}$) & std($f_{h^*}$) & $q_{0.9}(f_{h^*})$ &$q_{0.95}(f_{h^*})$ & $q_{0.99}(f_{h^*})$ \\
        \hline
          $p=1$ & 1 &  0.0019 &  0.0055 &  0.0046 &  0.0066 &  0.0301\\
         & 5 &  0.0025 &  0.0133 &  0.0055 &  0.011 &  0.0584\\
 & 10 &  0.0641 &  0.2445&  0.0000 &  1.0000 &  1.0000\\
        \hline\hline
         $p=5$  & 1 &  0.0044 &  0.0084 &  0.0136 &  0.0158 &  0.0443 \\
 & 5 &  0.0131 &  0.0324 &  0.0234 &  0.0459 &  0.1778 \\
 & 10 &  0.0206&  0.0558&  0.0378 &  0.1071&  0.2809 \\
    \end{tabular}
    \caption{Details about the distribution of $f_{h^*}$, that is the fraction of days such that the robust hedge ratio $h_t^*$ of Eq. \eqref{HR_cov_complete} is different from $0$.
    }
    \label{tab:nonzeroHR}
\end{table}

\subsection{Robust vs. standard approach}

\begin{figure}
\includegraphics[width=0.5\linewidth]{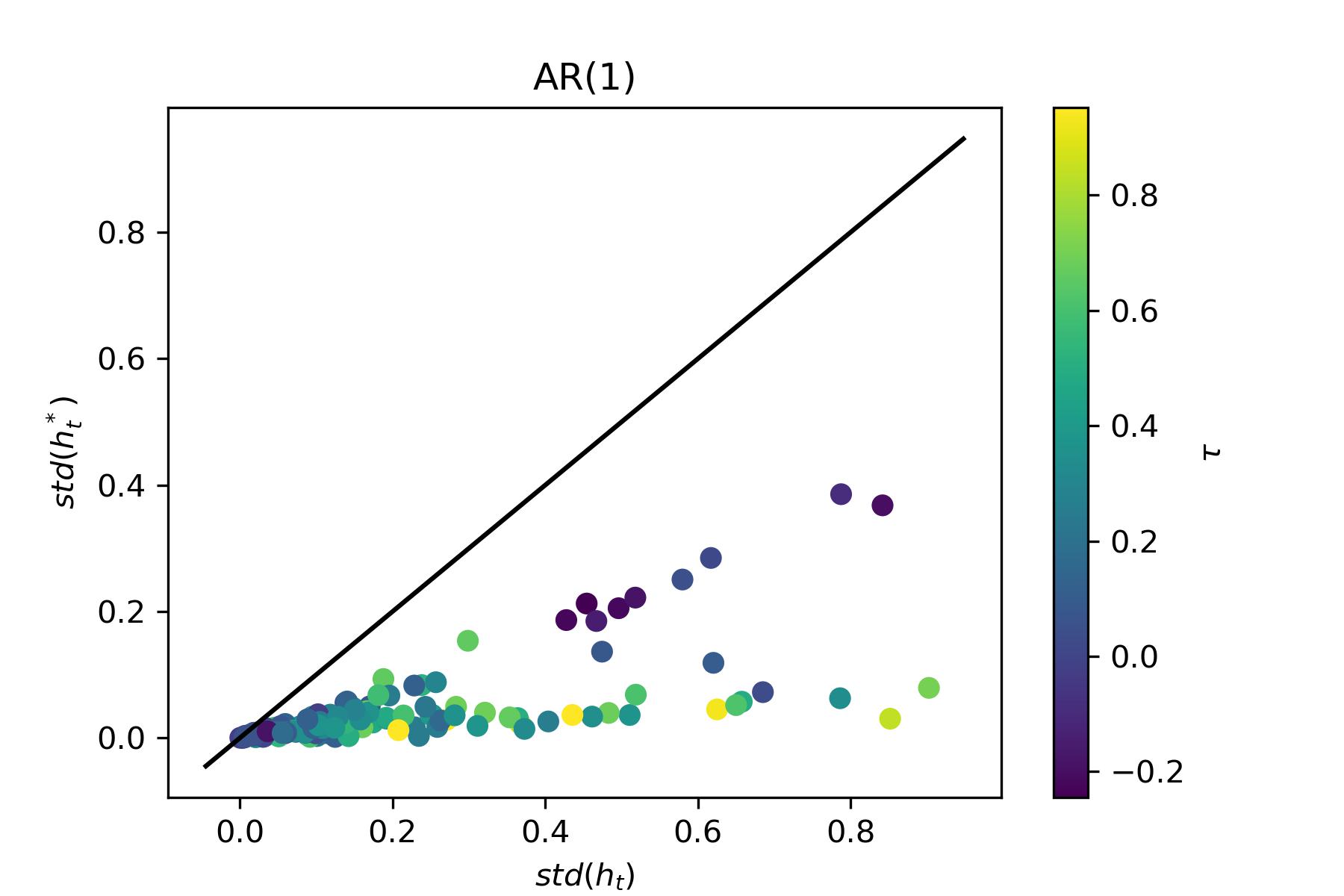} \includegraphics[width=0.5\linewidth]{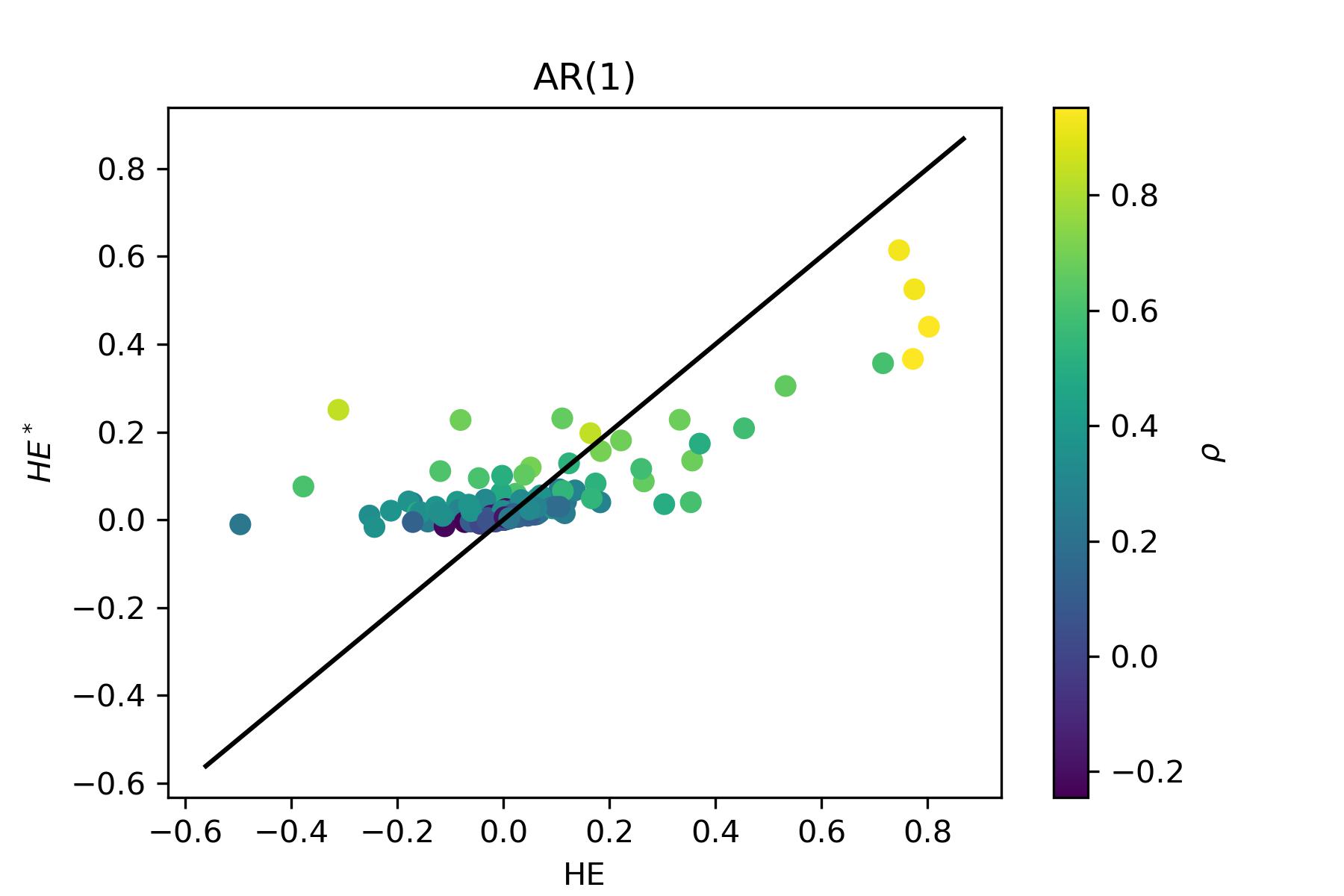}
\includegraphics[width=0.5\linewidth]{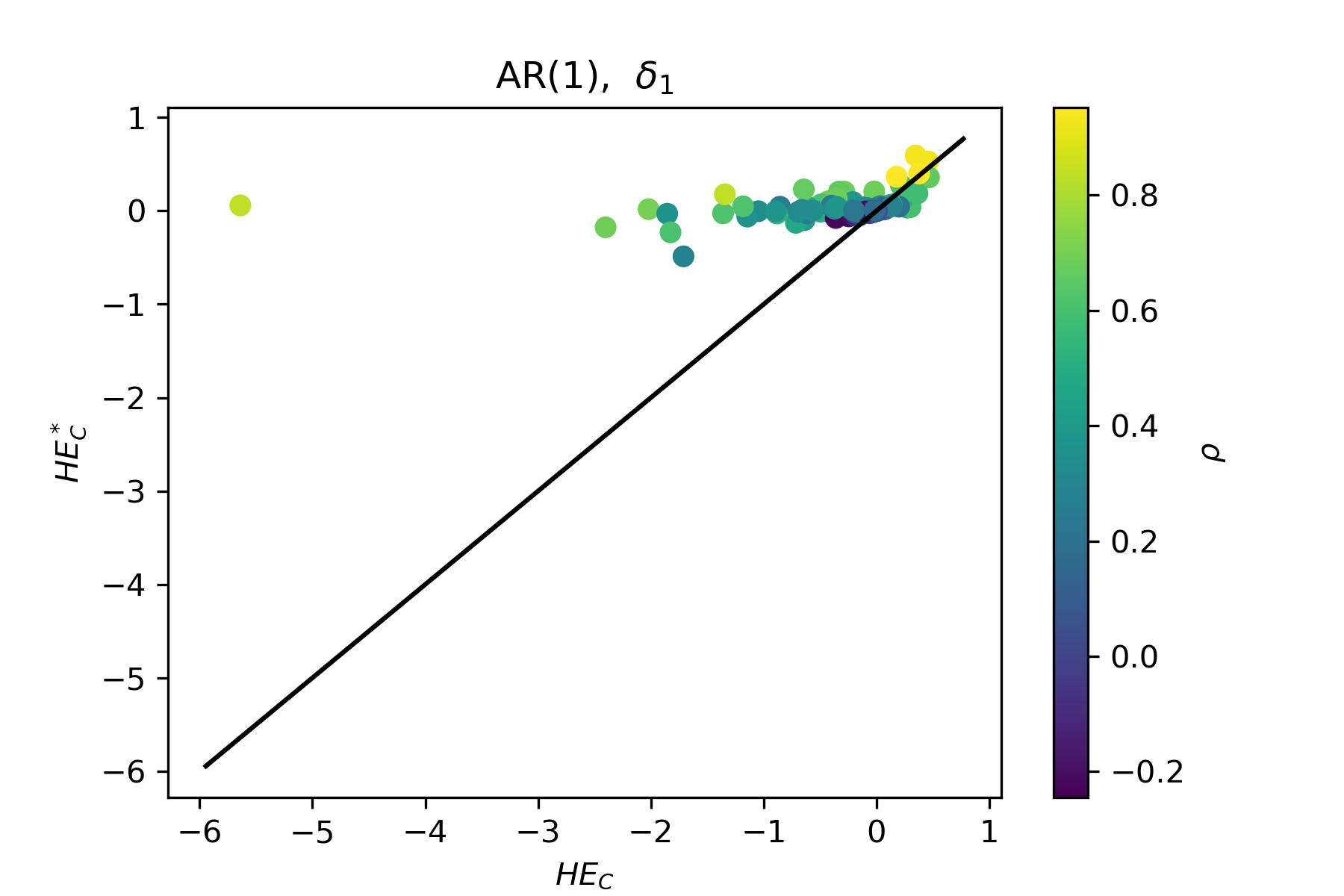}
\includegraphics[width=0.5\linewidth]{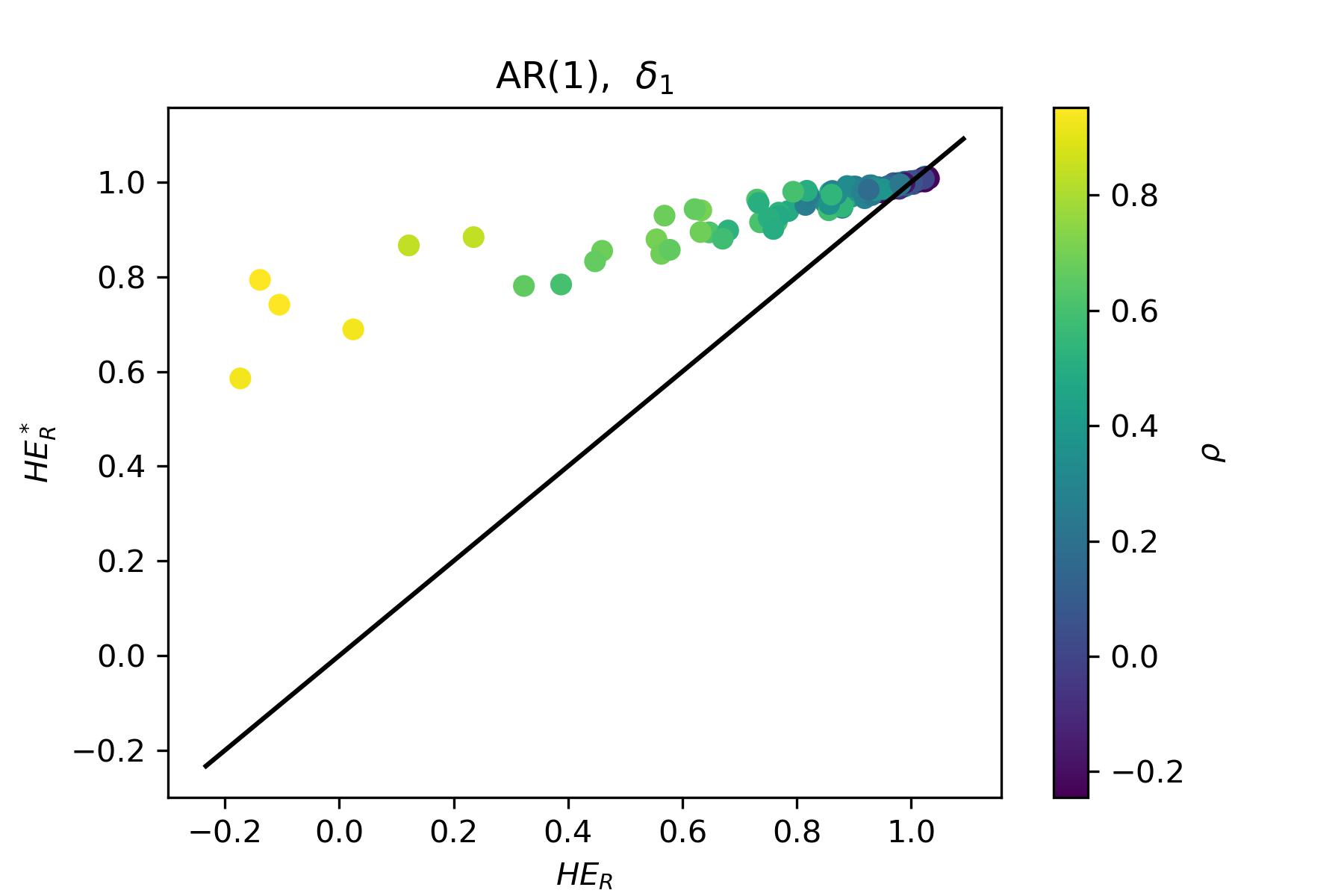}
    \caption{Comparison between the outputs of the standard and robust methodologies when realized variances and covariances are fitted by AR(1) processes. Each point is associated with a pair of instruments and its color is the return correlation $\rho$ of the pair. The threshold $\delta_1$ is the first quartile of the asset returns. The dark line corresponds to the bisector.}
    \label{fig:AR1}
\end{figure}

One crucial finding of our paper is that, being more conservative, the robust approach yields lower standard
deviations of the hedge ratio $h_t^*$ (Eq. \eqref{HR_cov}) compared to the non-robust one $h_t$ (Eq. \eqref{HR_cov} with $\Theta_{F, \tau} = 0$), as it is represented in the top left panel of Fig. \ref{fig:AR1}, when $\tau= 1$ and the variances and covariances are fitted with AR(1) processes. This result is extremely relevant for practitioners since it allows to limit the rebalancing process and transaction costs. 

By computing the effectiveness metrics of Eq. \eqref{eq_def_HE}, we observe that our robust methodology does not always outperform the standard one in terms of $HE$. However, even when the standard method yields negative hedge effectiveness, the robust approach produces values that are approximately zero or positive. This advantage is illustrated in the top right panel of Fig. \ref{fig:AR1}. Additionally, as it is shown in the bottom panels of Fig. \ref{fig:AR1}, it is clear that the proposed robust methodology outperforms the standard one in the worst-case scenarios, i.e., when the conditioned metrics $HE_C$ and $HE_R$ are considered.\footnote{In the plots of Fig. \ref{fig:AR1}, the conditioned metrics have been computed with a threshold that is the first quartile of the asset returns. Consistent results can be obtained with other choices of the threshold, e.g., $0$, and they are available upon request.} 

In the effectiveness plots, we can also observe that the pairs of instruments with high (low) correlation are associated with higher (lower) discrepancy between the standard and robust values, being farer from the bisector. Analogous conclusions can be drawn if the color of the heatmaps is a variable representing the class co-occurence between instruments of each pair, as it is represented in Fig. \ref{fig:pairtype} of \ref{app:plots}. This is explained by the low values of the hedge ratios that we obtain when the two instruments are low-correlated.  As a consequence, the hedged portfolio does not differ consistently from its unhedged counterpart resulting in hedge effectiveness metrics approximately equal to zero. In Fig. \ref{fig:meanHR_compare} of \ref{app:plots}, we compare the histograms of the hedge ratios for four instrument pairs characterized by different levels of return correlation. Finally, we observe that all the findings outlined above also hold when AR processes with longer memory, e.g., AR(5), model the variances and covariances. This is shown in Fig. \ref{fig:AR5} of \ref{app:plots}.

\subsection{Predictions using longer horizons}

\begin{figure}
\includegraphics[width=0.5\linewidth]{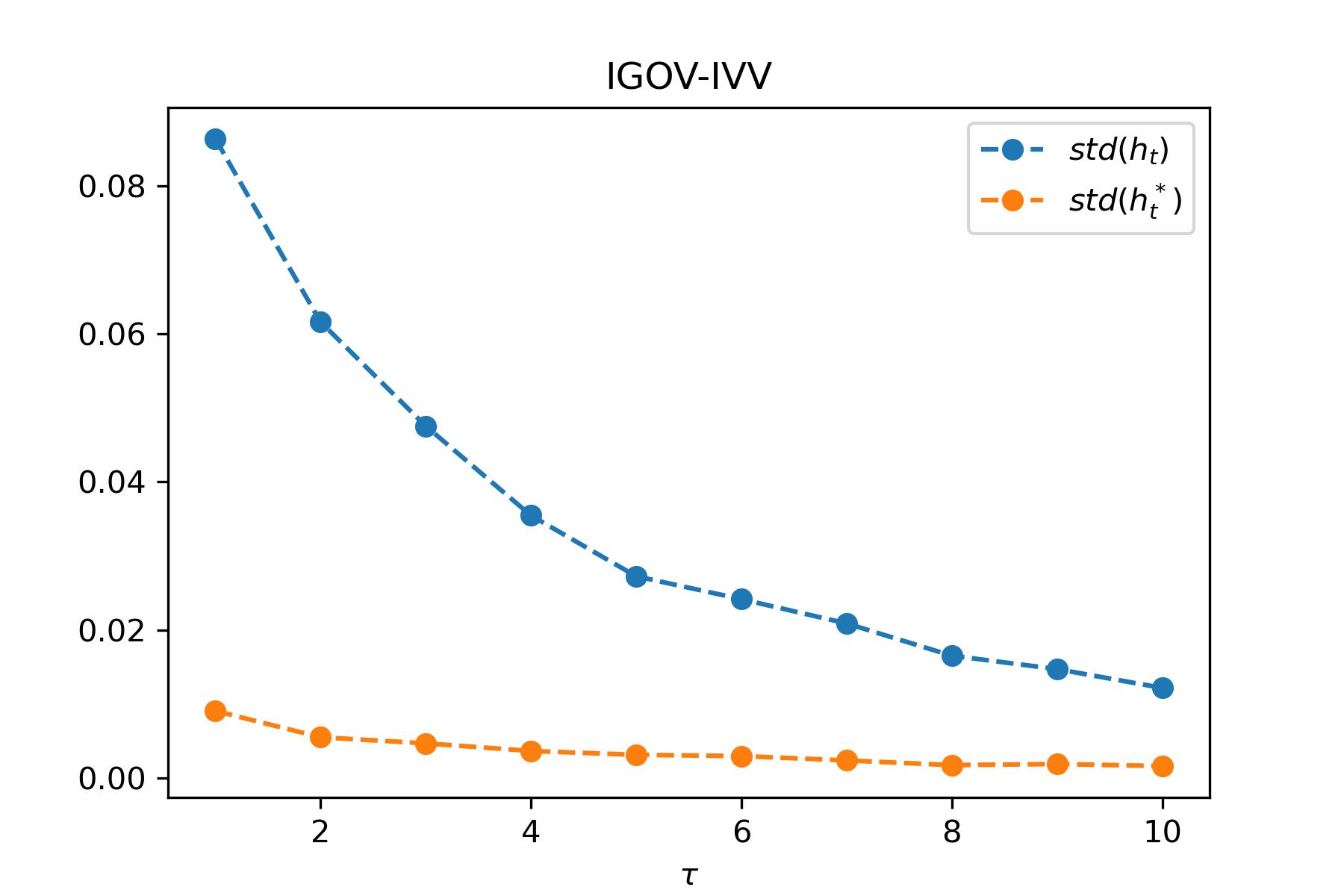} \includegraphics[width=0.5\linewidth]{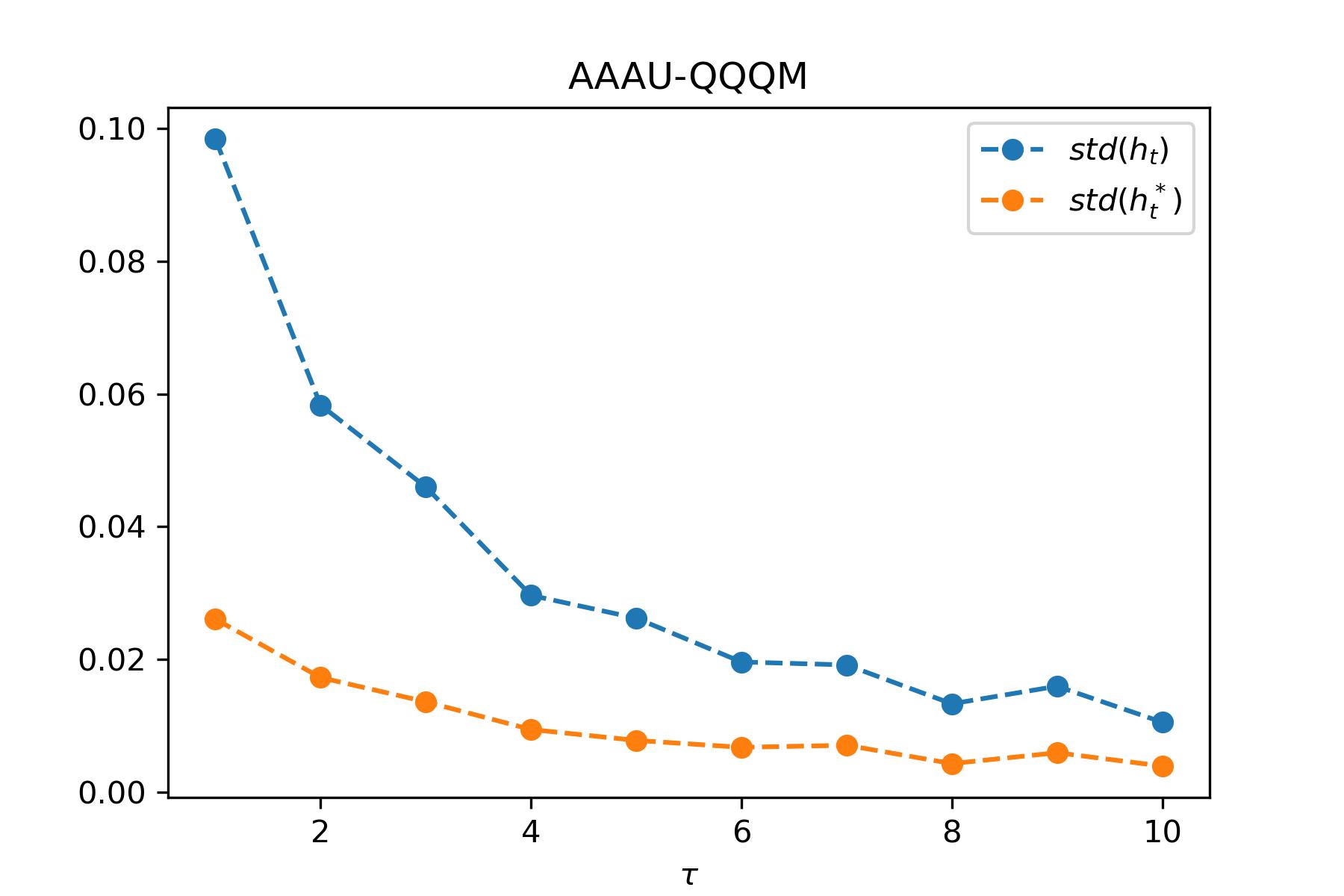}
    \caption{Standard deviation of the hedge ratios as a function of the prediction horizon $\tau$ employed for predictions. Two different instrument pairs are considered: in the title, the left label refers to the hedged asset and the right to the hedging instrument.}
    \label{fig:stdHR_vs_tau}
\end{figure}

In the previous subsection, we have seen that the robust approach has lower standard deviations of the hedge ratio than the standard method and this could be crucial in order to limit transaction costs. However, it is straightforward to see that limiting portfolio rebalancing could also be achieved by predicting hedge ratios with longer horizons $\tau$. Let us shed light on this issue. In Fig. \ref{fig:stdHR_vs_tau}, we plot the standard deviation of hedge ratios as a function of $\tau$ for both the standard and robust methods, and two pairs of instruments in our dataset\footnote{Similar outputs are observed for the other pairs in our dataset.}. A decreasing trend is evident especially for the standard method while the robust approach is associated to more stability. Moreover, concerning the performances of the strategies in terms of the effectiveness metrics, employing longer prediction horizons leads to higher values of $HE^*$, slightly higher values of $HE_C^*$, and lower values of $HE_R^*$. 
These effects are particularly evident for pairs with high correlation of the returns, as shown in Fig. \ref{fig:HEr_tau1_vs_tau10}, where we compare the metrics for $\tau=1$ and $\tau = 10$, and AR(1).\footnote{Analogous results are obtained when AR(5) is employed to fit realized variances and covariances.} 
Therefore, when realized variances and covariances are predicted over longer horizons, the hedge ratio exhibits less variability, and the variance reduction in portfolio returns obtained through hedging is larger. However, this comes at the expense of lower expected returns in worst-case scenarios.

\begin{figure}
\includegraphics[width=0.5\linewidth]{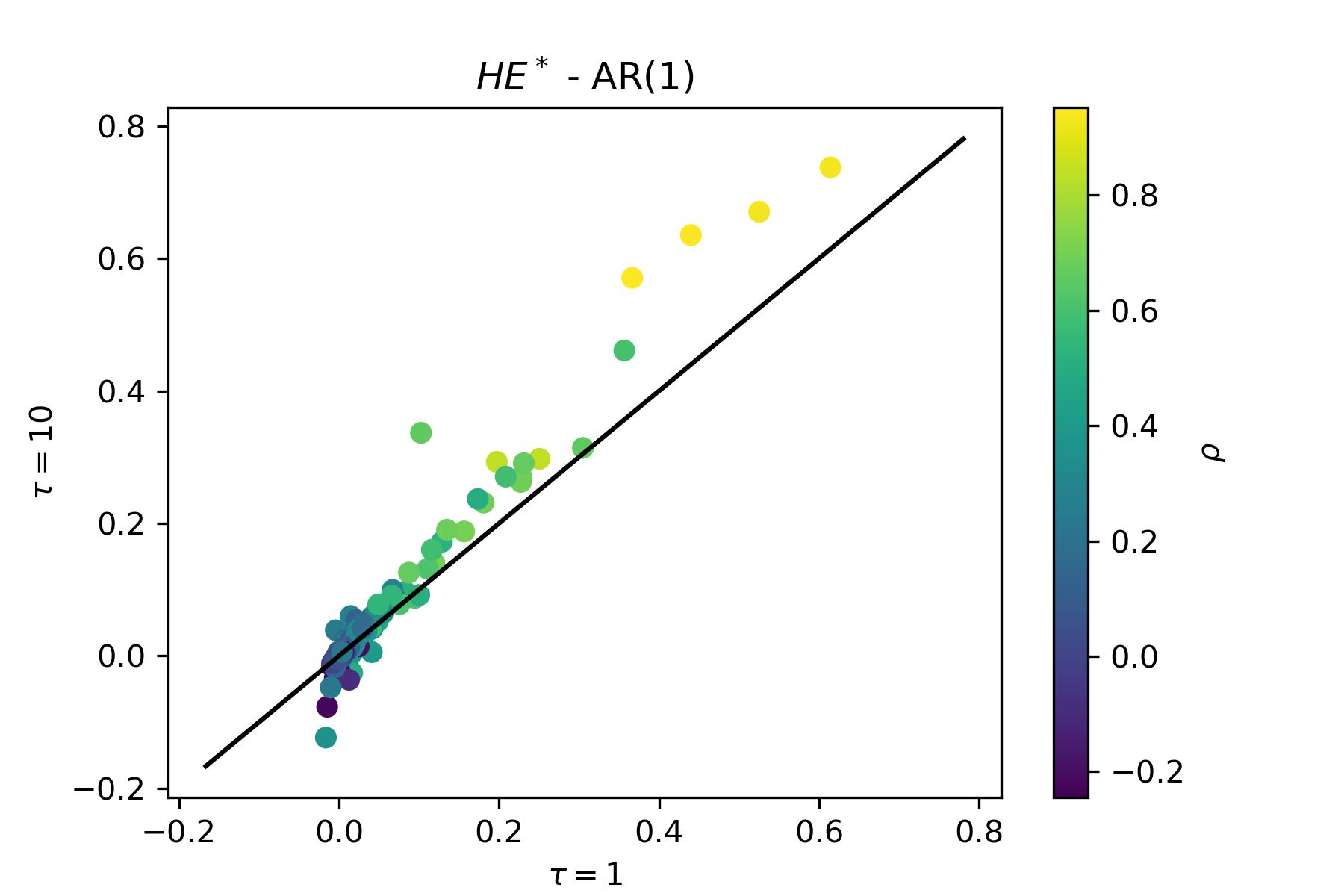}
\includegraphics[width=0.5\linewidth]{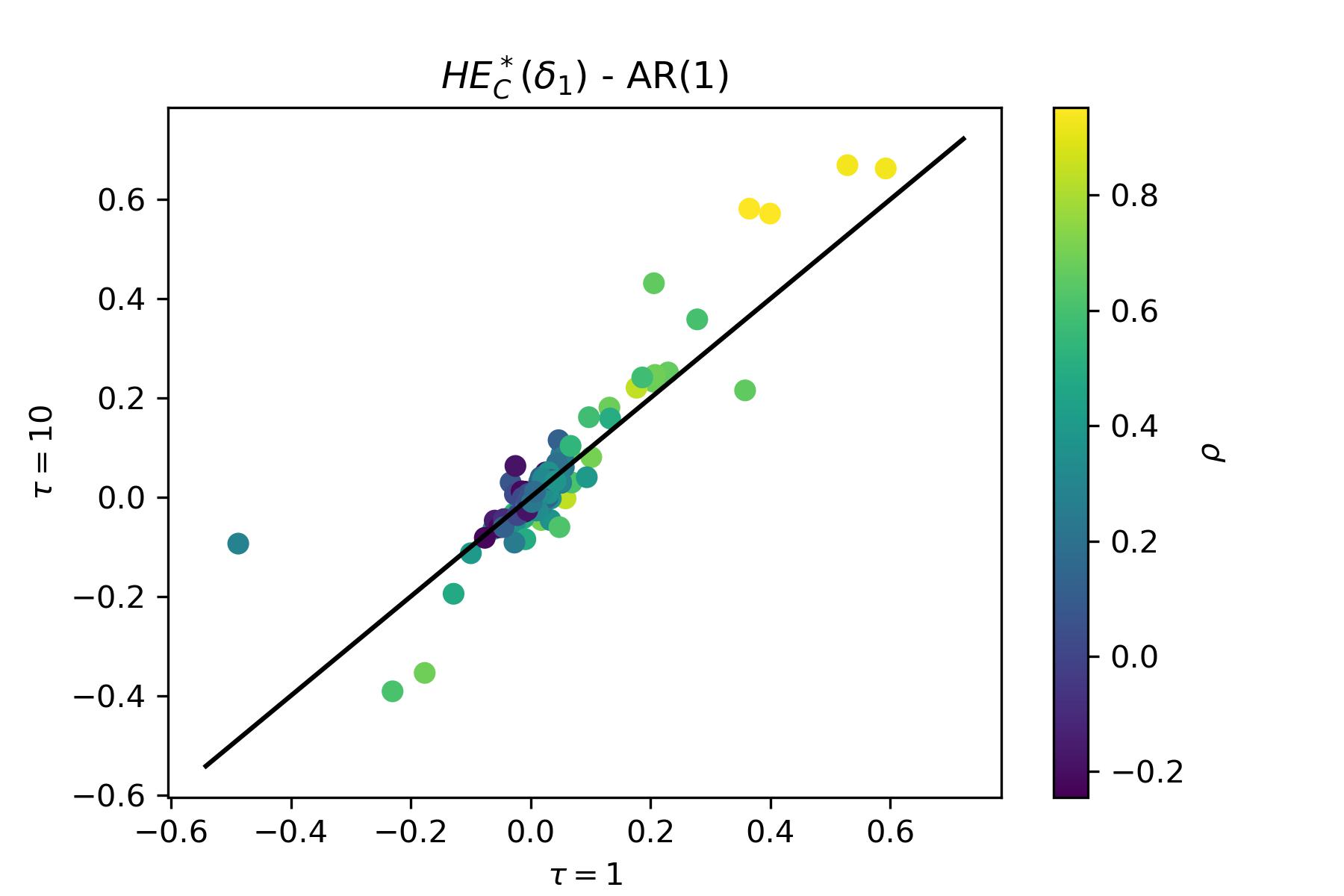} \includegraphics[width=0.5\linewidth]{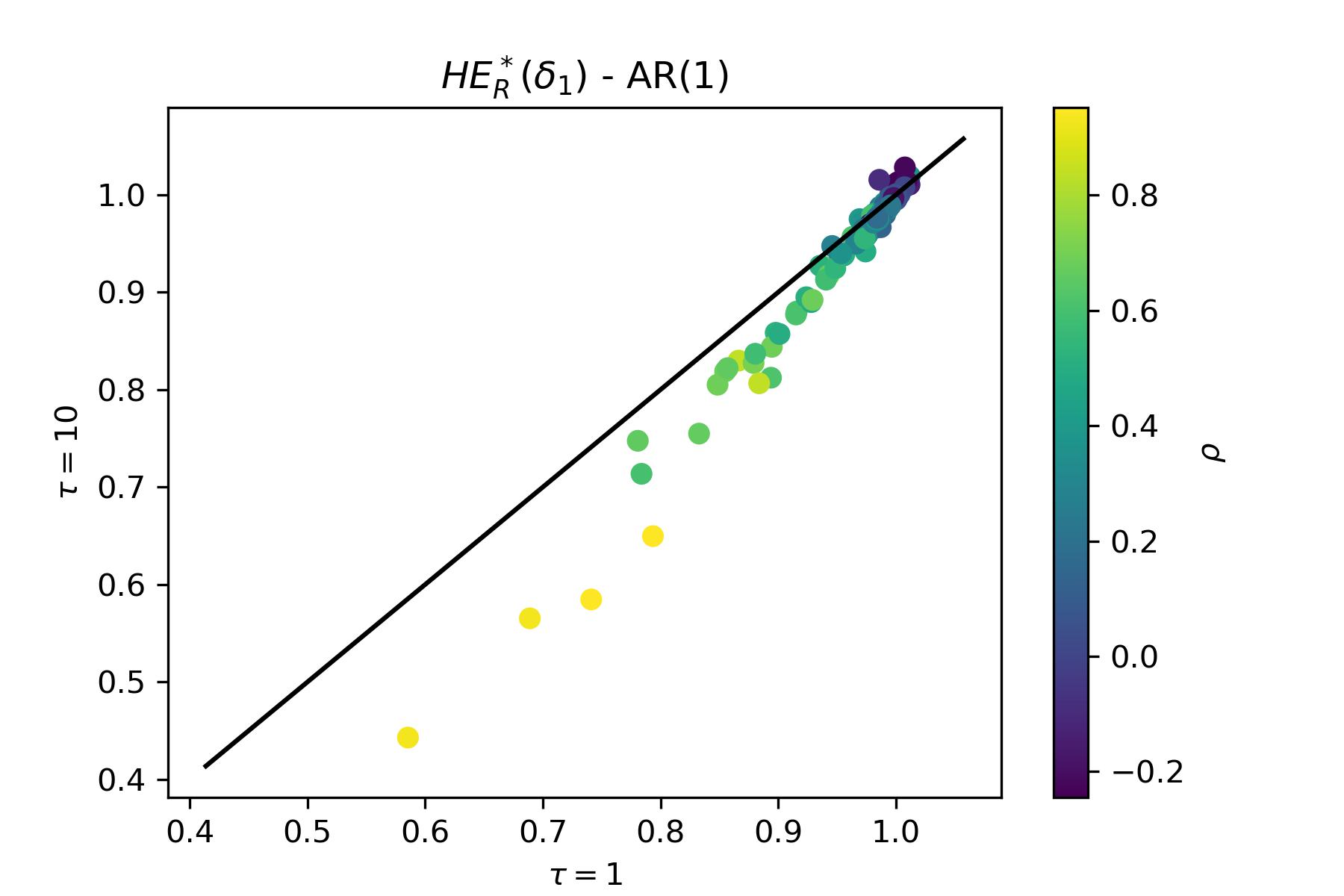}
    \caption{Comparison between the hedge effectiveness metrics of the robust methodology when realized variances and covariances are fitted by AR(1) processes, and two different prediction horizons are considered, i.e., $\tau = 1$ and $\tau = 10$. Each point is associated with a pair of instruments and its color is the return correlation $\rho$ of the pair. The threshold $\delta_1$ is the first quartile of the asset returns. The dark line corresponds to the bisector.}
    \label{fig:HEr_tau1_vs_tau10}
\end{figure}

\subsection{The role of memory}\label{subsec:AR1_vs_AR5}
A natural question is the impact of employing a longer memory in fitting the variances and covariances. 
In Fig. \ref{fig:stdHR_AR1_vs_AR5}, we compare the standard deviation of the hedge ratios obtained using AR(1) and AR(5) models for $\tau=1$ and $\tau = 10$. We observe that a larger order of the autoregressive process leads to higher values of $std(h_t)$ and $std(h_t^*)$, and the effect is stronger as the prediction horizon $\tau$ increases. Therefore, estimating the realized variances and covariances with higher precision would require more frequent portfolio rebalancing, without substantial differences in terms of the effectiveness metrics, as it is represented in Fig. \ref{fig:HEr_AR1_vs_AR5}. The reason at the root of the lower hedge ratio variability with AR(5) compared to AR(1) is precisely the improved forecasting performance that comes with a higher-order AR specification. In Fig. \ref{fig:DeltaIRF_AR1_vs_AR5}, we plot the following quantity for two instrument pairs in our dataset\footnote{Similar results are obtained for other pairs.}
\begin{equation}\label{eq_def_DeltaIRF}
    \Delta_{\psi_h} = \frac{\sigma_{SF, \infty} + \psi_h^{SF}}{\sigma^2_{F, \infty} + \psi_h^{F}} - \frac{\sigma_{SF, \infty}}{\sigma^2_{F, \infty}}
\end{equation}
where $\sigma_{\cdot, \infty}$ are the equilibrium values of the AR processes and $\psi_h$ the impulse response function. The metric $\Delta_{\psi_h} $ represents how the hedge ratio dynamically fluctuates around its equilibrium value in response to a shock. We observe that, for the AR(5), the response is more persistent, exhibits longer memory, and displays more bumps compared to the AR(1). This leads to higher variability of the covariance and variance predictions, and, as a consequence, of the hedge ratio. 

Finally, as a robustness check, in \ref{app:HAR}, we report the results that we obtain by describing realized variances and covariances with an HAR-type model \citep{corsi}. They are consistent with the findings outlined above.

\begin{figure}
    \includegraphics[width=0.5\linewidth]{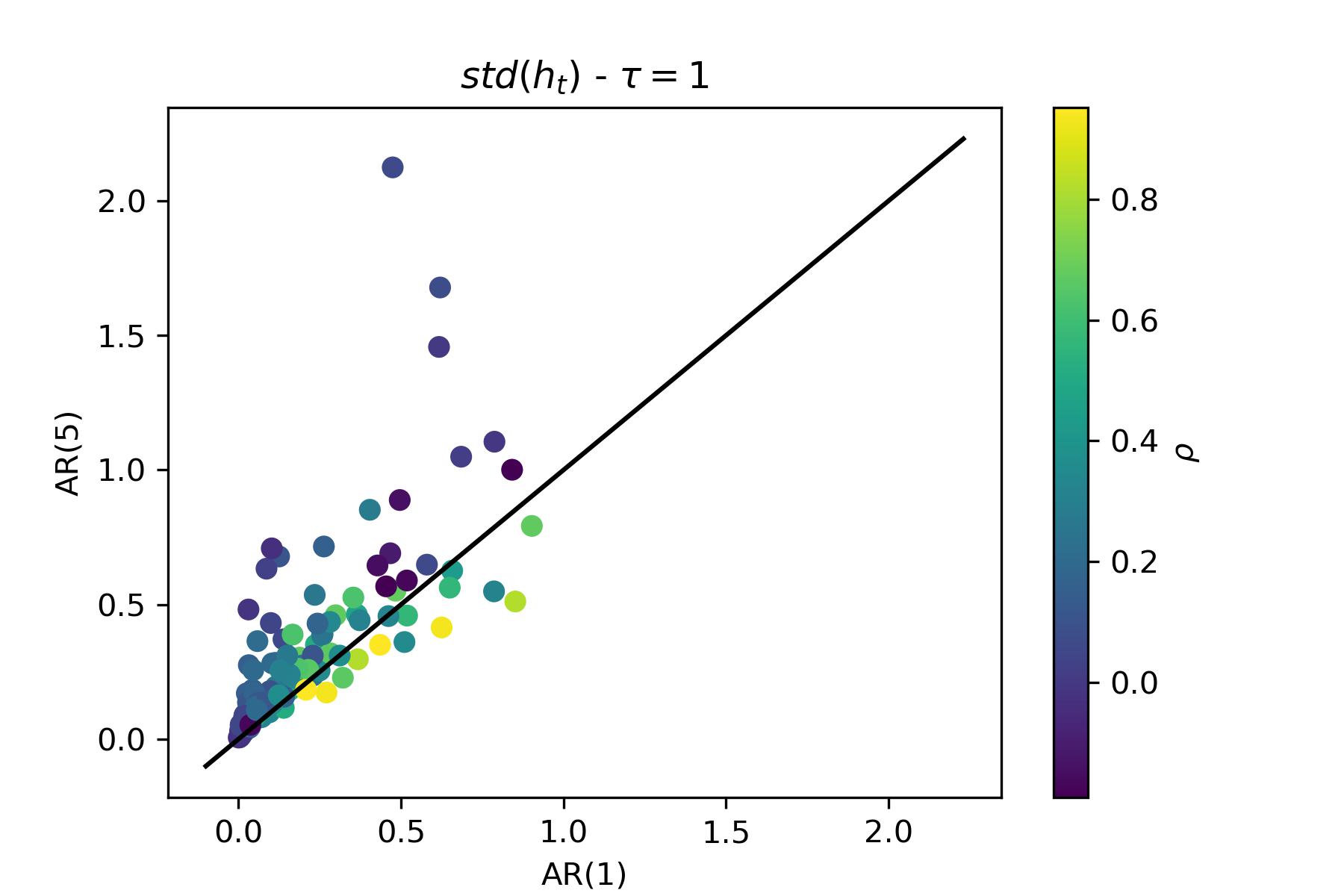}
    \includegraphics[width=0.5\linewidth]{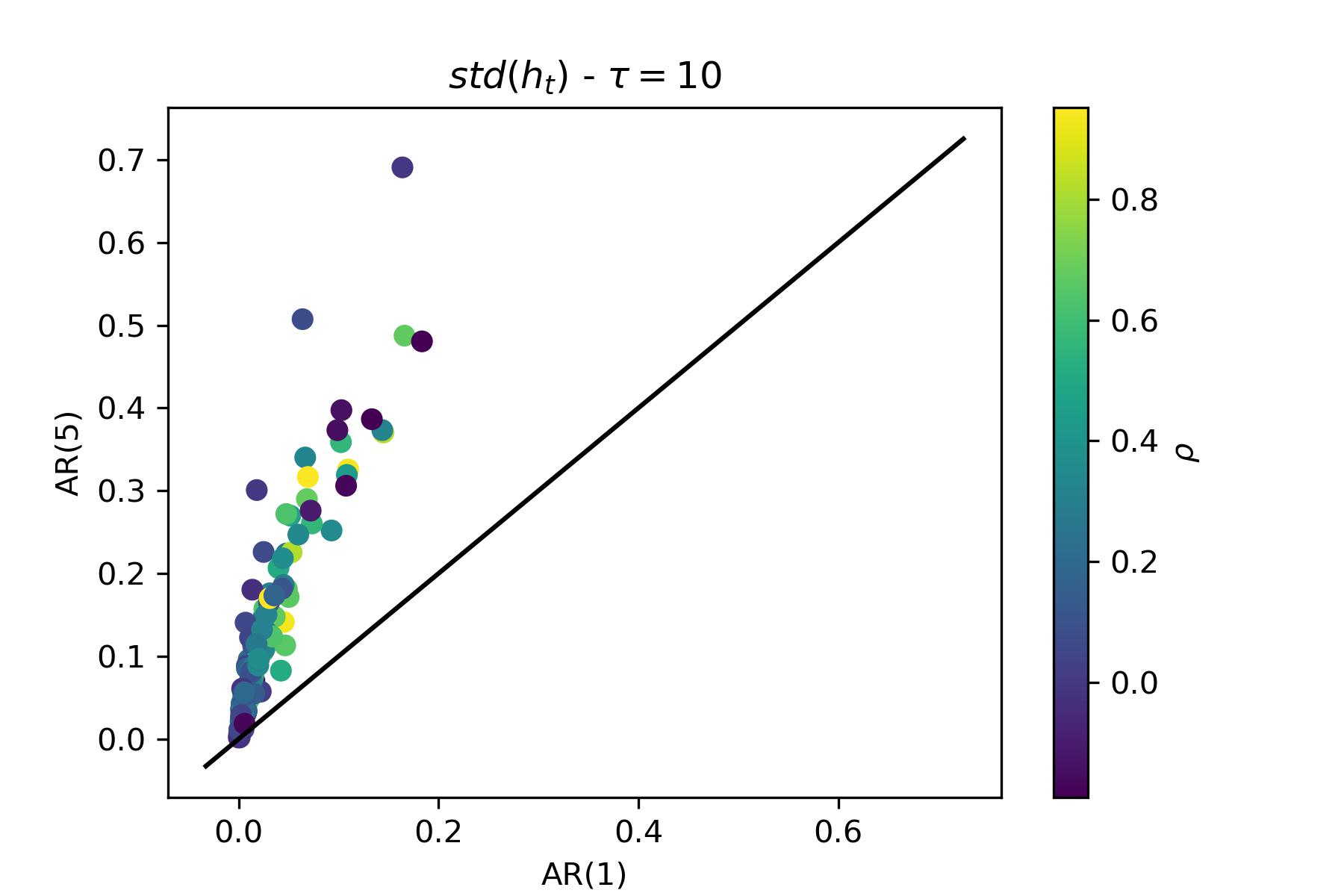}
    \includegraphics[width=0.5\linewidth]{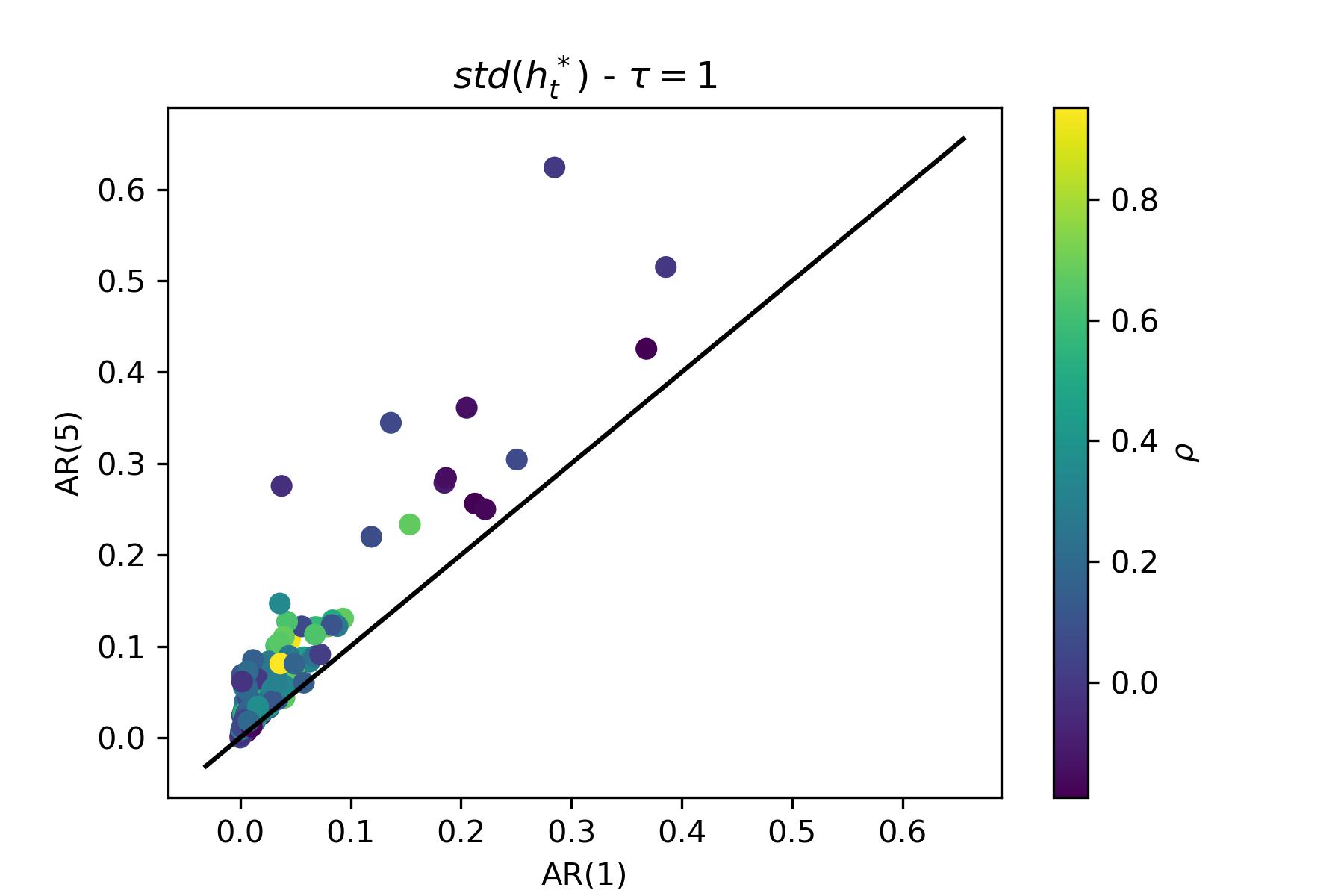}
    \includegraphics[width=0.5\linewidth]{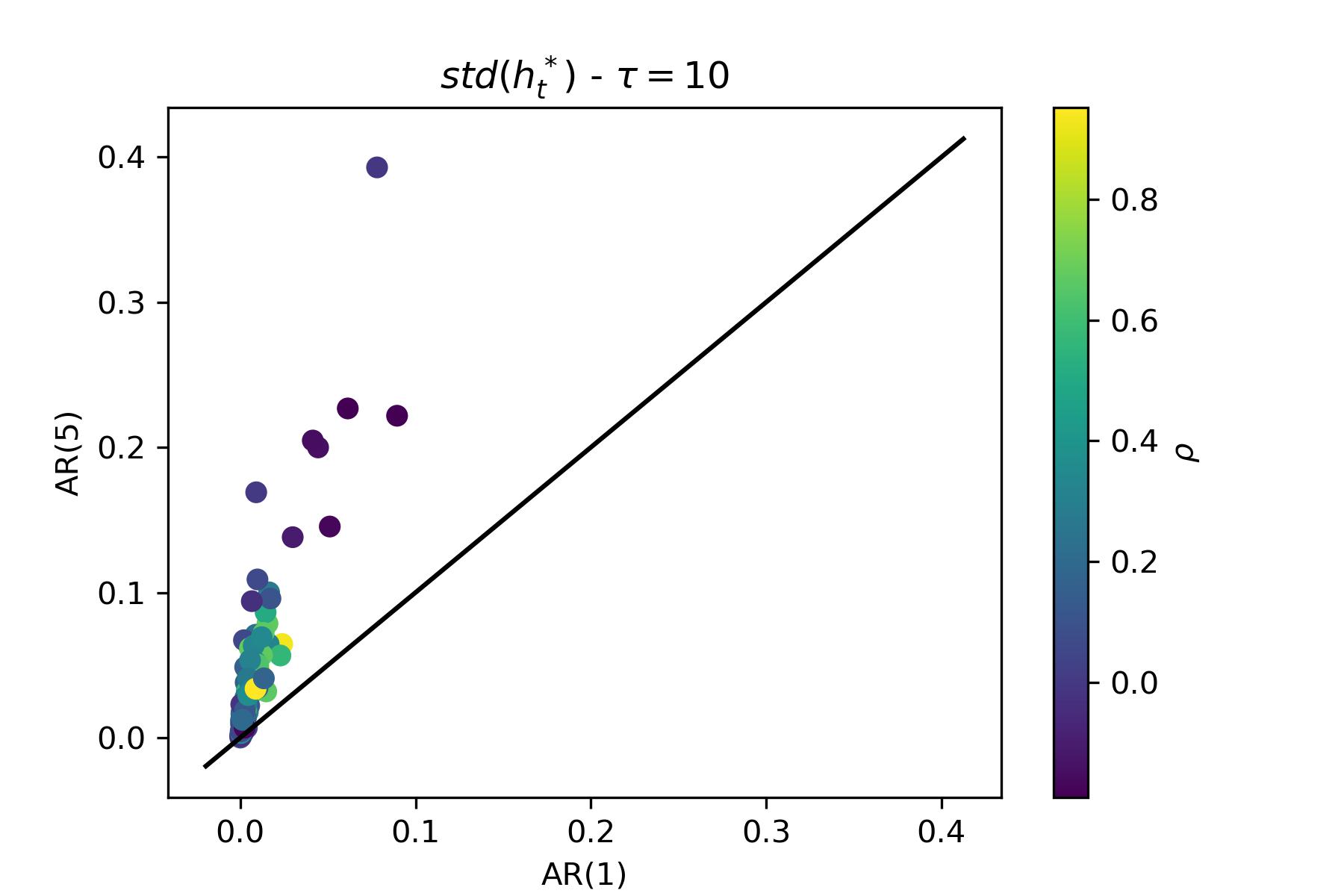}
    \caption{Comparison between the standard deviations of the hedge ratios that are obtained when realized variances and covariances are fitted by AR(1) and AR(5) processes. Two different prediction horizons are considered, i.e., $\tau = 1$ and $\tau = 10$. Each point is associated with a pair of instruments and its color is the return correlation $\rho$ of the pair. The dark line corresponds to the bisector.}
    \label{fig:stdHR_AR1_vs_AR5}
\end{figure}

\begin{figure}
\includegraphics[width=0.5\linewidth]{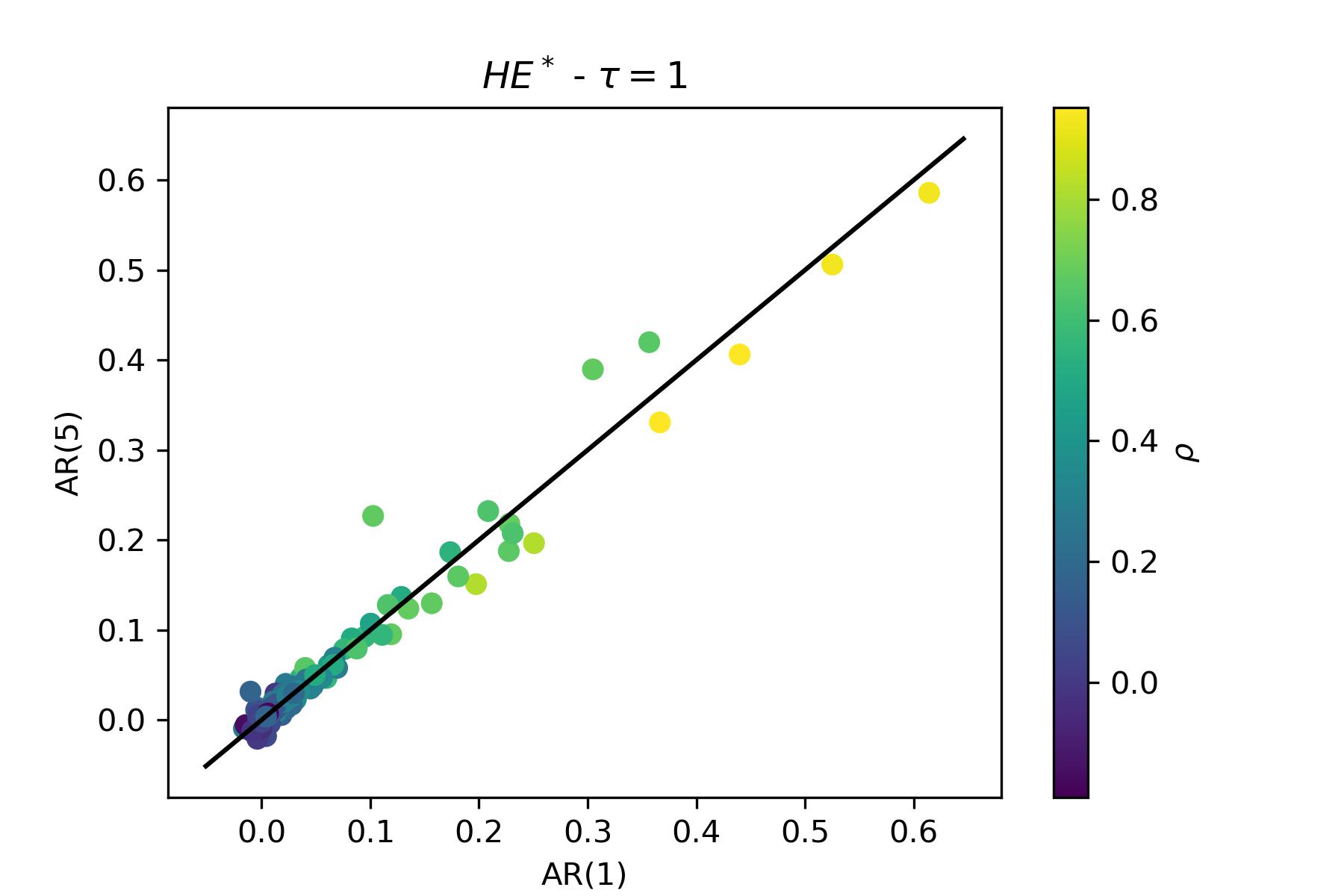}
\includegraphics[width=0.5\linewidth]{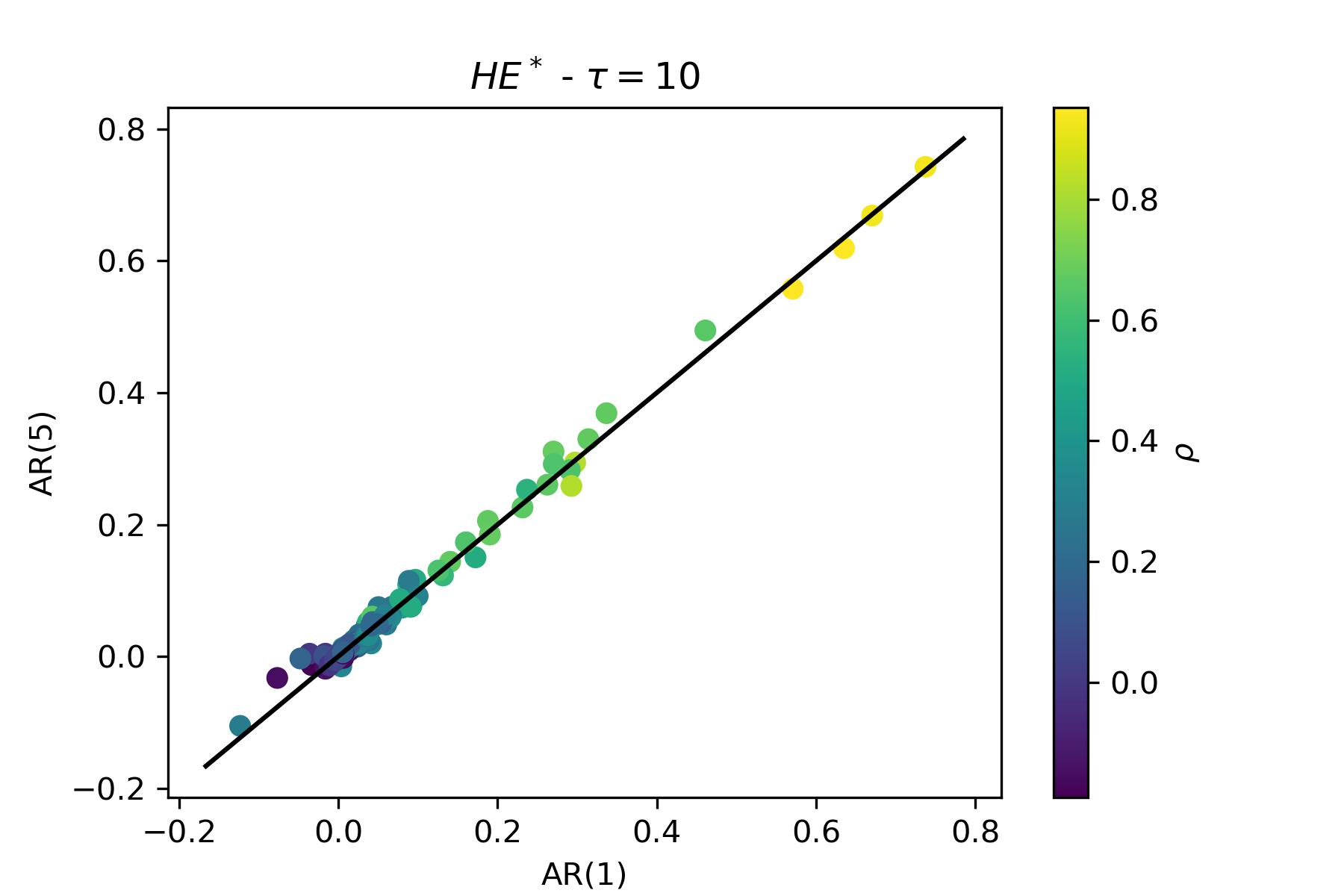}
\includegraphics[width=0.5\linewidth]{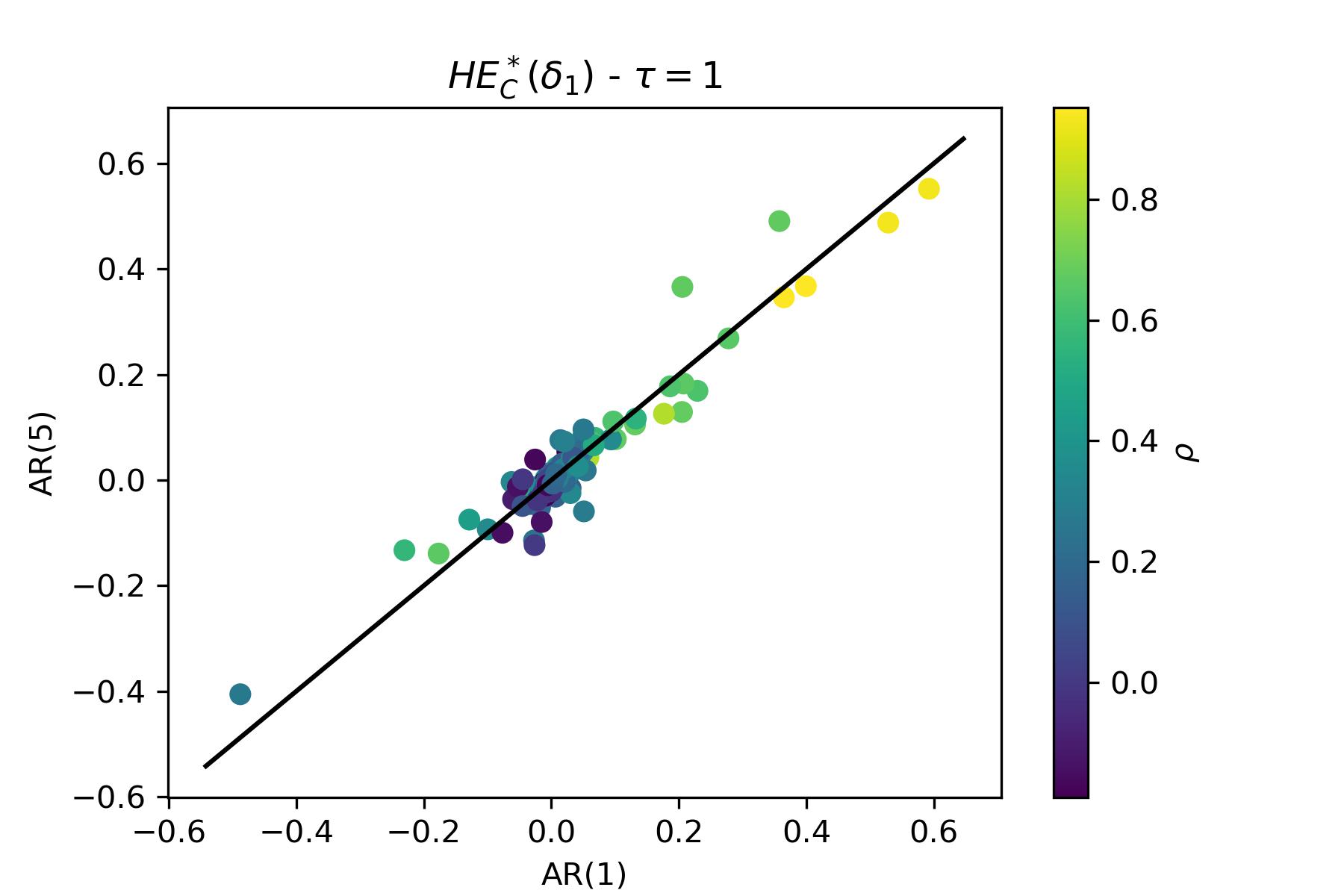}
\includegraphics[width=0.5\linewidth]{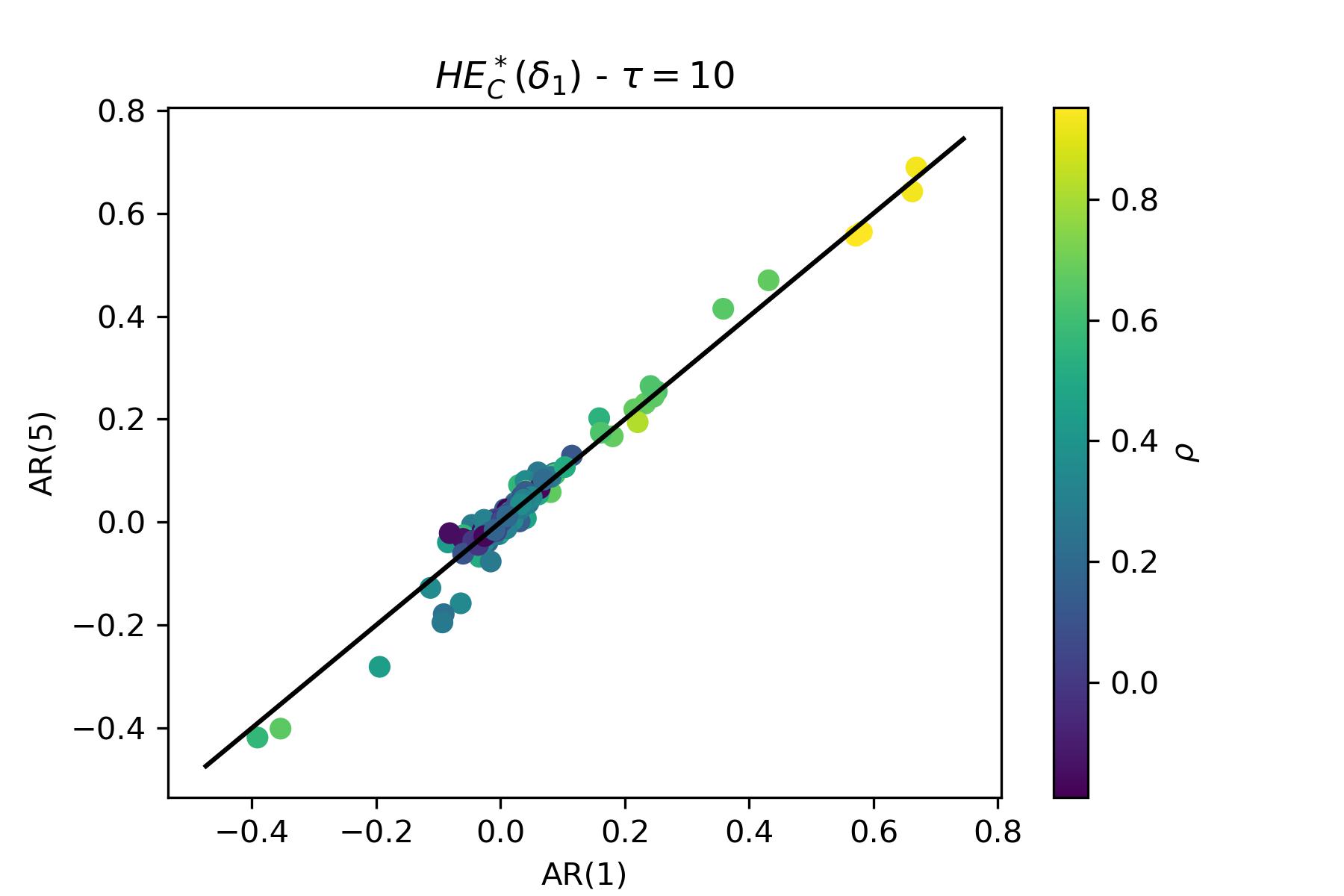}
\includegraphics[width=0.5\linewidth]{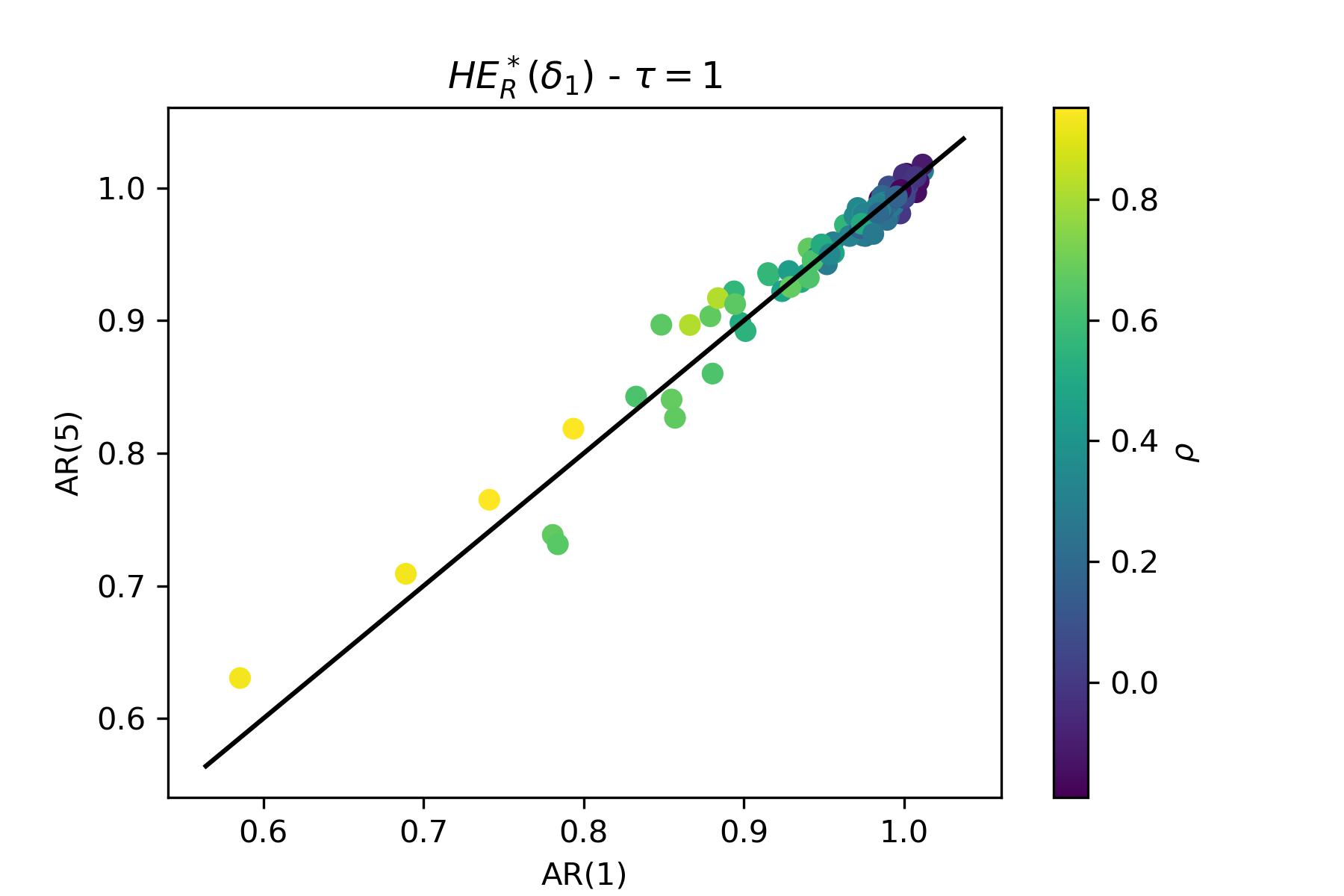}
\includegraphics[width=0.5\linewidth]{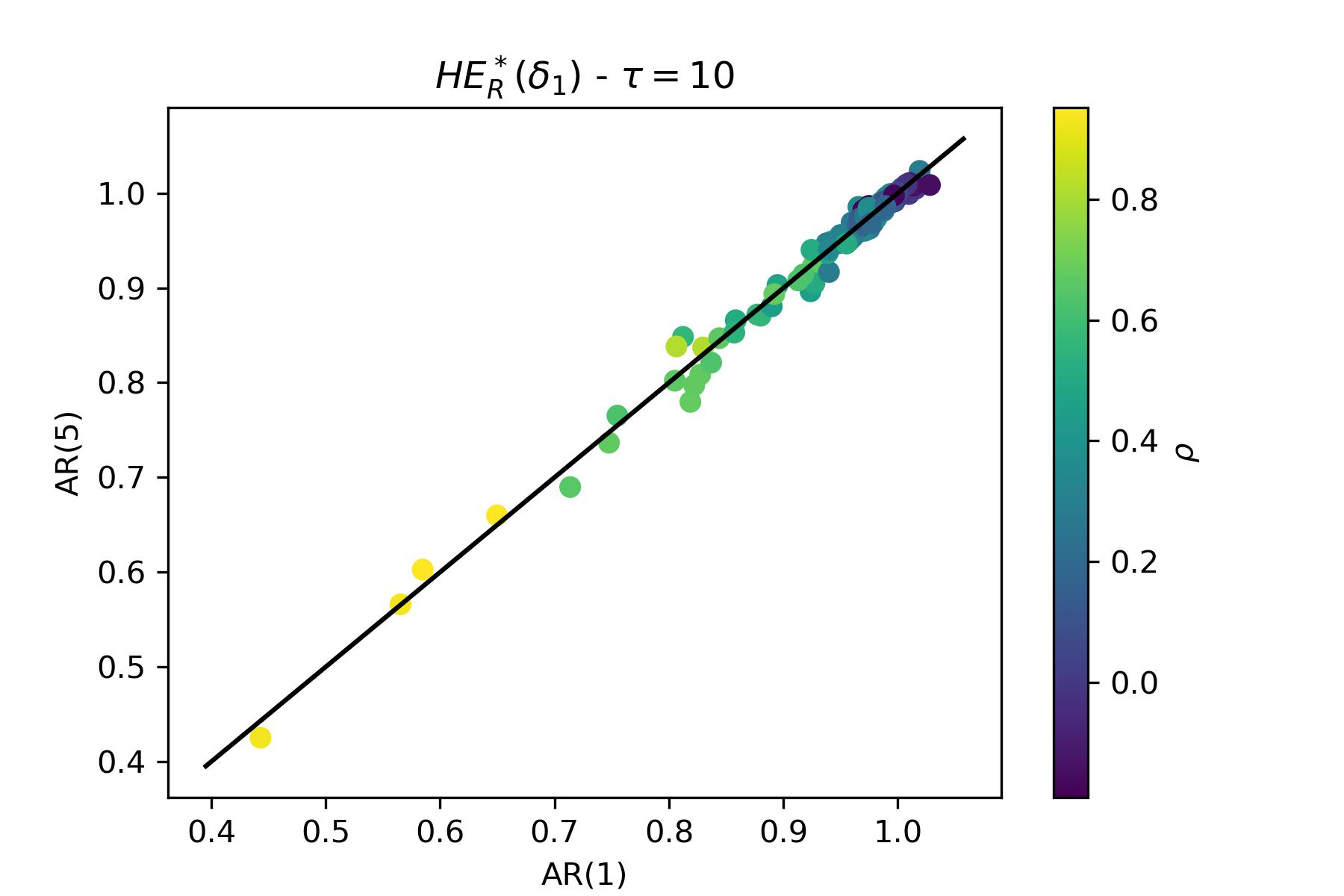}
    \caption{Comparison between the hedge effectiveness metrics of the robust methodology when realized variances and covariances are fitted by AR(1) and AR(5) processes, and two different prediction horizons are considered, i.e., $\tau = 1$ and $\tau = 10$. Each point is associated with a pair of instruments and its color is the return correlation $\rho$ of the pair. The threshold $\delta_1$ is the first quartile of the asset returns. The dark line corresponds to the bisector.}
    \label{fig:HEr_AR1_vs_AR5}
\end{figure}

\begin{figure}
\includegraphics[width=0.5\linewidth]{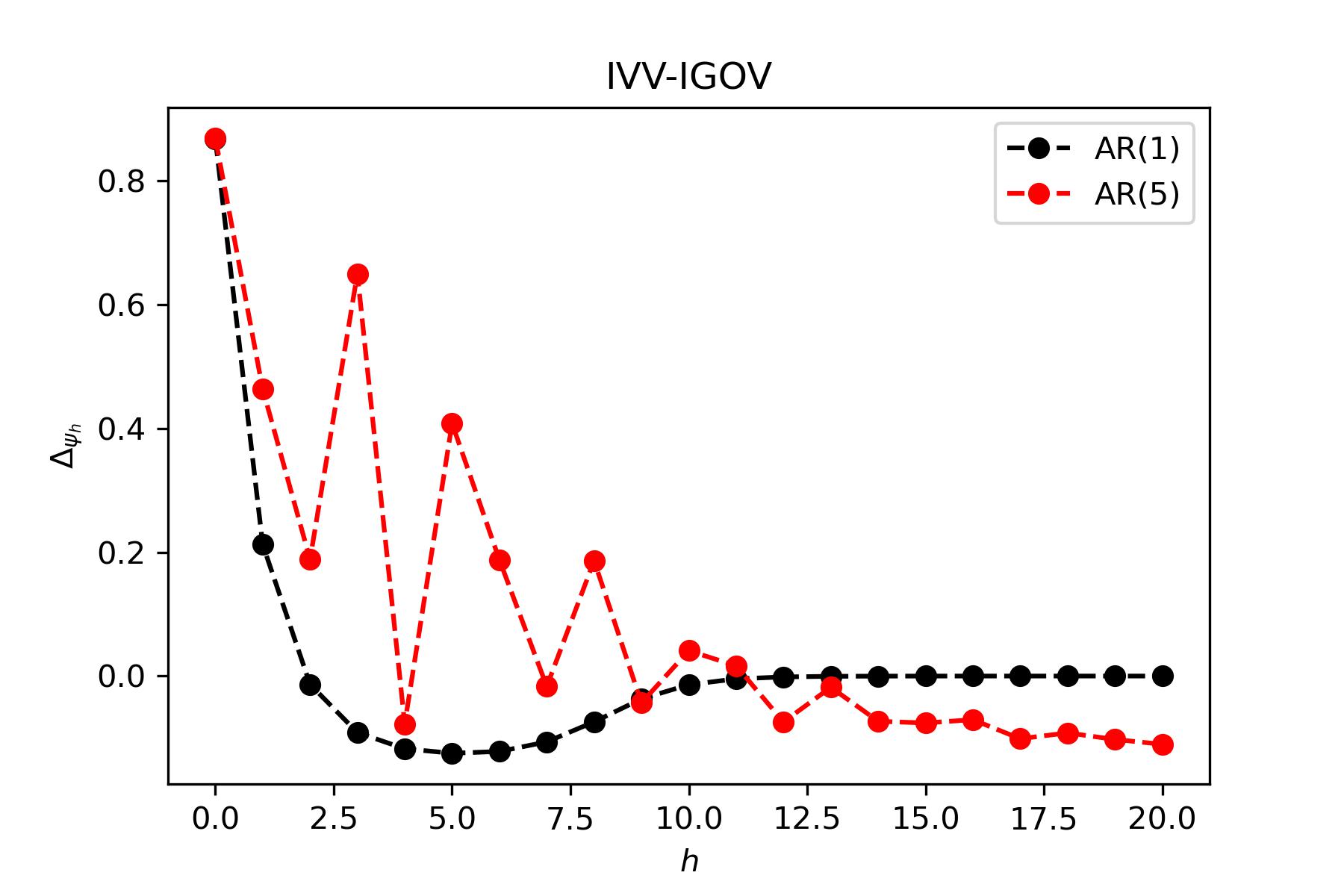}
\includegraphics[width=0.5\linewidth]{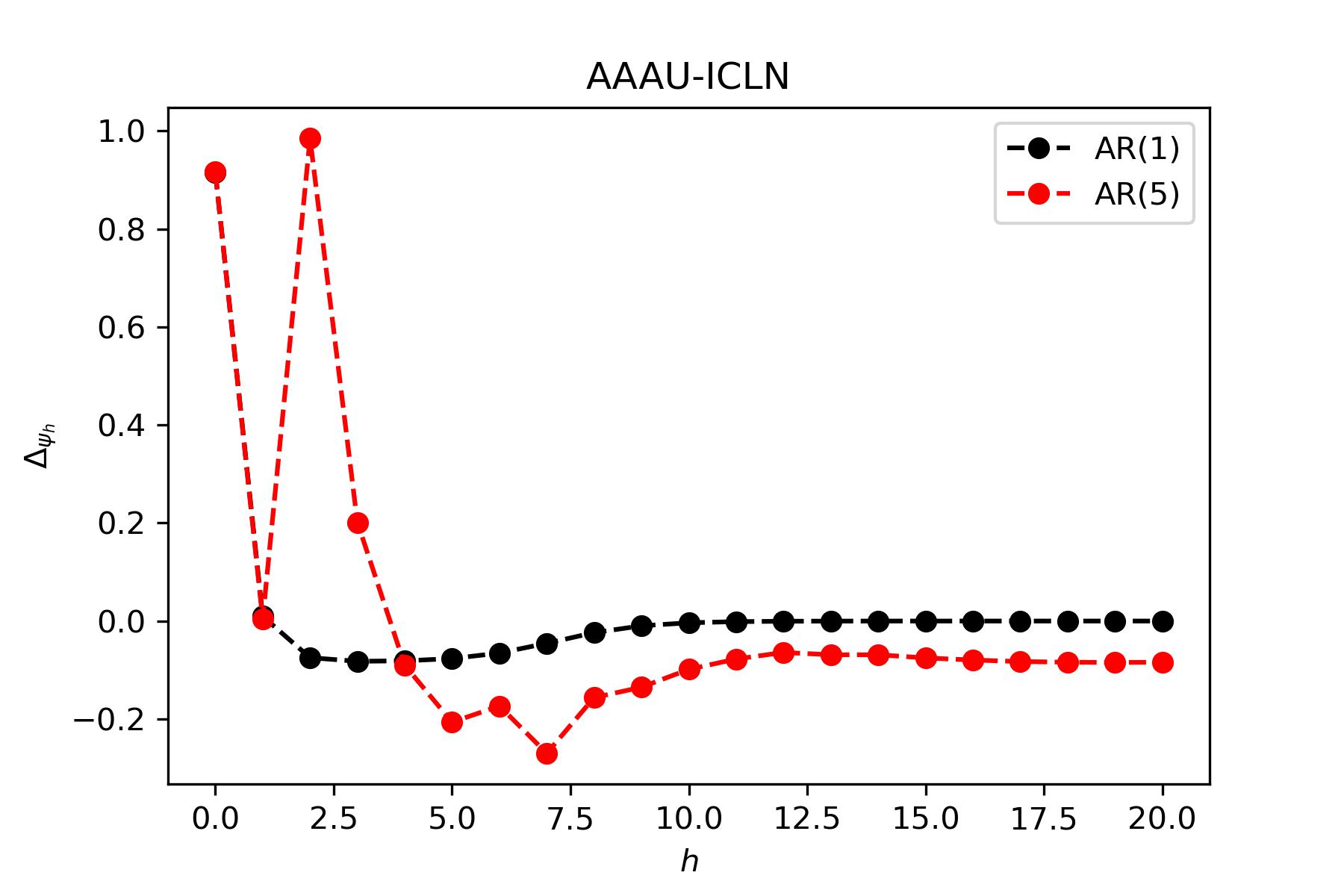}
    \caption{Quantity defined in Eq. \eqref{eq_def_DeltaIRF} as a function of the forecast horizon. Two different instrument pairs are considered: in the title the first label refers to the asset and the second to the hedging instrument.}
    \label{fig:DeltaIRF_AR1_vs_AR5}
\end{figure}

\subsection{Performance and risk-adjusted metrics}
\label{performance}

We study here the role of robustness in terms of standard financial performance and risk-adjusted metrics, with a focus on transaction costs. 
Transaction costs are defined as

\begin{equation}\label{eq_transaction_costs}
TC_t = \lvert \Delta h_t \rvert \times \text{bp},
\end{equation}

where $\Delta h_t = h_t - h_{t-1}$ represents the daily variation of either the robust or the standard hedge ratio, and \textit{bp} denotes the transaction cost per unit of hedge ratio turnover in basis points. Following previous literature (see, e.g., \citealt{avellaneda2010statistical,fischer2018deep,flori2021revealing}), we consider a unit transaction cost of 5 or 10 basis points.

Specifically, we take into account the following portfolio metrics. The profit and loss (P\&L) measures the total return or loss generated by the strategy; the Sharpe ratio (SR) is a measure of the risk-adjusted return of an investment, defined as the excess return per unit of risk \citep{sharpe1998sharpe}; the Omega ratio ($\Omega$) compares probability-weighted gains and losses relative to a given benchmark, offering an integrated perspective on the overall portfolio performance \citep{keating2002universal}; the maximum drawdown (DD) captures the largest observed decline from a portfolio’s peak to its subsequent trough; the Value at Risk (VaR) represents the loss threshold not expected to be exceeded with 95\% confidence over the specified horizon \citep{linsmeier2000value}; the Expected Shortfall (ES) quantifies the average loss conditional on exceeding this threshold, thus providing a more comprehensive assessment of tail risk \citep{acerbi2002coherence}.

We summarize these performances over the full sample period in Fig.~\ref{fig: transaction_costs}-\ref{fig: transaction_costs2}. The scatter plots compare the outcomes obtained using standard versus robust hedge ratios, indicating also the asset class of each hedged asset. In addition, Tables \ref{tab: transaction_costs_aaau_no_return}–\ref{tab:transaction_costs_ung} in \ref{app: performance} provide a detailed breakdown of the scatter-plot results by reporting, for each hedged asset, the difference between the metrics based on robust and standard hedge ratios.

\begin{figure}[ht!]
\centering
\raggedright \hspace{0.05cm} {\scriptsize A}  \hspace{3.75cm} {\scriptsize B} \hspace{5.75cm} {\scriptsize C}\\
{\includegraphics[width=.3\textwidth,height=3.5cm]{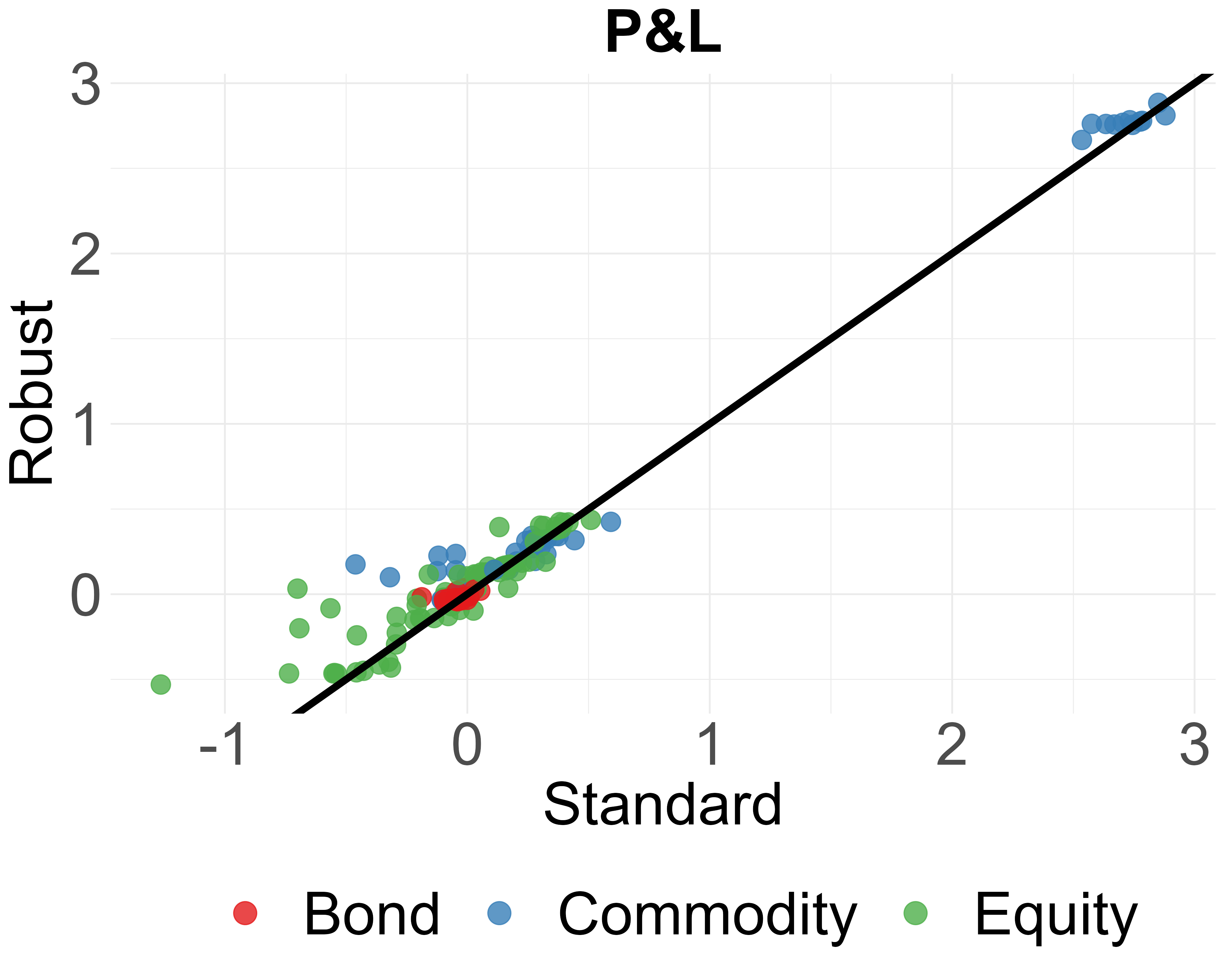}} \quad
{\includegraphics[width=.3\textwidth,height=3.5cm]{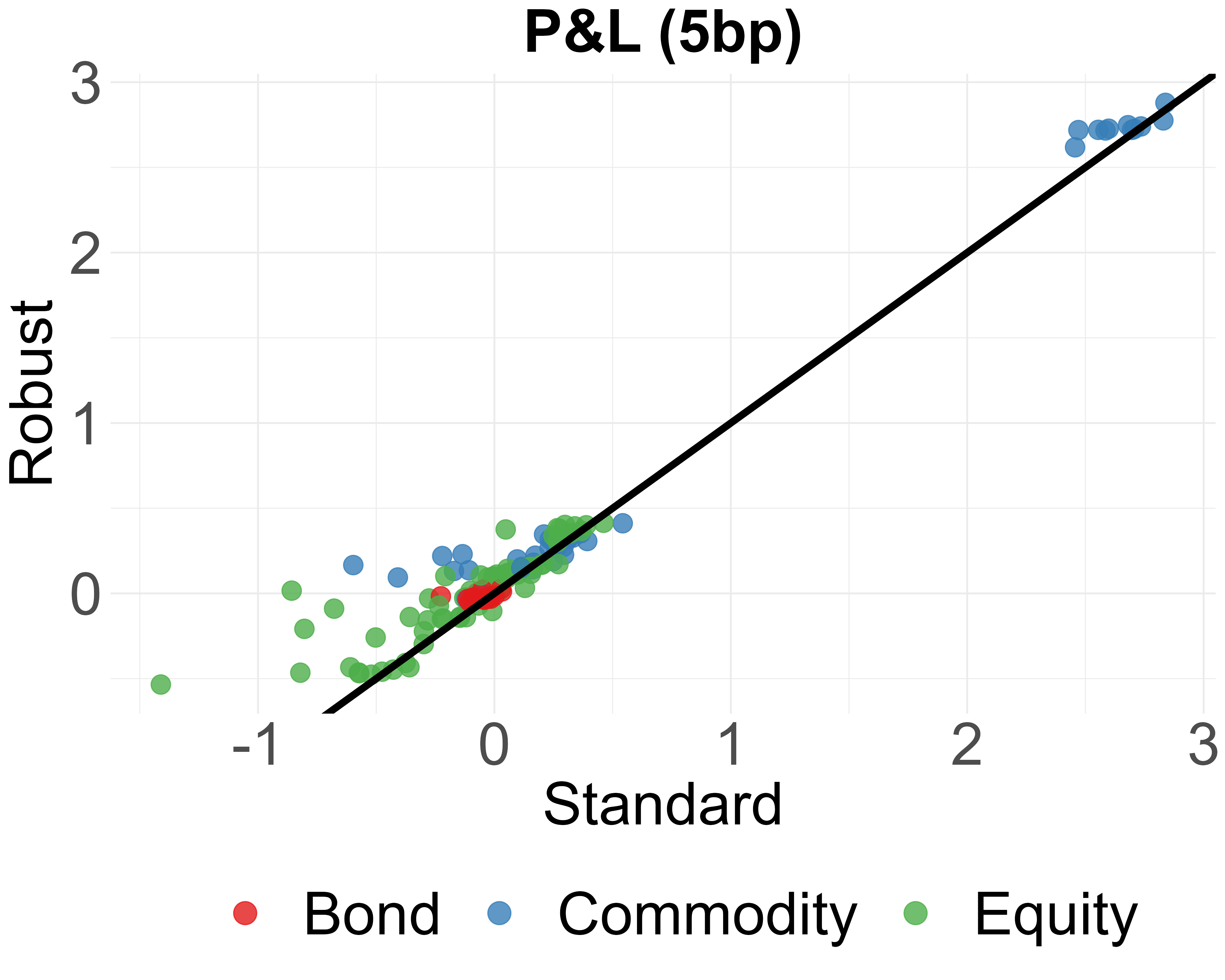}} \quad
{\includegraphics[width=.3\textwidth,height=3.5cm]{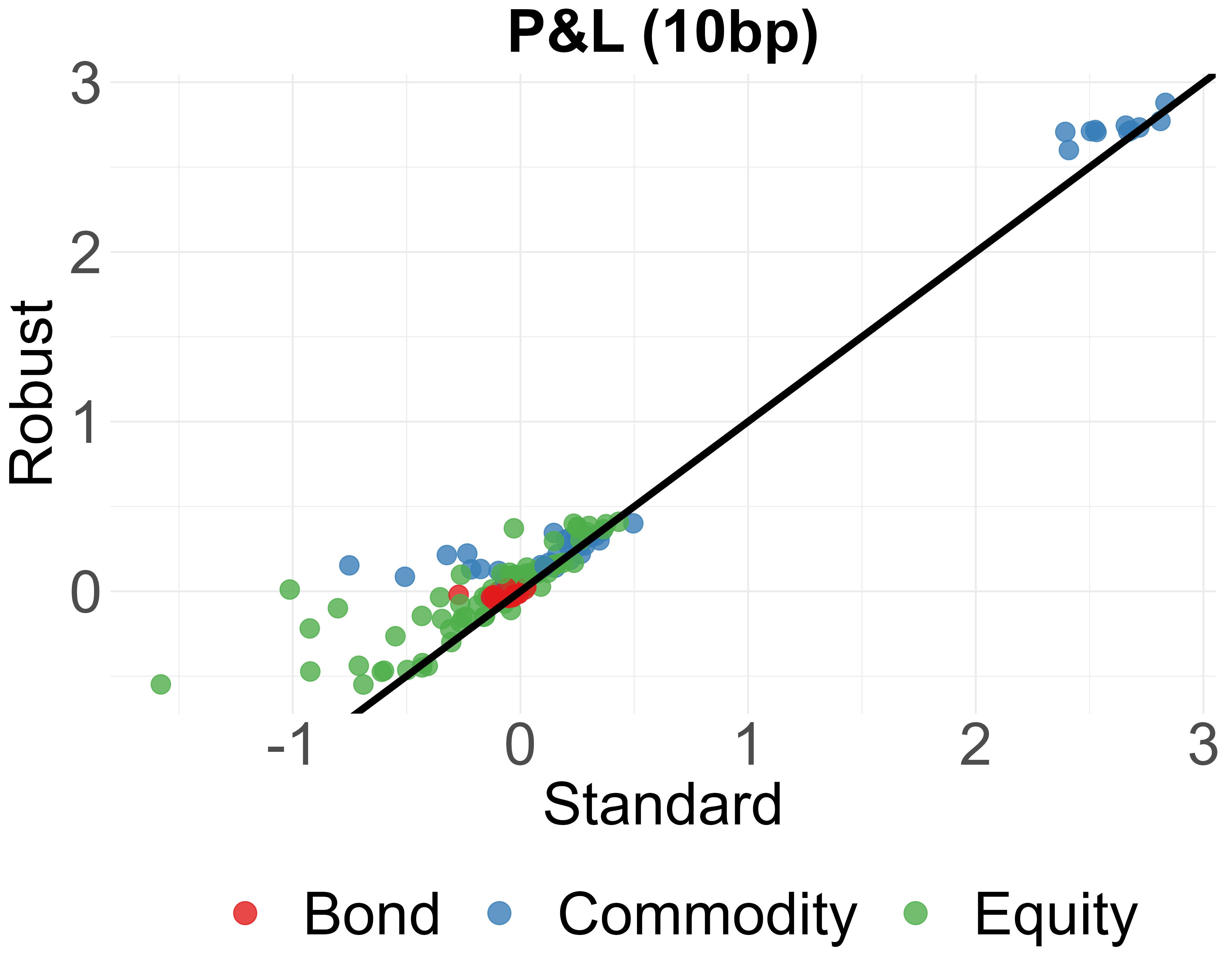}} \quad
\raggedright \hspace{0.05cm} {\scriptsize D}  \hspace{3.75cm} {\scriptsize E} \hspace{5.75cm} {\scriptsize F}\\
{\includegraphics[width=.3\textwidth,height=3.5cm]{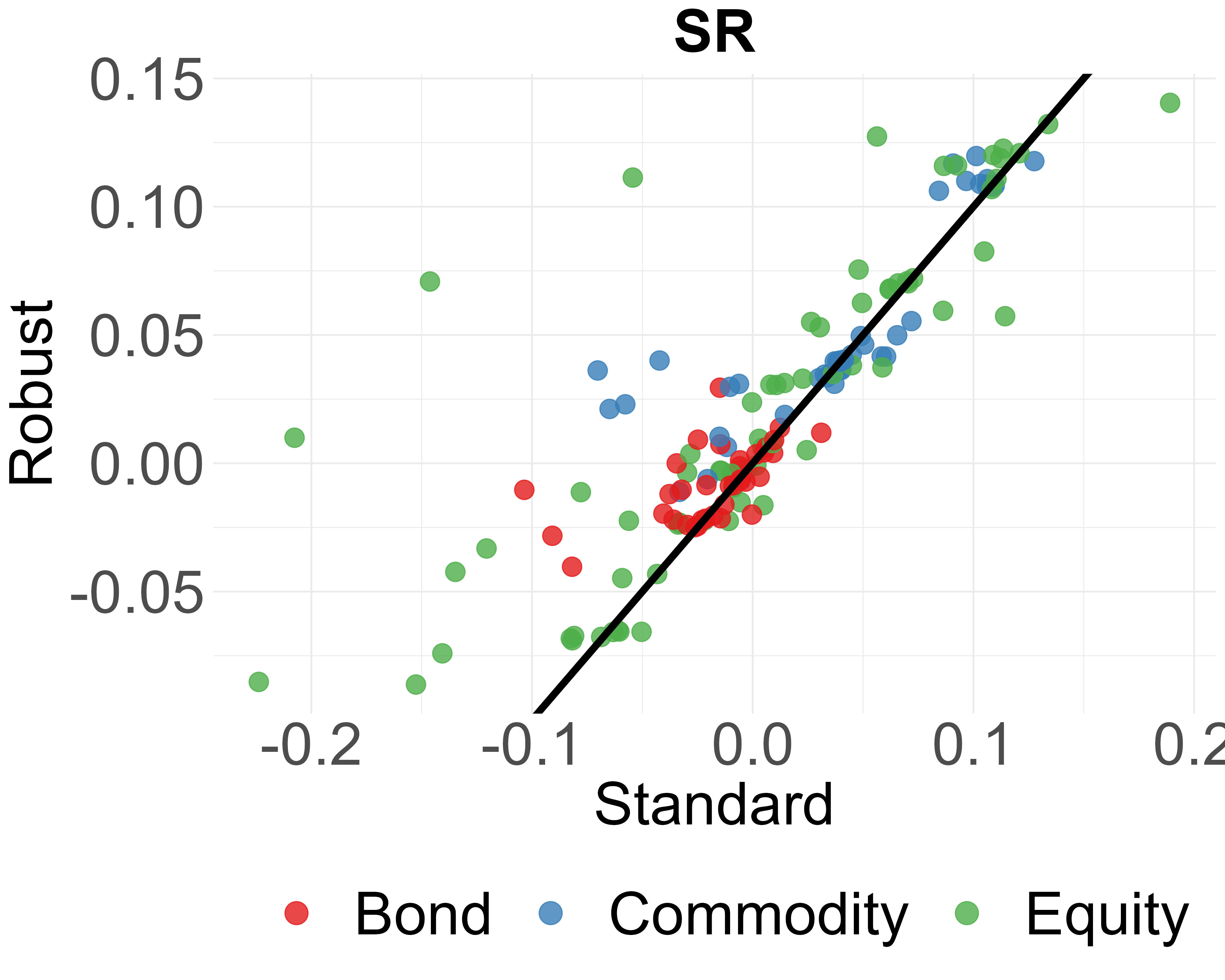}} \quad
{\includegraphics[width=.3\textwidth,height=3.5cm]{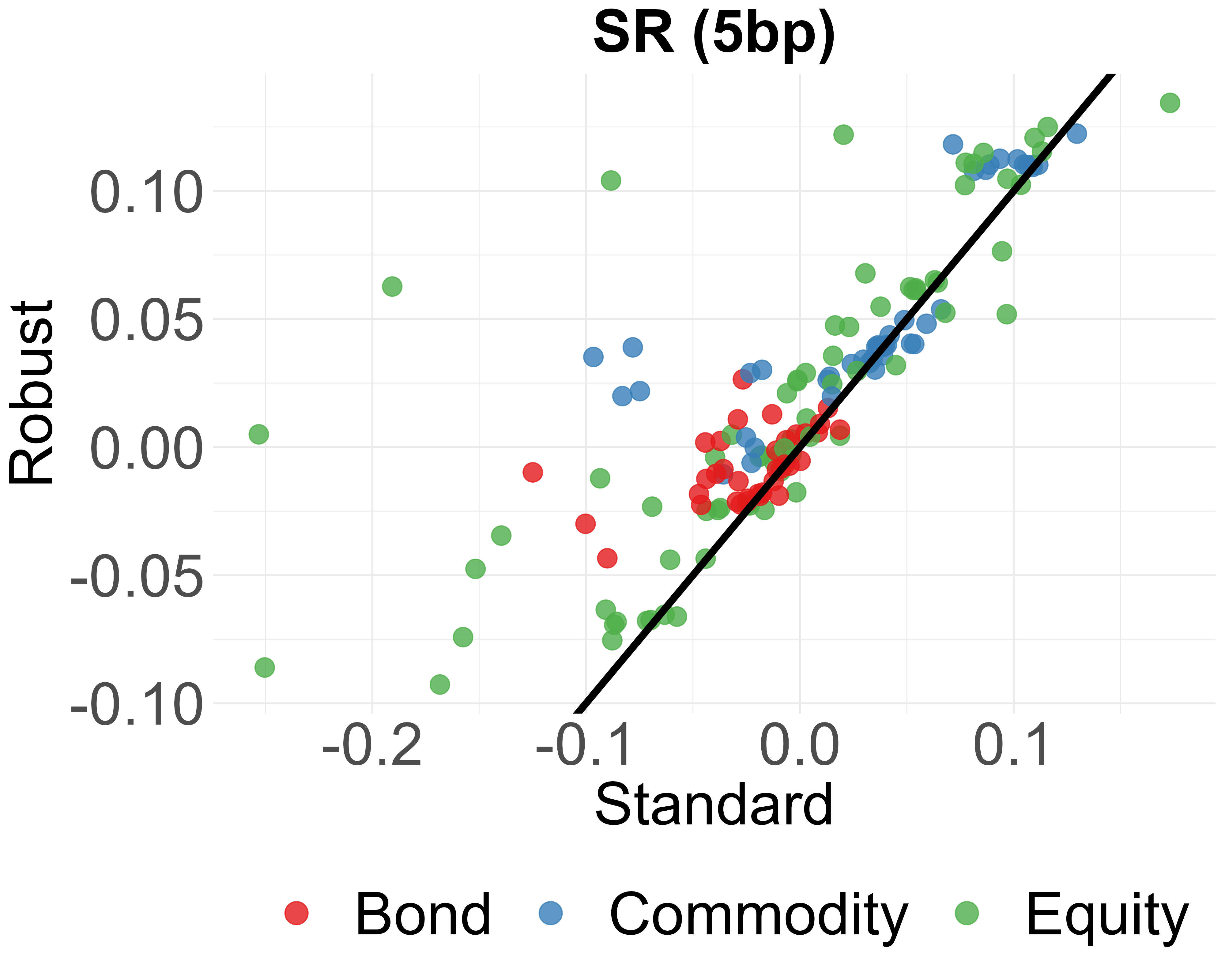}} \quad
{\includegraphics[width=.3\textwidth,height=3.5cm]{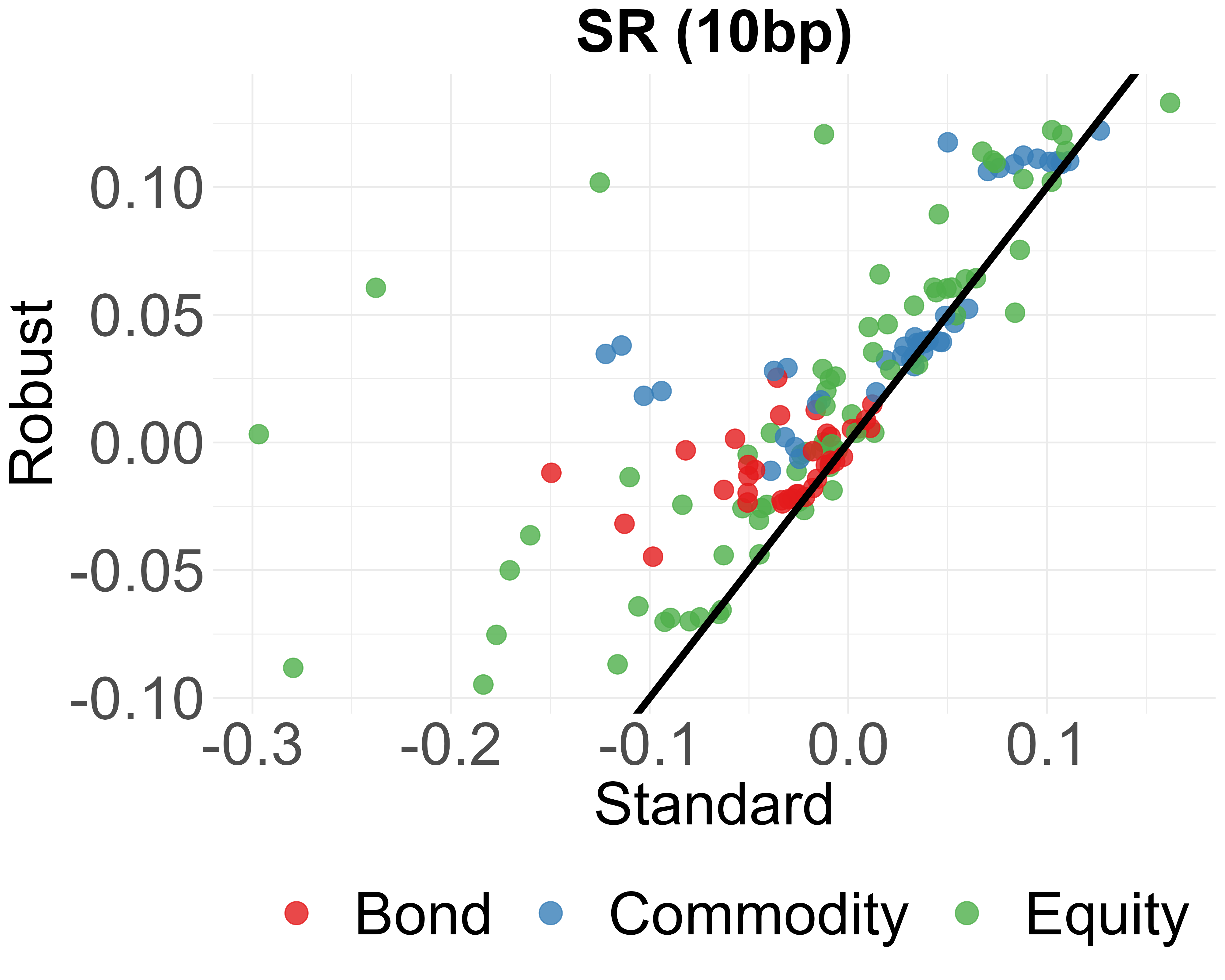}} \quad
\raggedright \hspace{0.05cm} {\scriptsize G}  \hspace{3.75cm} {\scriptsize H} \hspace{5.75cm} {\scriptsize I}\\
{\includegraphics[width=.3\textwidth,height=3.5cm]{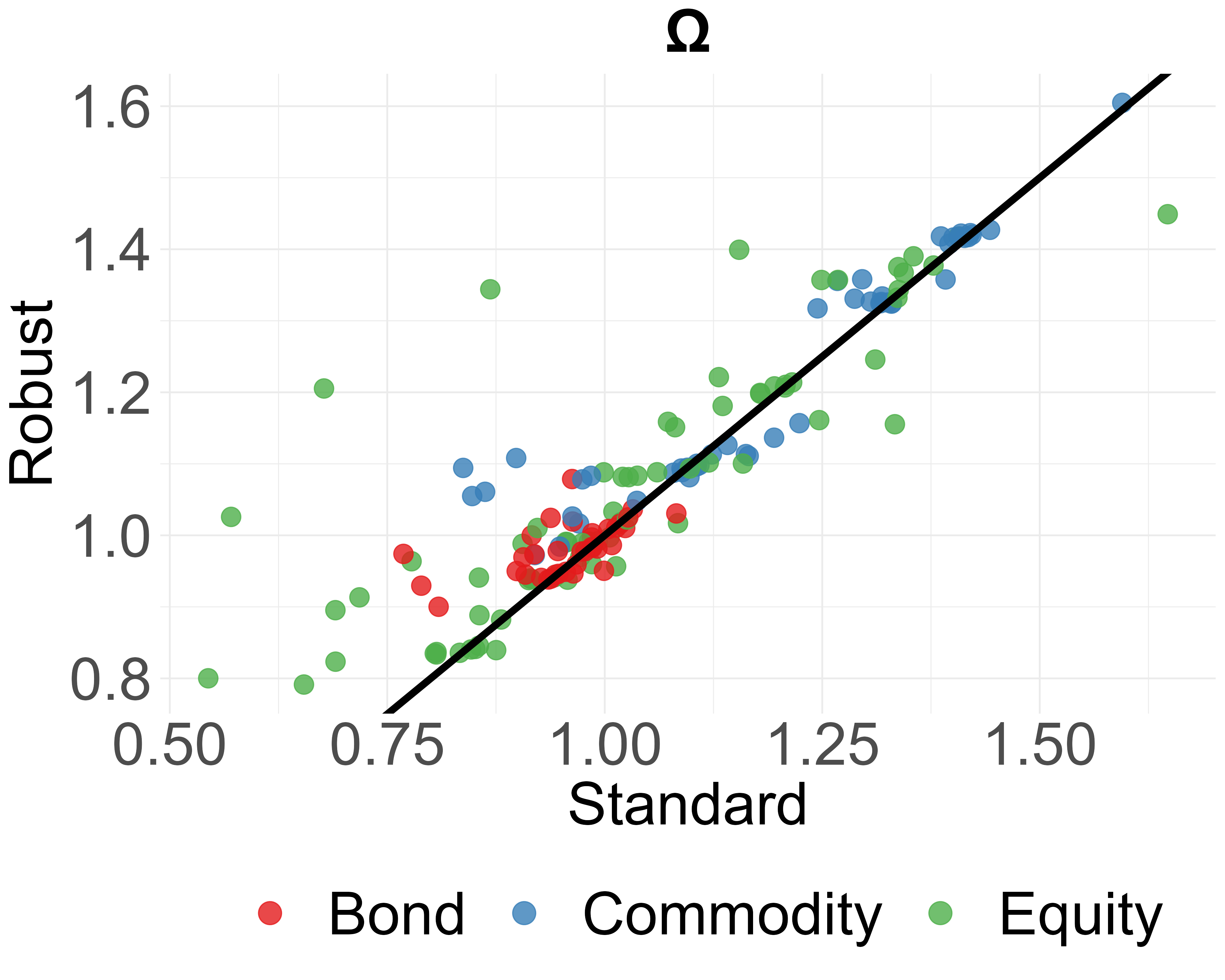}} \quad
{\includegraphics[width=.3\textwidth,height=3.5cm]{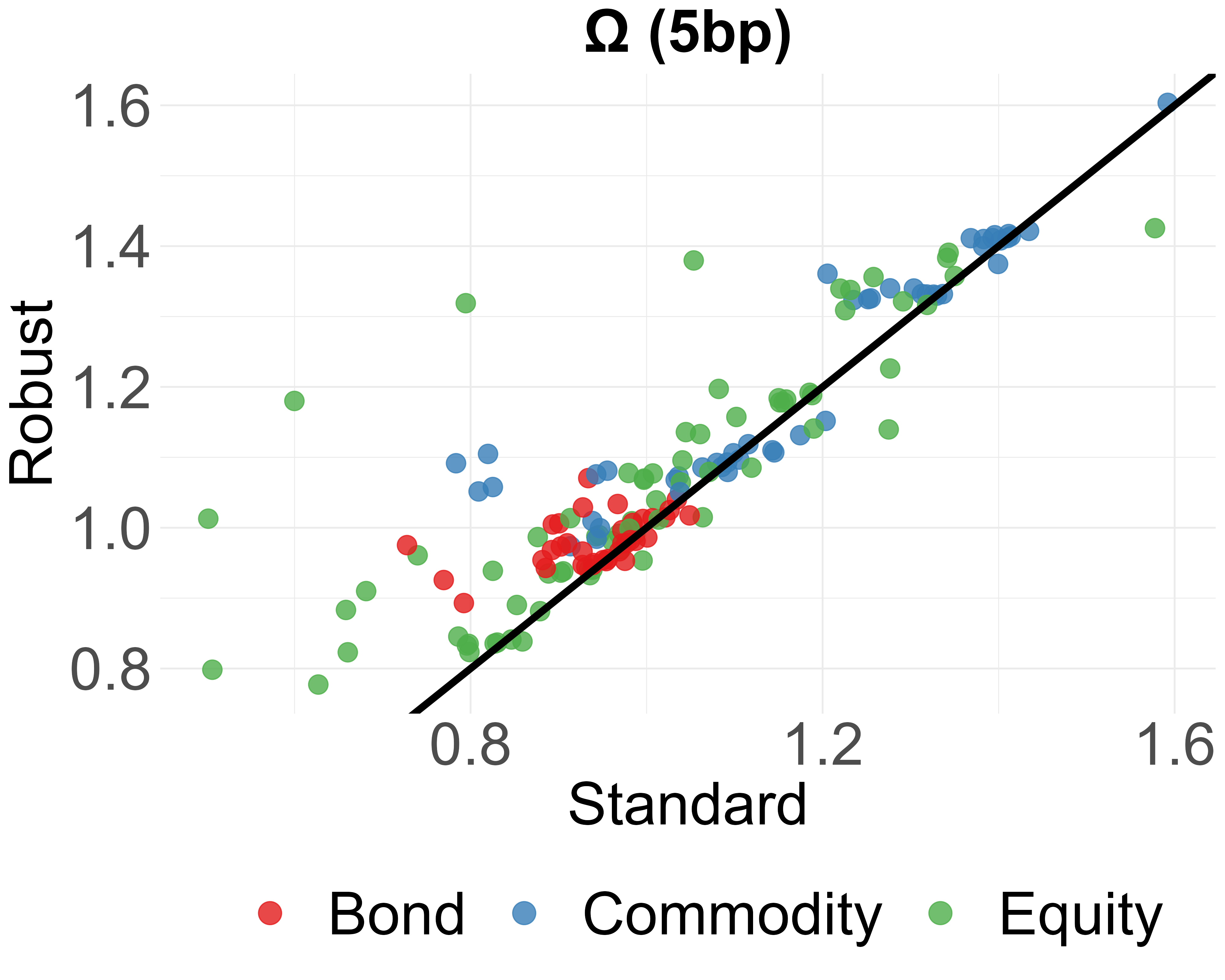}} \quad
{\includegraphics[width=.3\textwidth,height=3.5cm]{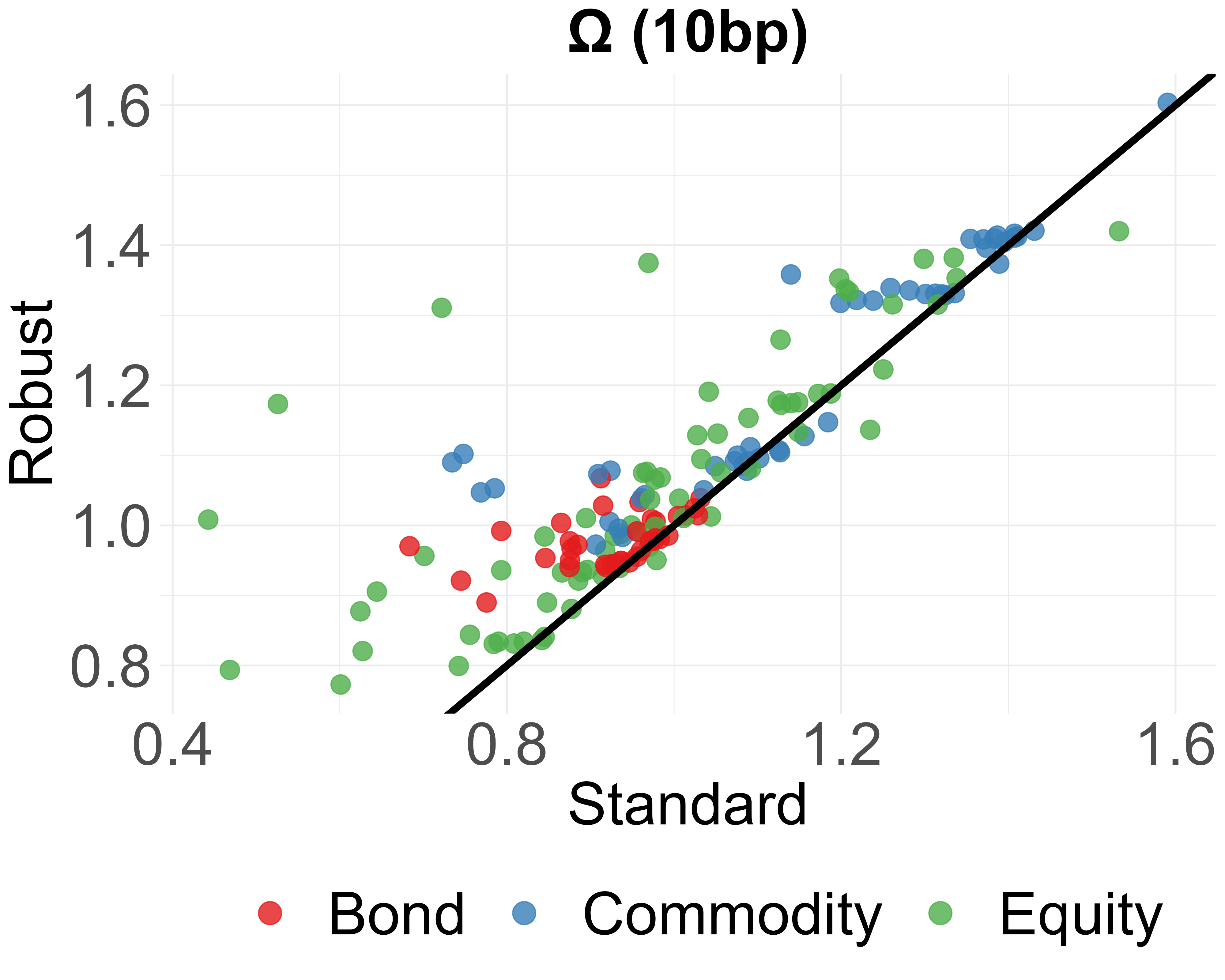}} \quad
\quad
\caption{Scatter plot of various performance metrics obtained using the standard Hedge Ratios and the robust Hedge Ratios. P\&L, SR, and $\Omega$ correspond to profit and loss, Sharpe ratio, and Omega ratio, respectively. Transaction costs are indicated in curly brackets. The color represents the asset class of the hedged asset.}
    \label{fig: transaction_costs}
\end{figure}

\begin{figure}[ht!]
\centering
\raggedright \hspace{0.05cm} {\scriptsize A}  \hspace{3.75cm} {\scriptsize B} \hspace{5.75cm} {\scriptsize C}\\
{\includegraphics[width=.3\textwidth,height=3.5cm]{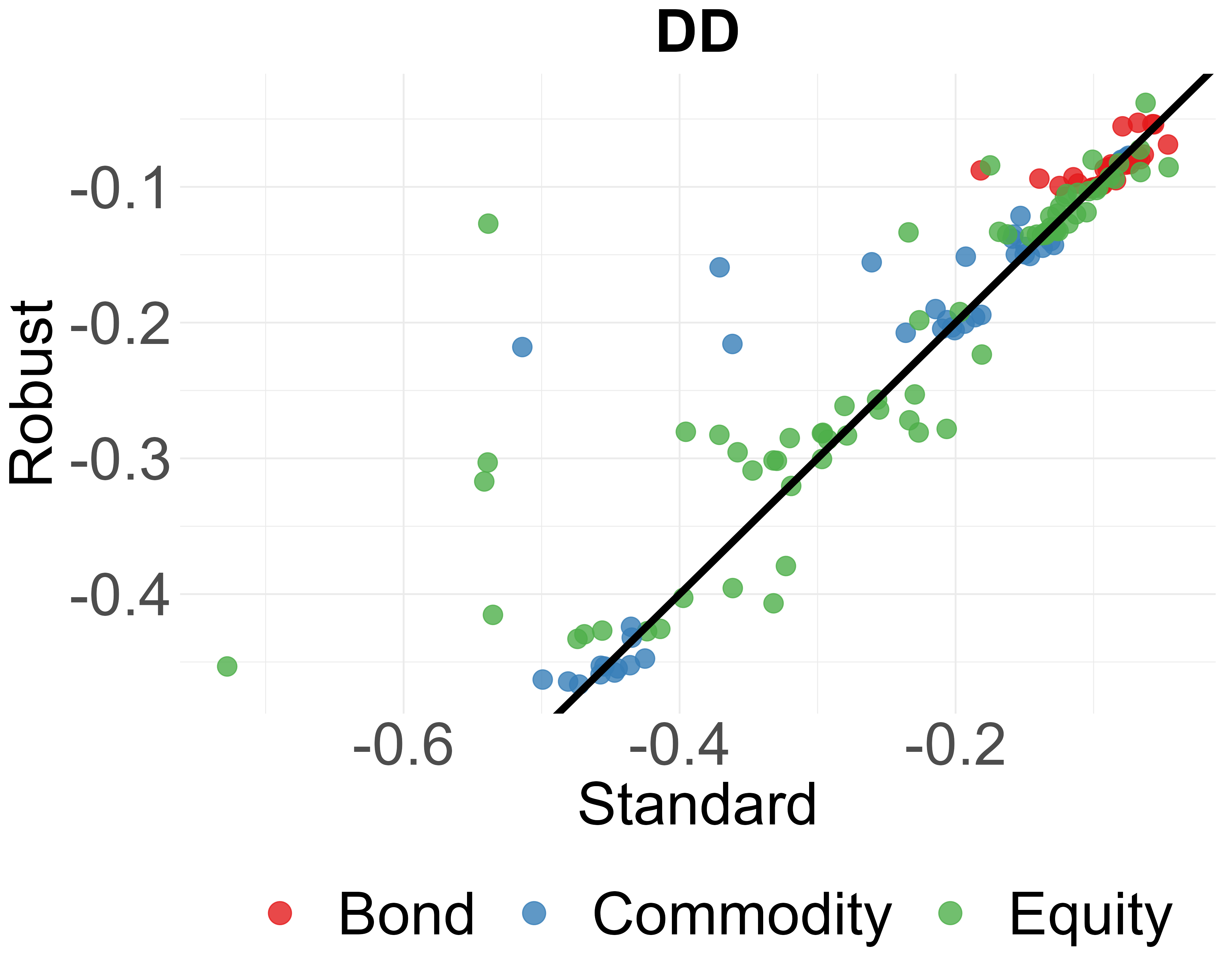}} \quad
{\includegraphics[width=.3\textwidth,height=3.5cm]{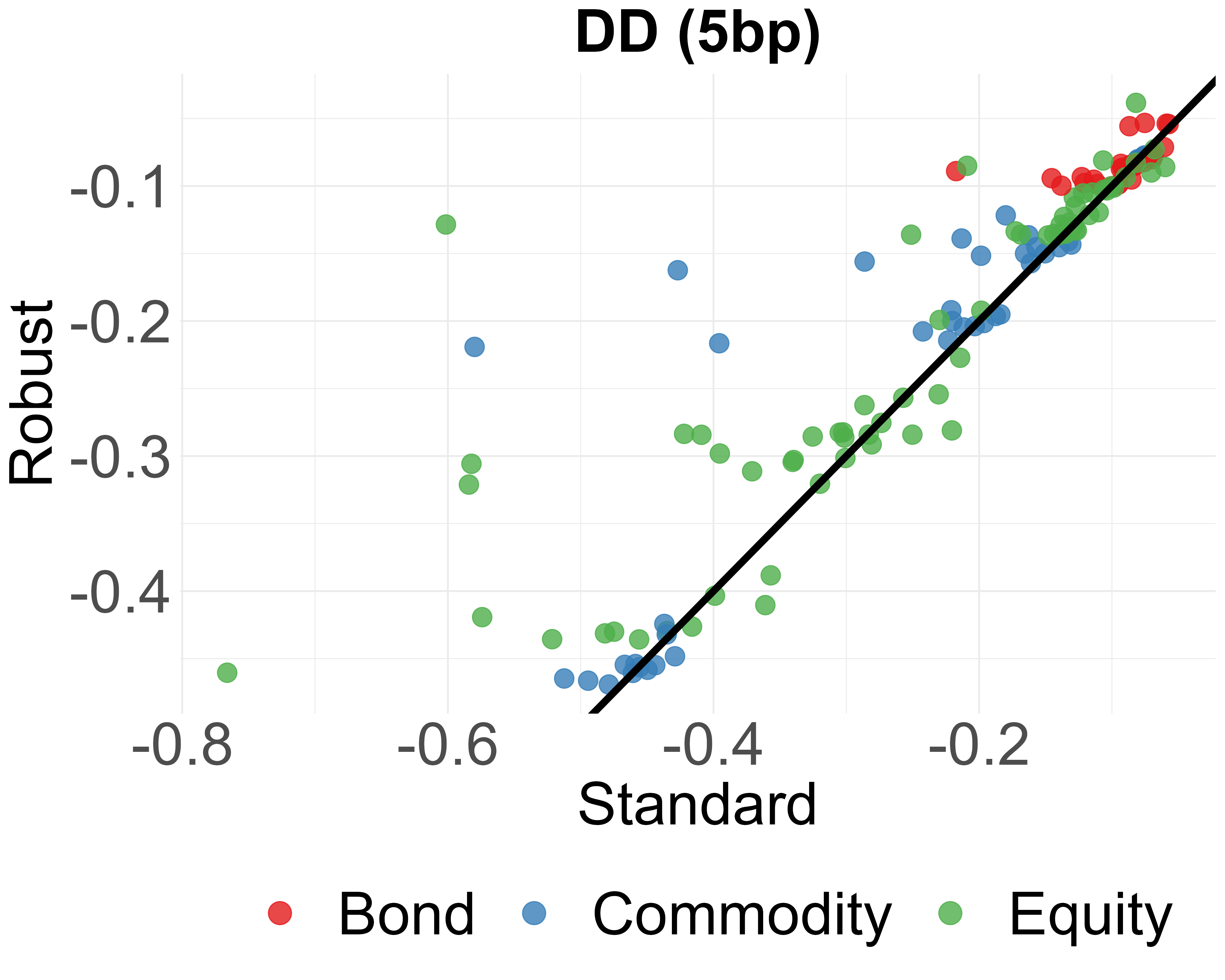}} \quad
{\includegraphics[width=.3\textwidth,height=3.5cm]{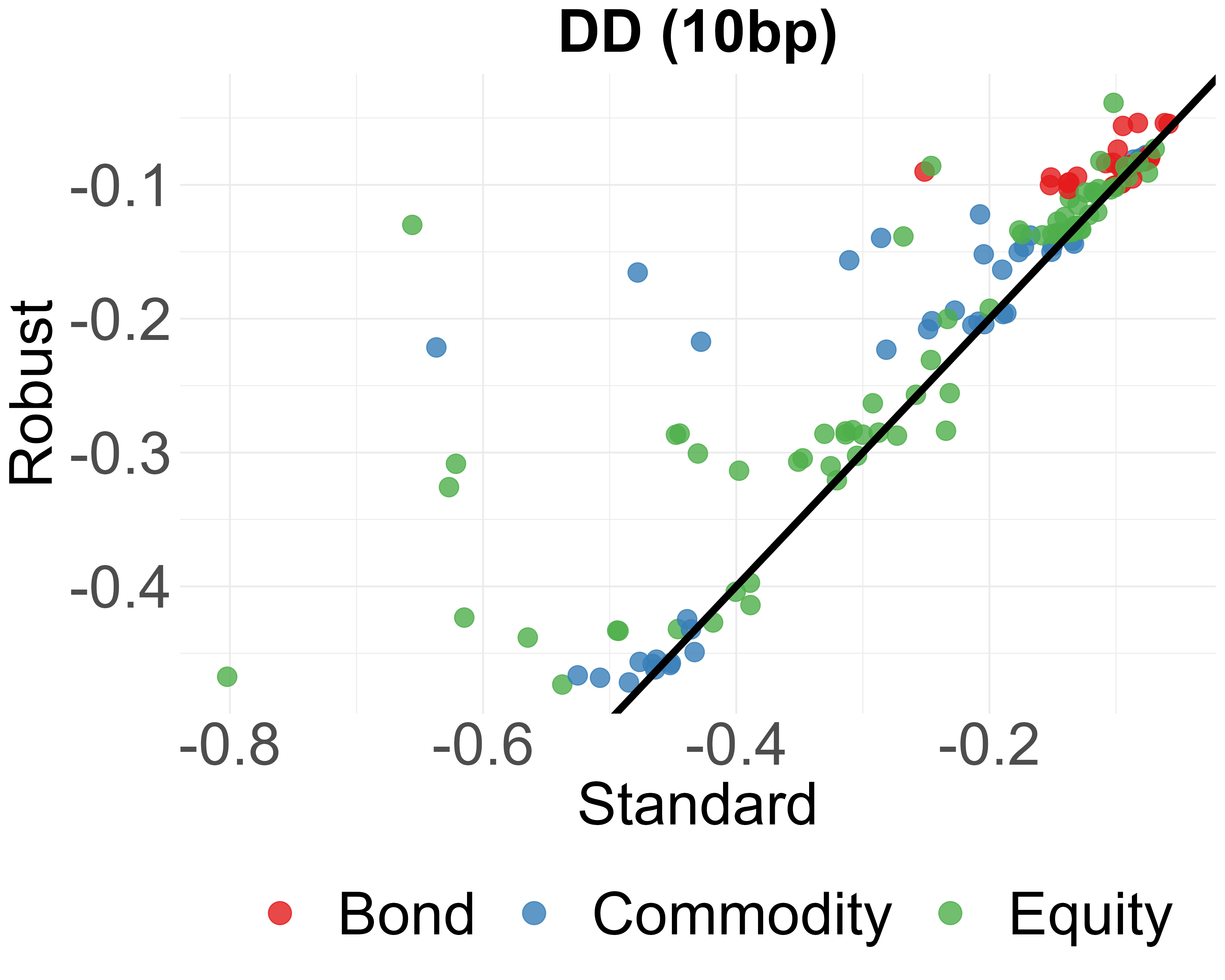}} \quad
\raggedright \hspace{0.05cm} {\scriptsize D}  \hspace{3.75cm} {\scriptsize E} \hspace{5.75cm} {\scriptsize F}\\
{\includegraphics[width=.3\textwidth,height=3.5cm]{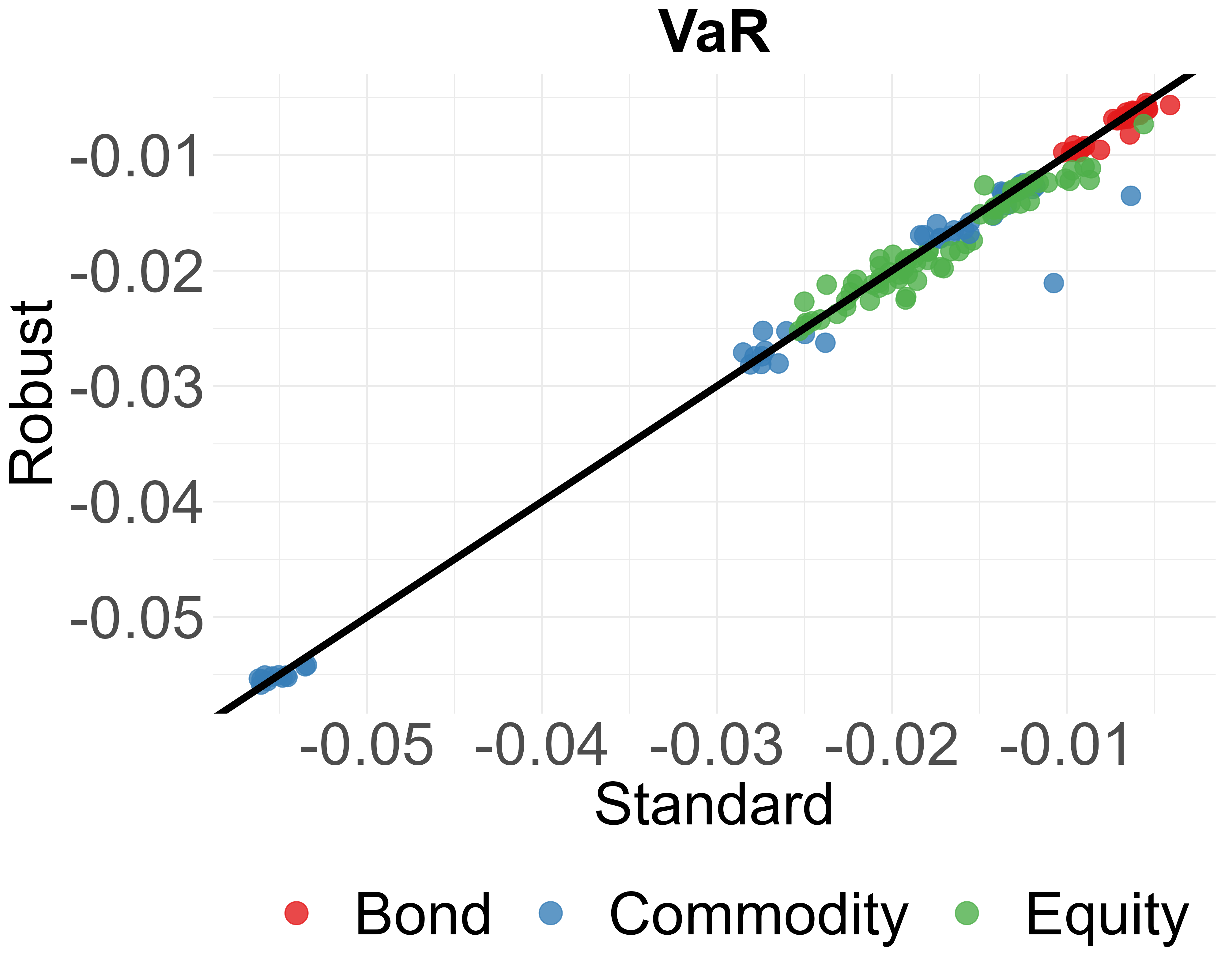}} \quad
{\includegraphics[width=.3\textwidth,height=3.5cm]{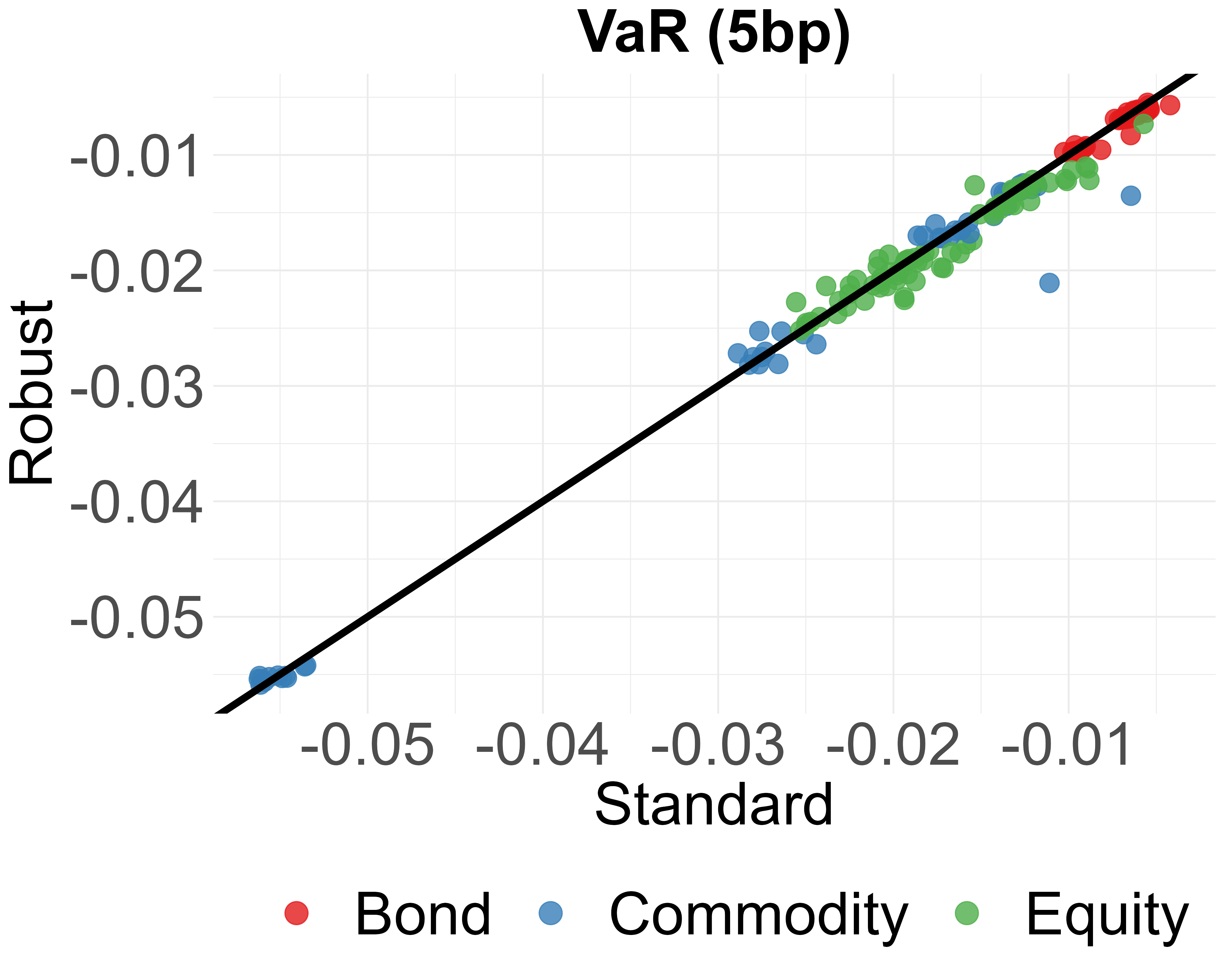}} \quad
{\includegraphics[width=.3\textwidth,height=3.5cm]{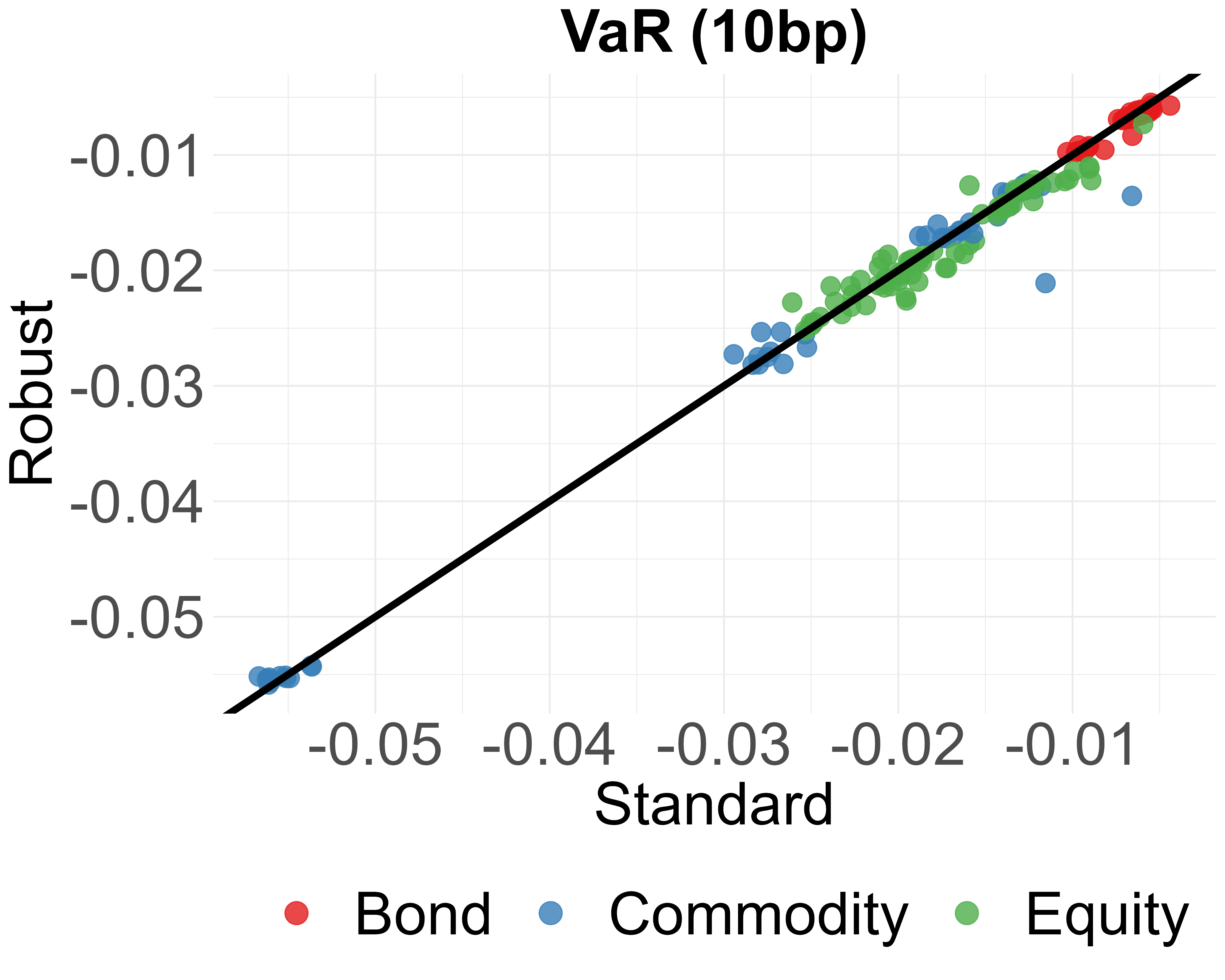}} \quad
\raggedright \hspace{0.05cm} {\scriptsize G}  \hspace{3.75cm} {\scriptsize H} \hspace{5.75cm} {\scriptsize I}\\
{\includegraphics[width=.3\textwidth,height=3.5cm]{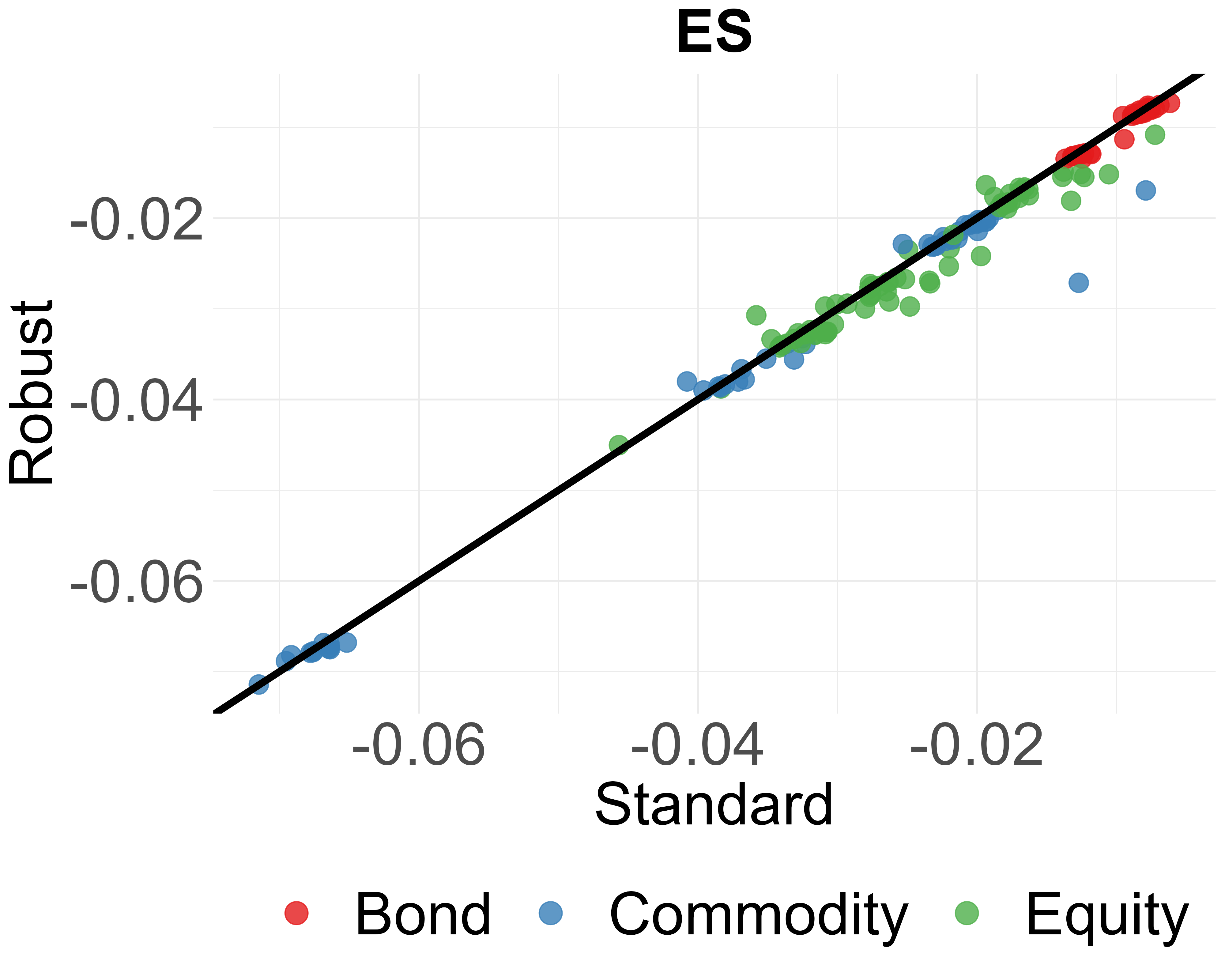}} \quad
{\includegraphics[width=.3\textwidth,height=3.5cm]{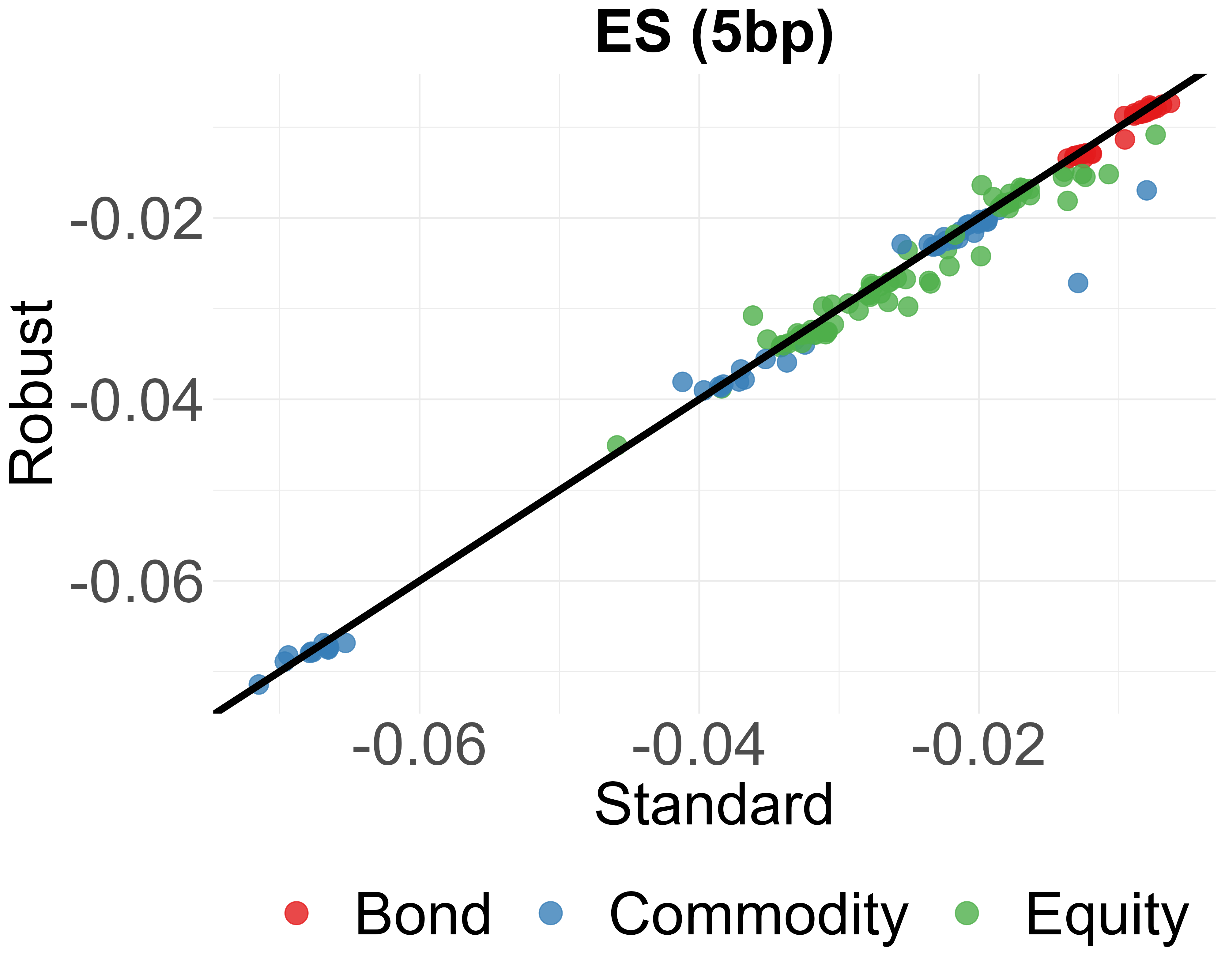}} \quad
{\includegraphics[width=.3\textwidth,height=3.5cm]{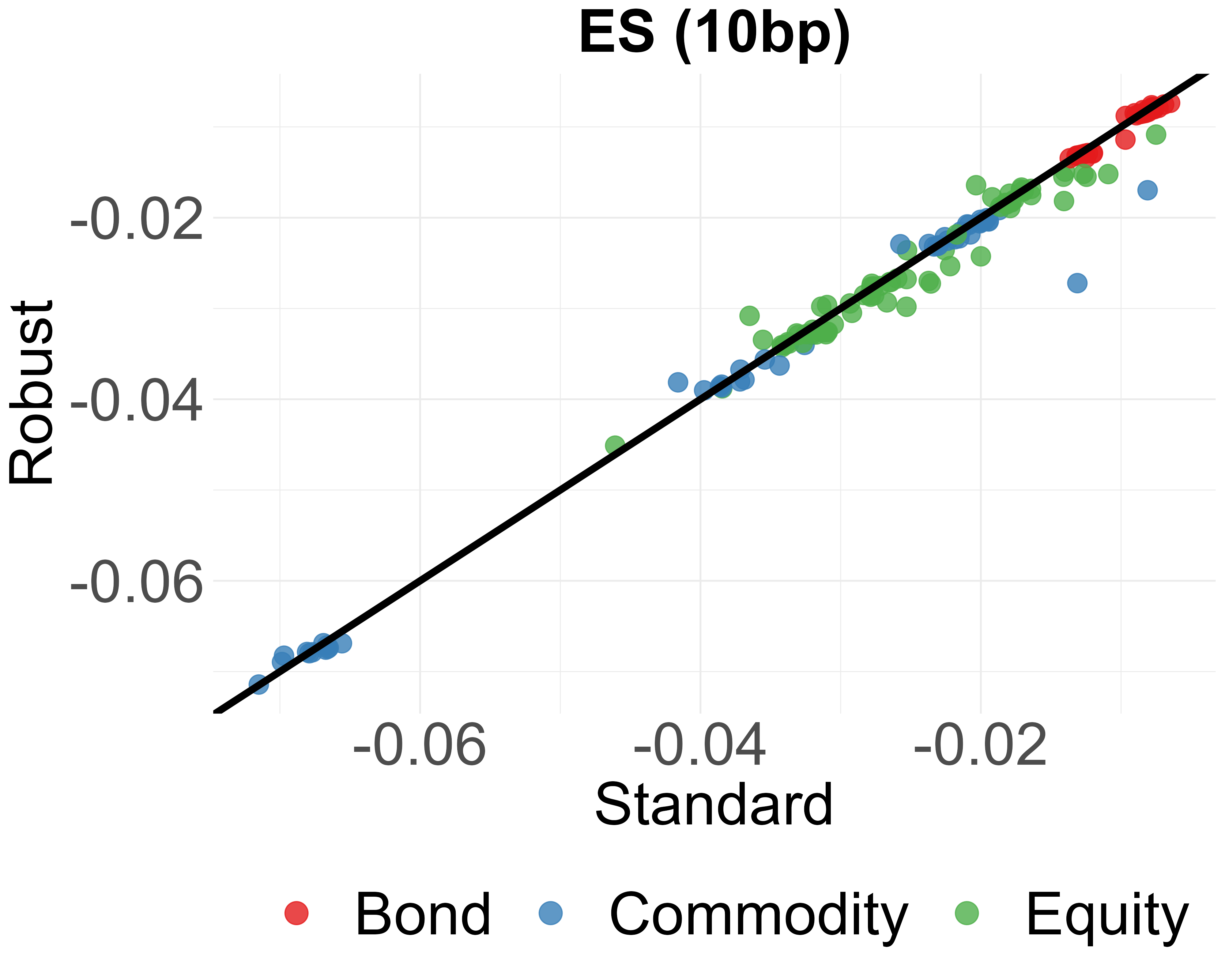}} \quad
\quad
\caption{Scatter plot of various performance metrics obtained using the standard Hedge Ratios and the robust Hedge Ratios. DD, VaR, and ES correspond to maximum drawdown, 95\% Value at Risk, 95\% Expected Shortfall, respectively. Transaction costs are indicated in curly brackets. The color represents the asset class of the hedged asset.}
    \label{fig: transaction_costs2}
\end{figure}

Overall, Fig. \ref{fig: transaction_costs} shows that robust hedge ratios tend to enhance portfolio performance in terms of P\&L, SR, and $\Omega$. The robust method compared to the standard method tends to perform better in terms of P\&L (plots A-B-C), SR (plots D-E-F), and $\Omega$ (plots G-H-I), indicating improved risk-adjusted performances. Similarly, Fig. \ref{fig: transaction_costs2} displays better portfolio performance for the robust method in terms DD, VaR, and ES. The DD (plots A-B-C) tends to be lower in the robust case, indicating larger declines for the standard method. VaR (plots D-E-F) and ES (plots G-H-I) tend to be slightly more favorable in the robust case, suggesting marginally better protection of the robust methodology against extreme losses. Importantly, these improvements generally increase when transaction costs are introduced (as also detailed in Tables \ref{tab: transaction_costs_aaau_no_return}–\ref{tab:transaction_costs_ung} in \ref{app: performance}).

When the analysis is restricted to the asset class of the hedged instrument, several patterns emerge. For precious metals, such as gold (see Table \ref{tab: transaction_costs_aaau_no_return}), robust hedging consistently enhances both P\&L and SR, signalling improved stability in risk compensation. For example, the P\&L (SR) differences between robust and standard methods for the pairs AAAU/CORP and AAAU/IVV amount to 7.589\% (2.600\%) and 6.954\% (2.181\%), respectively, when transaction costs are excluded. Developed-market equity indices also exhibit improvements. The differences in P\&L and $\Omega$ remain positive in most cases when the hedged asset is either the S\&P 500 (Table \ref{tab:transaction_costs_ivv}) or the NASDAQ 100 (Table \ref{tab:transaction_costs_qqqm}). The evidence is more mixed for government bonds and for energy-related equity indices. For Treasury bond indices (Tables \ref{tab:transaction_costs_govt} and \ref{tab:transaction_costs_igov}), differences in most measures, such as VaR and ES, tend to be negative. Similarly, the performance of couples including the clean energy index (Table \ref{tab:transaction_costs_icln}) as the hedged asset generally deteriorates in the robust setting: for instance, the difference in DD for ICLN/ASHR and ICLN/EWH equals $-$7.485\% and $-$5.635\%, respectively, while the differences in VaR (ES) for ICLN/IEV and ICLN/EWH are equal to $-$0.492\% ($-$0.309\%) and $-$0.328\% ($-$0.290\%).

These results improve when considering transaction costs. For instance, when considering a unit transaction cost of 5bps, the P\&L (SR) differences for the pairs AAAU/CORP and AAAU/IVV rise to 13.601\% (4.658\%) and 8.305\% (2.653\%), respectively. Analogously, with a unit transaction cost of 10bps, more couples (eleven out of twelve) compared to the gross-case present positive differences in P\&L and SR when the hedged asset is either the S\&P 500 or the NASDAQ 100.

To assess the statistical significance of the differences between the robust‐hedged and standard‐hedged strategies, two bootstrap procedures are employed. Both approaches rely on 10,000 replications and evaluate the metrics over one-year intervals (i.e., 250 trading days). The first procedure samples blocks randomly, whereas the second preserves the temporal dependence structure of the data.

In the first approach, 250-day blocks are sampled with replacement from the original series, and all performance and risk metrics are recomputed for both methods within each sampled block. Each replication therefore returns a paired observation consisting of the robust and the corresponding standard metric. Columns 2 and 3 of Table \ref{tab:transaction_costs_diff} report the resulting distributional differences. P\&L, SR, and $\Omega$ display statistically significant positive differences at the 1\% level, indicating higher profitability, improved risk-adjusted returns, and a more favourable gain-to-loss profile under the robust strategy. DD is also significantly more favorable in the robust case. Differences in VaR and ES, although positive, are not statistically significant, suggesting that tail-risk improvements are not consistently detectable under this sampling scheme.

To confirm the robustness of these findings when temporal dependence is preserved, the analysis is repeated using the Maximum Entropy Bootstrap (MEB) of \citet{vinod2009maximum}. The MEB generates pseudo-time series that retain key dependence features, such as autocorrelation, by resampling values while preserving their rank ordering. In our implementation, the MEB is applied separately to the robust and standard cases; each replication yields a pseudo-series of the same length of 250 days as in the previous procedure and each observation consists of paired block-level metrics, ensuring direct comparability. Paired differences are computed across the 10,000 replications, and inference is based on the empirical distribution of these differences. P-values are computed as the proportion of bootstrap replications yielding a difference with opposite sign relative to the sample estimate. Columns 4 and 5 of Table \ref{tab:transaction_costs_diff} show that the MEB results confirm those obtained under random block sampling.

\begin{table}[!htbp]
\scalebox{0.82}{
\centering
\setlength{\tabcolsep}{8pt}
\renewcommand{\arraystretch}{1.2}
\begin{tabular}{lcccc}
\hline
\hline
\textbf{Measure} 
& \textbf{Mean Difference} 
& \textbf{p-value} 
& \textbf{Mean Difference} 
& \textbf{p-value} \\
\textbf{}
& \textbf{} 
& \textbf{} 
& \textbf{(temporal order)} 
& \textbf{(temporal order)} \\

\hline
P\&L      & $0.091$ & $0.002$ & $0.088$ & $0.006$ \\
SR        & $0.095$  & $0.001$ & $0.089$  & $0.004$ \\
$\Omega$  & $0.560$  & $0.000$ & $1.020$  & $0.000$ \\
DD        & $0.722$ & $0.007$ & $0.628$ & $0.005$ \\ 
VaR       & $0.039$  & $0.180$ & $0.041$  & $0.223$ \\
ES        & $0.021$  & $0.147$ & $0.019$  & $0.137$ \\
\hline
\hline
\end{tabular}}
\caption{Columns \textit{Mean Difference} and \textit{p-value} report the differences between metrics of portfolios, considering a unit transaction cost of 5 bps, obtained using robust hedge ratios and those using standard hedge ratios, and their p-values, respectively. Columns \textit{Mean Difference (temporal order)} and \textit{p-value (temporal order)} report the corresponding values obtained by preserving the temporal order of the sampled observations.}
\label{tab:transaction_costs_diff}
\end{table}

\section{Conclusions}\label{sec:concl}

This paper develops a robust framework for dynamic minimum-variance hedging that explicitly incorporates volatility and covariance forecast uncertainty through a tractable optimization approach. By combining realized high-frequency risk measures, autoregressive volatility modeling, and box-uncertainty sets, the proposed methodology delivers hedge ratios that are more stable and less sensitive to estimation errors. Empirical evidence across a diversified set of ETFs shows that the robust approach achieves comparable variance reduction while improving downside protection, reducing turnover, and enhancing risk-adjusted performance, particularly when transaction costs are taken into account. Bootstrap analysis further supports the statistical relevance of these improvements, highlighting the practical value of robustness in real-world hedging applications.

Future research could extend this framework by exploring alternative uncertainty sets or Bayesian formulations. Additional work may also investigate multi-asset portfolio settings, liquidity considerations, and the interaction between robust hedging and market microstructure effects, thereby further enhancing the applicability of robust risk management techniques to higher frequencies.

\section*{Acknowledgments}

\noindent AR and PM acknowledge the financial support by the Italian Ministry of University and Research under the PRIN 2022 PNRR project {\it Climate Change, Uncertainty and Financial Risk: Robust
Approaches based on Time-Varying Parameters} (grant n. P20229CJRS, CUP B53D23026390001).

\noindent MP acknowledges the financial support by the European Union - Next Generation EU (PRIN research project 2022FPLY97), as well as by the 'AIDMIX' and 'RATIONALISTS' research projects of the University of Perugia, funded under the FRB-2021 and FRB-2022 programs.

\clearpage
\newpage
\bibliographystyle{elsarticle-harv}
\bibliography{biblio_2207}

\clearpage
\newpage
\appendix

\section{Proof of Proposition \ref{prop:1}}\label{sec:app1}
\begin{proof}
For $h\in\mathbb{R}$ and $(\tilde\sigma_S^2,\tilde\sigma_F^2, \tilde\sigma_{SF})\in\mathcal U$, define
\[
\phi(h,\tilde\sigma_S^2,\tilde\sigma_F^2, \tilde\sigma_{SF})
:=
\tilde\sigma_S^2
+ h^2 \tilde\sigma_F^2
- 2 h \tilde\sigma_{SF}.
\]

We first solve the inner maximization problem. Fix $h\in\mathbb{R}$. The function $\phi(h,\cdot,\cdot, \cdot)$ is affine and separable in the variables $\tilde\sigma_S^2$, $\tilde\sigma_F^2$, and $\tilde\sigma_{SF}$ therefore we can rewrite the problem as
\begin{equation}\label{eq:opt_split}
\begin{split}
\max_{(\tilde\sigma_S^2,\tilde\sigma_F^2, \tilde\sigma_{SF})\in\mathcal U}\phi(h,\tilde\sigma_S^2,\tilde\sigma_F^2, \tilde\sigma_{SF})
&=\max_{\tilde{\sigma}_S^2\in[\sigma^2_{S}-\Theta_{S},\sigma^2_{S}+\Theta_{S}]}
\tilde\sigma_S^2\ +  \\
&+ \max_{\tilde{\sigma}_F^2\in[\sigma^2_{F}-\Theta_{F},\sigma^2_{F}+\Theta_{F}]} h^2 \tilde\sigma_F^2 \  + \\
&+ \max_{\tilde{\sigma}_{SF}\in[\sigma_{SF}-\Theta_{SF},\sigma_{SF}+\Theta_{SF}]} (-2 h \tilde\sigma_{SF}).
\end{split}
\end{equation}
We observe that 
\begin{equation*}
\begin{split}
&\max_{\tilde{\sigma}_S^2\in[\sigma^2_{S}-\Theta_{S},\sigma^2_{S}+\Theta_{S}]}
\tilde\sigma_S^2 = \sigma_S^2 + \Theta_S \\
&  \max_{\tilde{\sigma}_F^2\in[\sigma^2_{F}-\Theta_{F},\sigma^2_{F}+\Theta_{F}]} h^2 \tilde\sigma_F^2 = h^2(\sigma_F^2 + \Theta_F) .
\end{split}
\end{equation*}
On the other hand, for the third term in Eq. \eqref{eq:opt_split} we need to distinguish three cases depending on the value of $h$, i.e., 
\begin{equation*}
 \max_{\tilde{\sigma}_{SF}\in[\sigma_{SF}-\Theta_{SF},\sigma_{SF}+\Theta_{SF}]} (-2 h \tilde\sigma_{SF}) = 
\begin{cases}
-2 h (\sigma_{SF} - \Theta_{SF}) &\text{ if } h > 0 \\
0 &\text{ if } h = 0 \\
-2 h (\sigma_{SF} + \Theta_{SF}) &\text{ if } h < 0. \\
\end{cases}
\end{equation*}
We can define the worst-case objective function as
\begin{equation*}
\psi(h):= (\sigma_S^2 + \Theta_S) + h^2(\sigma_F^2 + \Theta_F) -2h\sigma_{SF} + 2|h|\Theta_{SF}
\end{equation*}
and the robust problem is equivalent to
\[
h^* = \arg\min_{h\in\mathbb{R}} \psi(h). 
\]

We observe that the function $\psi$ is not differentiable in $h=0$ if $\Theta_{SF} > 0$ and that it is strictly convex on $\mathbb{R}$ since it is the sum of the strictly convex function $h^2(\sigma_F^2 + \Theta_F)$ where $\sigma_F^2+\Theta_F>0$ by assumption, the affine function $-2h\sigma_{SF}$, and the convex function $ 2|h|\Theta_{SF}$. Therefore, $h^*$ is the unique global minimizer of the robust problem if and only if it satisfies the first-order optimality conditions:
\begin{equation*}
\begin{cases}
2h^*(\sigma_F^2 + \Theta_F) - 2\sigma_{SF} + 2\Theta_{SF} = 0 &\text{ if } h^* > 0 \\ 
2h^*(\sigma_F^2 + \Theta_F) - 2\sigma_{SF} - 2\Theta_{SF} = 0 &\text{ if } h^* < 0 \\
0 \in \partial \psi(0) &\text{ if } h^* = 0.
\end{cases}
\end{equation*} 
This implies that if $\sigma_{SF} > \Theta_{SF}$, the optimal hedge ratio is
\begin{equation*}
h^* = \frac{\sigma_{SF} - \Theta_{SF}}{\sigma^2_F + \Theta_F},
\end{equation*}
if $\sigma_{SF} < -\Theta_{SF}$, it is 
\begin{equation*}
h^* = \frac{\sigma_{SF} + \Theta_{SF}}{\sigma^2_F + \Theta_F},
\end{equation*}
and it is null if $(- 2\sigma_{SF} + 2\Theta_{SF}) \geq 0 \land (-2\sigma_{SF} - 2\Theta_{SF}) \leq 0$, i.e., $\sigma_{SF} \in [-\Theta_{SF}, \Theta_{SF}]$. 

We can conclude that 
\begin{equation*}
 h^*= 
\begin{cases}
\frac{\sigma_{SF} - \Theta_{SF}}{\sigma^2_F + \Theta_F} &\text{ if } \sigma_{SF} > \Theta_{SF} \\
0 &\text{ if } -\Theta_{SF} \leq \sigma_{SF} \leq \Theta_{SF}\\
\frac{\sigma_{SF} + \Theta_{SF}}{\sigma^2_F + \Theta_F} &\text{ if } \sigma_{SF} < -\Theta_{SF} \\
\end{cases}
\end{equation*}
and this can be rewritten as 
\begin{equation*}
h^*=\frac{sgn(\sigma_{SF}) \Big(|\sigma_{SF}| - \Theta_{SF}\Big)^+}{\sigma_F^2+\Theta_F},
\end{equation*}
which is the unique optimal solution of the robust problem.

\end{proof}

\section{Uncertainty box for AR(p) models}\label{sec:app2}
In Section \ref{sec:method}, we illustrate the robust hedging approach that we propose. As it is shown by Eq. \eqref{eq:h}, the hedge ratio is expressed in terms of the volatility of the asset and the hedging instrument, the covariance between them, and their uncertainty boxes which represent the main novelty of our method. In this appendix we show that when volatility is modeled as an AR(p) process, closed-form expressions for these uncertainty intervals can be derived. The procedure to obtain the uncertainty intervals related to covariance is analogous.

In the setting of our approach, the estimated volatility over a horizon $\tau$ is the square root of the integrated variance over $\tau$ steps, namely $\hat{\sigma}_{t+\tau}^2 = \sum_{j=1}^\tau \hat{y}_{t+j}$ where $\hat{y}_{t+j}$ is the $j$-th step-ahead forecast of the AR(p) process. In particular, given the realized variance as an estimate of the squared volatility, we consider $y_{t+1}=\phi_0+\sum_{j=0}^{p-1}\phi_iy_{t-j}+\eta_{t+1}$ with the assumption of zero mean and finite second-order moment $\sigma_\eta^2$ of the error term. Then, the uncertainty interval $\Theta_{\tau}$ associated with $\hat{\sigma}_{t+\tau}^2$ is described by the standard deviation of the forecast error $e_\tau\equiv\left(\sum_{j=1}^{\tau}y_{t+j}\right)-\left(\sum_{j=1}^\tau\hat{y}_{t+j}\right)$. 

In the following, we aim to obtain $\Theta_{\tau}$ analytically. As preliminary steps, we derive closed-form expressions for $Var(y_{t+j}- \hat{y}_{t+j})$ when both AR(1) and generic AR(p) models are considered. Then, using these results, we compute $Var(e_\tau)$ and compare the output resulting from the closed-form expression with its empirical counterpart.


\subsection{$Var(y_{t+j}- \hat{y}_{t+j})$ for AR(1)}
Given $\mathcal{F}_t$ the information set up to time $t$, the $j$-step-ahead forecast is defined as $\hat{y}_{t+j}=\mathbb{E}(y_{t+j}\vert \mathcal{F}_t)$ and, as such, if $j > 1$, it can be computed recursively starting from the one-step-ahead forecast. For the AR(1) model, under the assumptions outlined in the introduction of this appendix, we obtain 
$$\hat{y}_{t+ j} = \mathbb{E}[y_{t+j}|\mathcal{F}_t] = \phi_0 + \phi_0\phi_1 + \phi_0\phi_1^2 + \ldots + \phi_0\phi_1^{j-1} + \phi_1^{j}y_{t}.$$
Similarly, $y_{t+j}$ can be rewritten as 
$$y_{t+j} = \hat{y}_{t+ j} + \phi_1^{j-1}\eta_{t+1} + \phi_1^{j-2}\eta_{t+2} + \ldots + \eta_{t+j}$$ 
and the variance of the forecast error can be obtained:
\begin{equation}\label{eq_uncertainty_ar1}
    Var(y_{t+j}- \hat{y}_{t+j}) = Var\Big(\sum_{i=1}^{j}(\phi_1^{j-i})^2\eta_{t+i}\Big) = \Big(\sum_{i=0}^{j - 1}\phi_1^{2i}\Big)\sigma_\eta^2.
\end{equation}


\subsection{$Var(y_{t+j}- \hat{y}_{t+j})$ for AR(p)}

Eq. \eqref{eq_uncertainty_ar1} holds for linear autoregressive processes of order 1. In order to generalize this formula for a generic AR(p) model, it is useful to find its Moving Average (MA) representation. Using the lag operator, the original AR(p) model for the time series after mean removal $\{\tilde{y}_{t}\}_t$ can be stated as
\begin{equation}\label{eq_AR_representation}
    \Phi(L)\tilde{y}_{t+j} = \eta_{t+j}
\end{equation}
where $\Phi(L) = 1 - \phi_1 L - \phi_2 L^2 + \ldots - \phi_p L^p$ is a lag polynomial with $\sum_{i= 1}^p\phi_i^2 < \infty$\footnote{Let $X_t$ be a stochastic process. Then, the lag operator $L$ is defined as $L X_t \equiv X_{t-1} $. It is linear and admits power exponent i.e., $L^kX_t = X_{t-k}$. Expressions like $\theta_0 + \theta_1 L + \theta_2 L^2 + \ldots + \theta_n L^n $ are called \textit{lag polynomials} and are denoted as $\theta(L)$.}. An AR process is always invertible, i.e., it can always be written in its MA($\infty$) representation as 
$$\tilde{y}_{t+j} = \Psi(L)\eta_{t+j} = \sum_{i=0}^\infty\psi_i\eta_{t+j-i} $$
where $\Psi(L) = \psi_0 + \psi_1 L + \psi_2 L^2 + \ldots $ is a lag polynomial such that $\sum_{i= 0}^\infty\psi_i^2 < \infty$ and $\Phi(L)\Psi(L)=1$. By exploiting this MA representation, the conditional expected value of $\tilde{y}_{t+j}$ given the information set $\mathcal{F}_t$ is
$$\mathbb{E}[\tilde{y}_{t+j}|\mathcal{F}_t] = \sum_{i=j}^\infty\psi_i\eta_{t+j-i}$$
and the $j$-step-ahead forecast error
\begin{equation}\label{eq_tau-step-ahead_forecast_error}
    \tilde{y}_{t+j} - \mathbb{E}[\tilde{y}_{t+j}|\mathcal{F}_t] = \sum_{i=0}^{j-1}\psi_i\eta_{t+j-i}.
\end{equation}
Consequently, we obtain its variance as
\begin{equation}\label{eq_variance_tau-step-ahead_forecast_error}
    Var\Big(\tilde{y}_{t+j} - \mathbb{E}[\tilde{y}_{t+j}|\mathcal{F}_t]\Big) = \Bigg(\sum_{i=0}^{j - 1}\psi_i^2\Bigg)\sigma_\eta^2
\end{equation}
and the coefficients $\psi_i$ can be determined in terms of $\phi_i$ by solving the condition 
\begin{equation}\label{eq_invertibility}
\Phi(L)\Psi(L) = 1 \iff (1 - \phi_1L - \phi_2 L^2 - \ldots - \phi_p L^p)(\psi_0 + \psi_1 L + \psi_2 L^2 + \ldots ) = 1.
\end{equation}
This implies 
\begin{equation*}
    \begin{split}
        &\psi_0 = 1 \\
        &\psi_i = \sum_{k=1}^{\min(i, p)} \phi_k\psi_{i-k}, \ i \geq 1. 
    \end{split}
\end{equation*}
Therefore, the coefficients $\psi_i$ depend on the order of the autoregressive process. 

To illustrate this, consider the following example: we aim to compute the uncertainty interval for $j=4$, given an autoregressive process of order $p=3$. As a first step, by exploiting the AR representation in Eq. \eqref{eq_AR_representation}, we obtain the $j$-step-ahead forecast error:
\begin{equation*}
    \begin{split}
        \tilde{y}_{t+4} - \mathbb{E}[\tilde{y}_{t+4}|\mathcal{F}_t] =& \Big(\phi_0 + \phi_1\tilde{y}_{t+3} + \phi_2\tilde{y}_{t+2} + \phi_3\tilde{y}_{t+1} + \eta_{t+4}\Big) + \\ 
        &-\Big(\phi_0 + \mathbb{E}[\phi_1\tilde{y}_{t+3} + \phi_2\tilde{y}_{t+2} + \phi_3\tilde{y}_{t+1}|\mathcal{F}_t]\Big) = \\
        =& \Big(\phi_0 + \phi_0\phi_1 + (\phi_1^2 + \phi_2)\tilde{y}_{t+2} + (\phi_1\phi_2 + \phi_3)\tilde{y}_{t+1} + \phi_1\phi_3\tilde{y}_{t} + \phi_1\eta_{t+3} + \eta_{t+4}\Big) + \\ 
        &- \Big( \phi_0 + \phi_0\phi_1 + 
 \mathbb{E}[(\phi_1^2 + \phi_2)\tilde{y}_{t+2} + (\phi_1\phi_2 + \phi_3)\tilde{y}_{t+1}|\mathcal{F}_t]+ \phi_1\phi_3\tilde{y}_{t}\Big)=\\
        =&\ldots = \\
         =& [(\phi_1^2 + \phi_2)\phi_1 + \phi_1\phi_2 + \phi_3]\eta_{t+1} + (\phi_1^2 + \phi_2)\eta_{t+2}+\phi_1^2\eta_{t+3} + \eta_{t+4}.
    \end{split}
\end{equation*}
We observe that the constant terms, as well as those involving $\tilde{y}_{t}$ cancel out, since they appear in both $\tilde{y}_{t+4}$ and $\mathbb{E}[\tilde{y}_{t+4}|\mathcal{F}_t]$. This implies that the variance of the $4$-step-ahead forecast error is 
\begin{equation}\label{eq_var_p3_tau4}
\begin{split}
    Var\Big(\tilde{y}_{t+4} - \mathbb{E}[\tilde{y}_{t+4}|\mathcal{F}_t]\Big)= \{ 1 + \phi_1^2 + (\phi_1^2 + \phi_2)^2 + [(\phi_1^2 + \phi_2)\phi_1 + \phi_1\phi_2 + \phi_3]^2 \}\sigma_{\eta}^2.
    \end{split}
\end{equation}
By comparing Eq. \eqref{eq_var_p3_tau4} with Eq. \eqref{eq_variance_tau-step-ahead_forecast_error}, we observe that if $p=3$ and $j=4$, the coefficients $\{\psi_i, \ i = 1, \ldots, j - 1\}$ are:
\begin{equation*}
    \begin{split}
        \psi_0 &= 1 \\
        \psi_1 &= \phi_1 \\
        \psi_2 &= \phi_1^2 + \phi_2 \\
        \psi_3 &= (\phi_1^2 + \phi_2)\phi_1 + \phi_1\phi_2 + \phi_3,
    \end{split}
\end{equation*}
i.e., they are the coefficients which solve the condition  
$$(1 - \phi_1L - \phi_2 L^2 - \phi_3 L^3)(\psi_0 + \psi_1 L + \psi_2 L^2 + \ldots ) = 1.$$

Finally, we note that if $p=1$, the condition in Eq. \eqref{eq_invertibility} implies that $\psi_0 = 1$ and $\psi_i = \phi_1^{i}$, $\forall i \geq 1$. Therefore, the $j$-step-ahead forecast error is 
$$
Var\Big(\tilde{y}_{t+j} - \mathbb{E}[\tilde{y}_{t+j}|\mathcal{F}_t]\Big) = \left(\sum_{i=0}^{j-1}\phi_1^{2i}\right)\sigma_\eta^2$$
as shown in Eq. \eqref{eq_uncertainty_ar1}.

\subsection{Uncertainty interval for integrated variance}
Now, we can focus on the integrated variance over $\tau$ steps and derive its uncertainty interval. Referring to Eq. \eqref{eq_tau-step-ahead_forecast_error} and \eqref{eq_variance_tau-step-ahead_forecast_error}, we obtain:
\begin{equation}\label{eq_Theta_th}
\begin{split}
    Var(e_\tau) &= \sum_{j=1}^{\tau}Var(\tilde{y}_{t+j} - \mathbb{E}[\tilde{y}_{t+j}|\mathcal{F}_t])+ \\ &+  \sum_{j, k = 1; j \neq k}^{\tau} Cov\Big((\tilde{y}_{t+j} - \mathbb{E}[\tilde{y}_{t+j}|\mathcal{F}_t]), (\tilde{y}_{t+k} - \mathbb{E}[\tilde{y}_{t+k}|\mathcal{F}_t])\Big) = \\
    &= \sum_{j=1}^{\tau} \Bigg[\sum_{i=0}^{j - 1}\psi_i^2\Bigg]\sigma_\eta^2 + \sum_{j, k = 1; j \neq k}^{\tau} Cov\Big( \sum_{l=0}^{j-1}\psi_l\eta_{t+j-l}, \sum_{m=0}^{k-1}\psi_m\eta_{t+k-m}\Big) = \\
    &= \sum_{j=1}^{\tau} \Bigg[\sum_{i=0}^{j - 1}\psi_i^2\Bigg]\sigma_\eta^2 + \sum_{j, k = 1; j \neq k}^{\tau} \sum_{l=0}^{j-1} \sum_{m=0}^{k-1}\psi_l\psi_m\mathbb{E}[\eta_{t+j-l}\eta_{t+k-m}] = \\
    &= \sum_{j=1}^{\tau} \Bigg[\sum_{i=0}^{j - 1}\psi_i^2\Bigg]\sigma_\eta^2 + \Bigg(\sum_{j, k = 1; j \neq k}^{\tau} \sum_{l=0}^{j-1} \sum_{m=0}^{k-1}\psi_l\psi_m \mathbb{I}[m = k - j + l]\Bigg)\sigma_\eta^2 = \\
    &= \sum_{j=1}^{\tau} \Bigg[\sum_{i=0}^{j - 1}\psi_i^2\Bigg]\sigma_\eta^2 + \Bigg(\sum_{j, k = 1; j \neq k}^{\tau} \sum_{l=\max(0, j- k)}^{j-1} \psi_l\psi_{k - j + l}\Bigg)\sigma_\eta^2
\end{split} 
\end{equation}
where the coefficients $\psi_i$ depend on the order $p$ of the autoregressive process and solve the condition in Eq. \eqref{eq_invertibility}. Then, the uncertainty interval $\Theta_{\tau}$ associated with $\hat{\sigma}_{t+\tau}^2$ is the square root of $Var(e_\tau)$.

If $p=1$, Eq. \eqref{eq_Theta_th} simplifies to:
\begin{equation*}
\begin{split}
    Var(e_\tau) 
    &= \sum_{j=1}^{\tau} \Bigg[\sum_{i=0}^{j - 1}\phi_1^{2i}\Bigg]\sigma_\eta^2 + \Bigg(\sum_{j, k = 1; j \neq k}^{\tau} \phi_1^{k - j}\sum_{l=\max(0, j- k)}^{j-1} \phi_1^{2l}\Bigg)\sigma_\eta^2,
\end{split} 
\end{equation*}
and it can be easily shown that it is equivalent to:
\begin{equation*}
\begin{split}
    Var(e_\tau) 
    &= \sum_{j=1}^{\tau} \Bigg[\sum_{i=0}^{j - 1}\phi_1^{2i}\Bigg]\sigma_\eta^2 + \Bigg(\sum_{j, k = 1; j \neq k}^{\tau} \sum_{l=1}^{j} \sum_{m=1}^{k} \phi_1^{j-l}\phi_1^{k - m} \mathbb{E}[\eta_{t+l}\eta_{t+m}]\Bigg) = \\
    &= \sum_{j=1}^{\tau} \Bigg[\sum_{i=0}^{j - 1}\phi_1^{2i}\Bigg]\sigma_\eta^2 + \Bigg(\sum_{j, k = 1; j \neq k}^{\tau} \phi_1^{j+k} \sum_{l=1}^{\min(j,k)} \phi_1^{-2l} \Bigg) \sigma_\eta^2.
\end{split} 
\end{equation*}

\subsection{Empirical vs. theoretical uncertainty intervals for integrated variance}
As shown in previous subsections, a closed-form of uncertainty intervals related to the realized variance can be obtained for linear autoregressive processes. In Fig. \ref{fig:Theta_tau_emp_vs_formula} we compare the uncertainty intervals that we obtain by estimating them directly from the predictions with the corresponding ones computed using the formula of Eq. \eqref{eq_Theta_th}\footnote{Only the pairs of instruments such that fitting the AR processes directly on the realized variances, and not their logarithm, returns positive coefficients, are considered.}, for AR(1) and AR(5), $\tau=1, \ldots, 10$. We observe that overall there is a good agreement however, when AR(1) is employed to fit the processes, the intervals computed directly from the predictions overestimate the outputs of the theoretical formula and the effect is stronger for higher prediction horizons $\tau$. This could be due to the fact that in the derivation of Eq. \eqref{eq_Theta_th}, we assume that error terms are orthogonal and homoskedastic, i.e., $\mathbb{E}[\eta_{j}\eta_k] = \sigma_\eta^2\mathbb{I}[j = k]$. In practice, this assumption may not hold and additional terms would contribute to the covariance of forecast errors in Eq. \eqref{eq_Theta_th}. This effect is more pronounced for higher prediction horizons $\tau$ since this amounts to aggregate more contributions in the computation of the $\tau$-step ahead forecast.


\begin{figure}
    \includegraphics[width=0.5\linewidth]{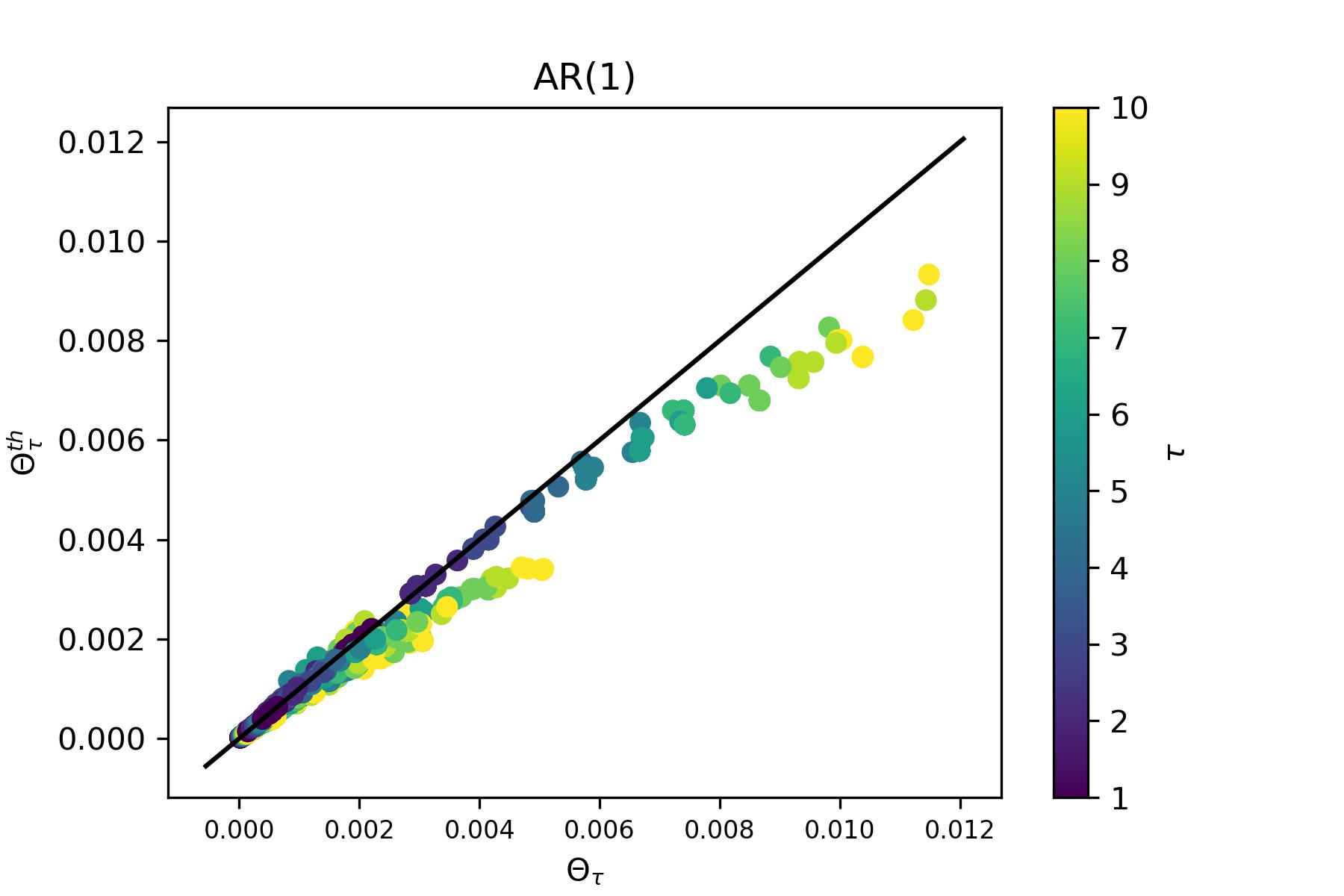}
    \includegraphics[width=0.5\linewidth]{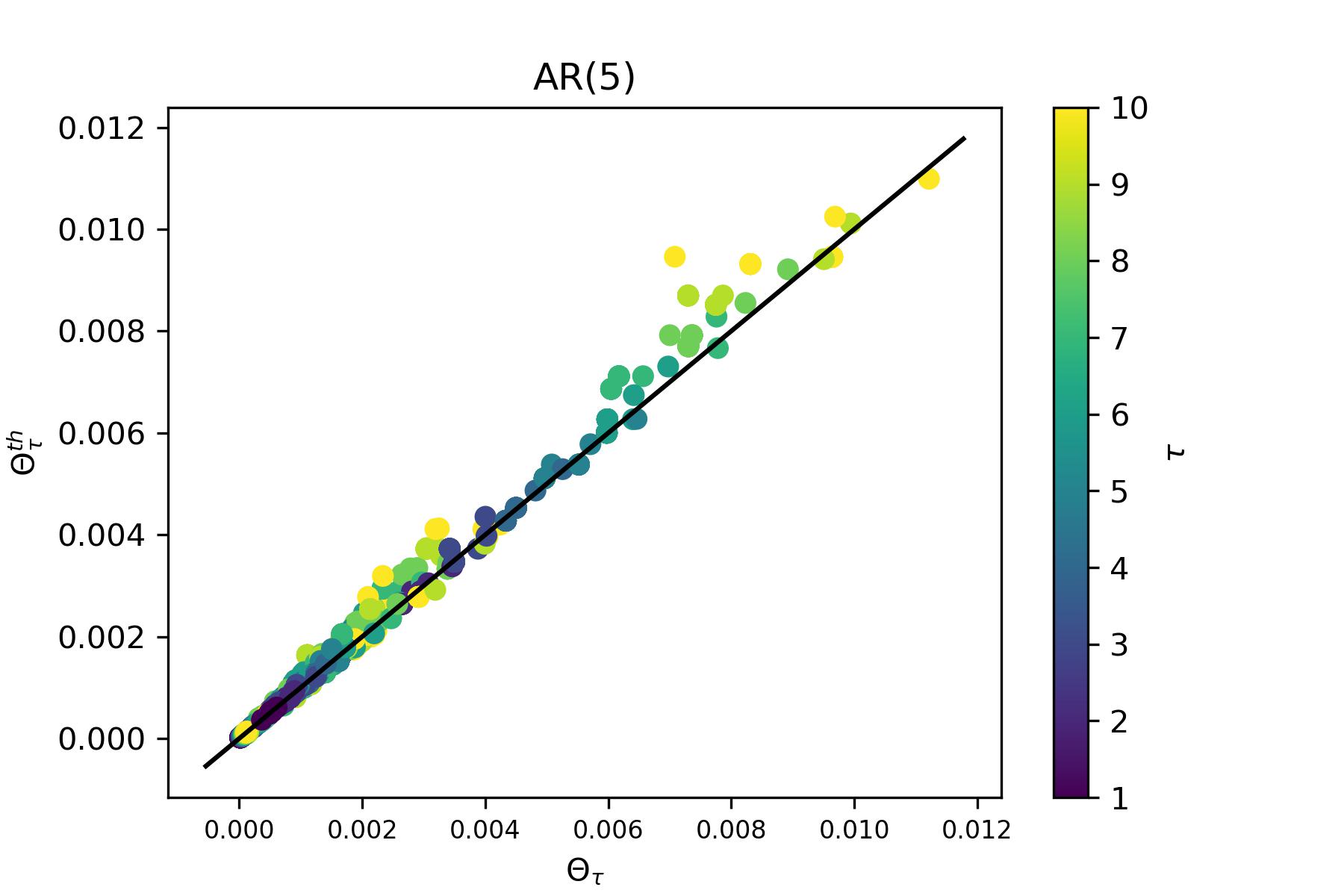}
    \caption{Comparison between the uncertainty intervals that we obtain by estimating them directly from the predictions with the corresponding ones computed using the formula of Eq. \eqref{eq_Theta_th}. Each point is associated with the hedging instrument of a pair and its color is the prediction horizon $\tau$. Realized variances and covariances are fitted by AR(1) (left) and AR(5) (right) processes. The dark line corresponds to the bisector.}
    \label{fig:Theta_tau_emp_vs_formula}
\end{figure}

\section{Additional results}\label{app:plots}

In this appendix, additional plots that are cited in the main text are displayed. They are Fig. \ref{fig:correlations_hist}-\ref{fig:pairtype}-\ref{fig:meanHR_compare}-\ref{fig:AR5}. 

In Fig. \ref{fig:pairtype}, the variable \textit{Pair type} takes the value 1 when both instruments in the pair are equity ETFs, 2 when they are bond ETFs, and 3 when they are commodity ETFs. Values from 4 to 9 correspond to mixed-type co-occurrences.

\begin{figure}
    \centering
    \includegraphics[width=0.5\linewidth]{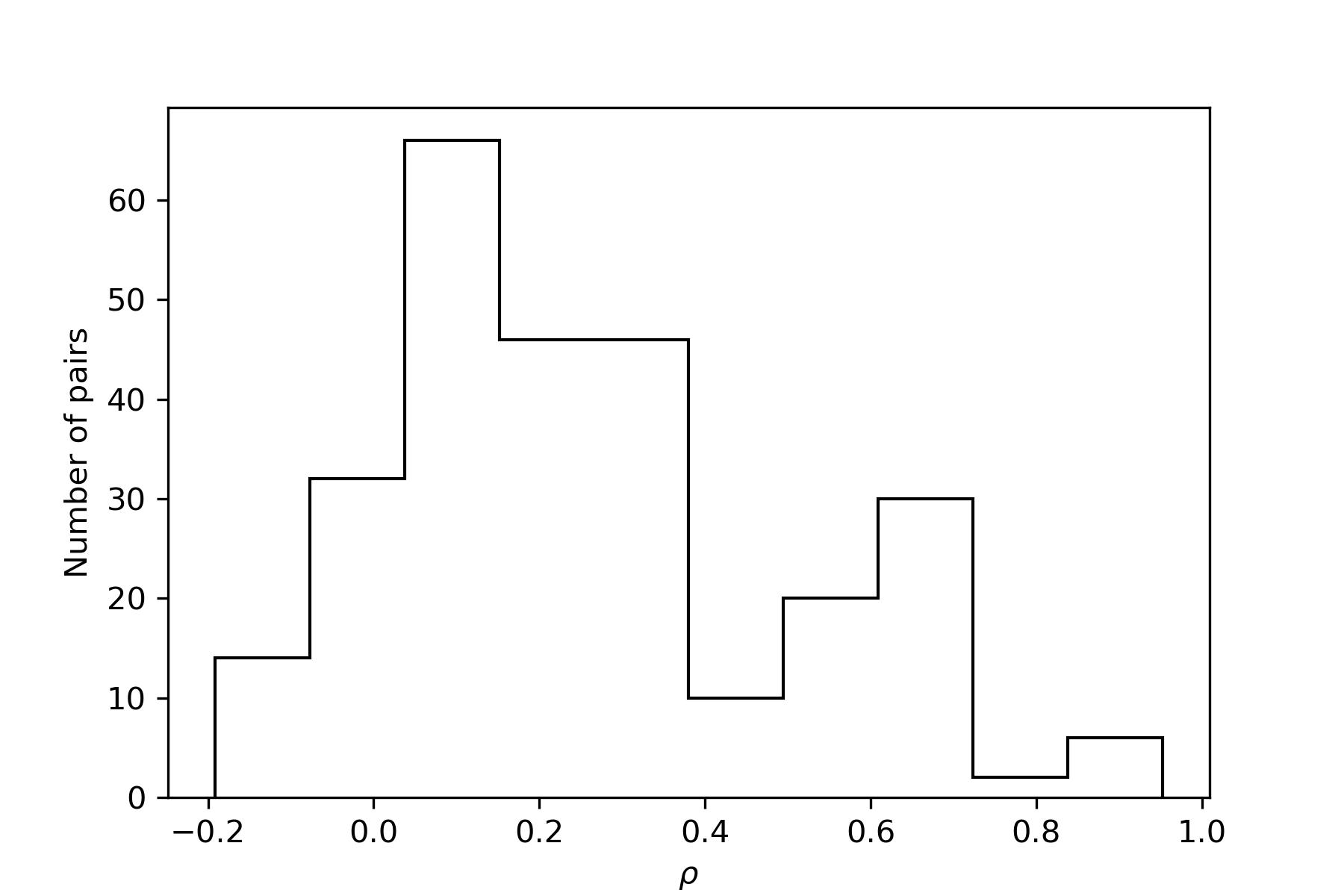}
    \caption{Histogram of return correlations between instrument pairs in the data set.}
    \label{fig:correlations_hist}
\end{figure}

\begin{figure}
\includegraphics[width=0.5\linewidth]{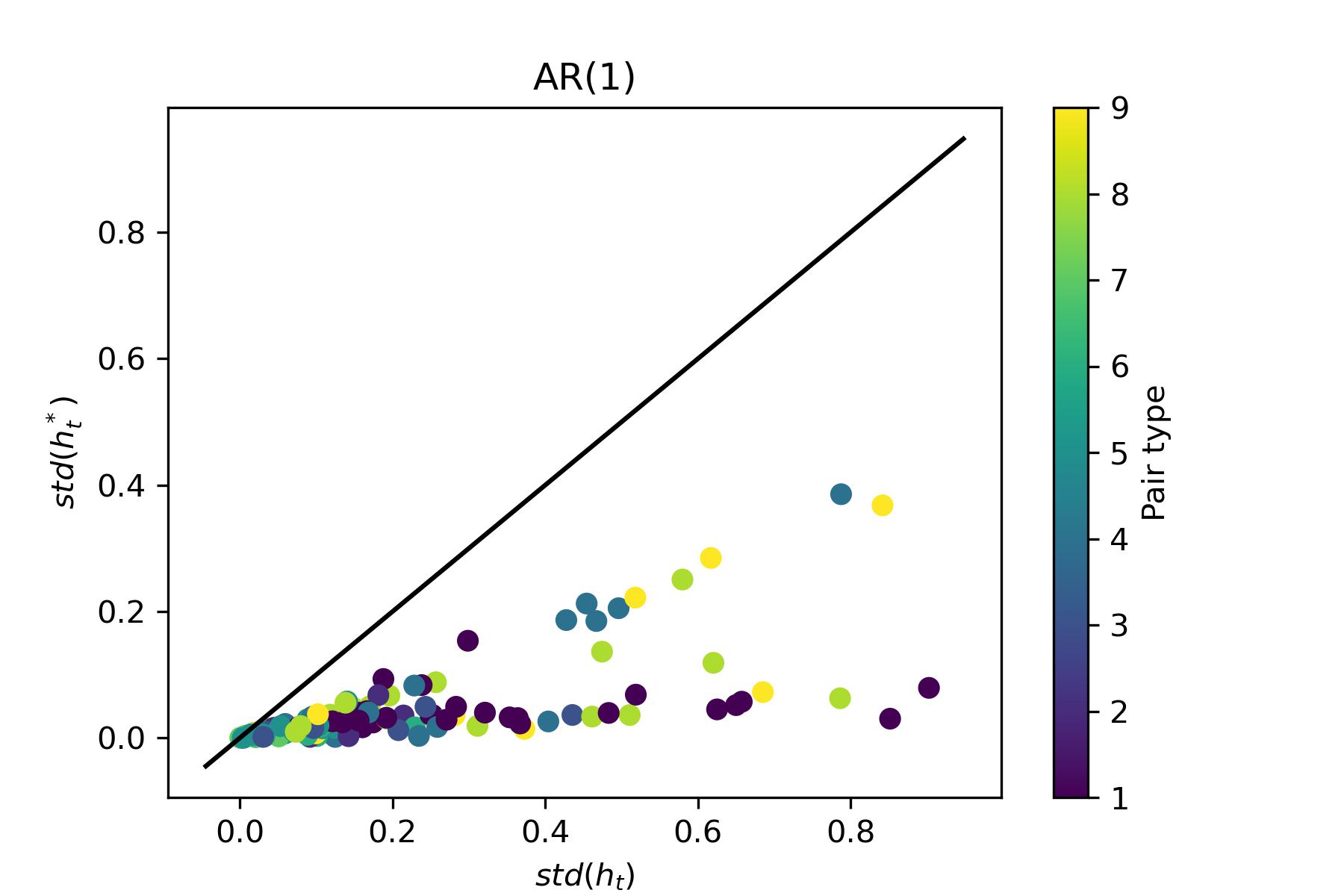} \includegraphics[width=0.5\linewidth]{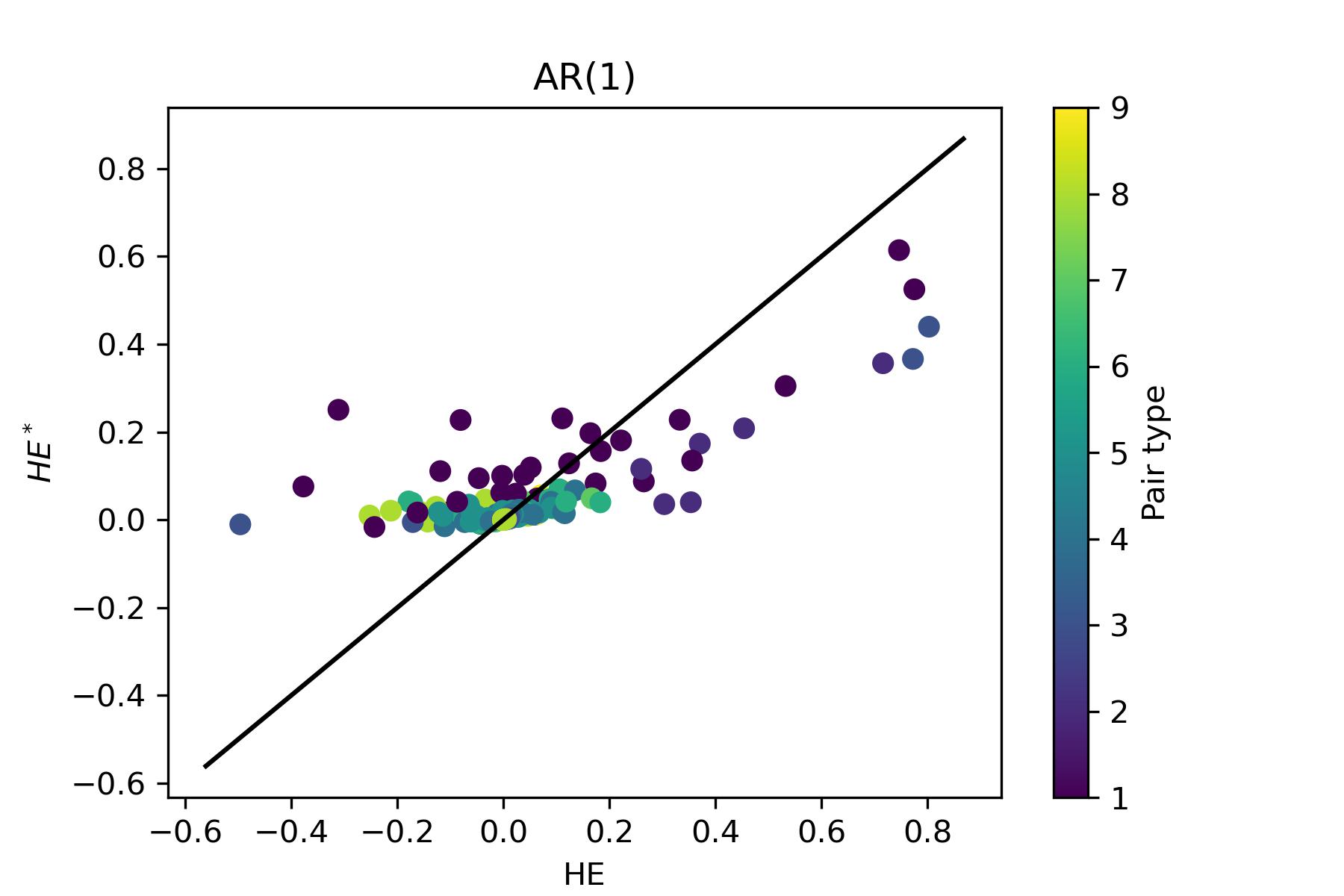}
\includegraphics[width=0.5\linewidth]{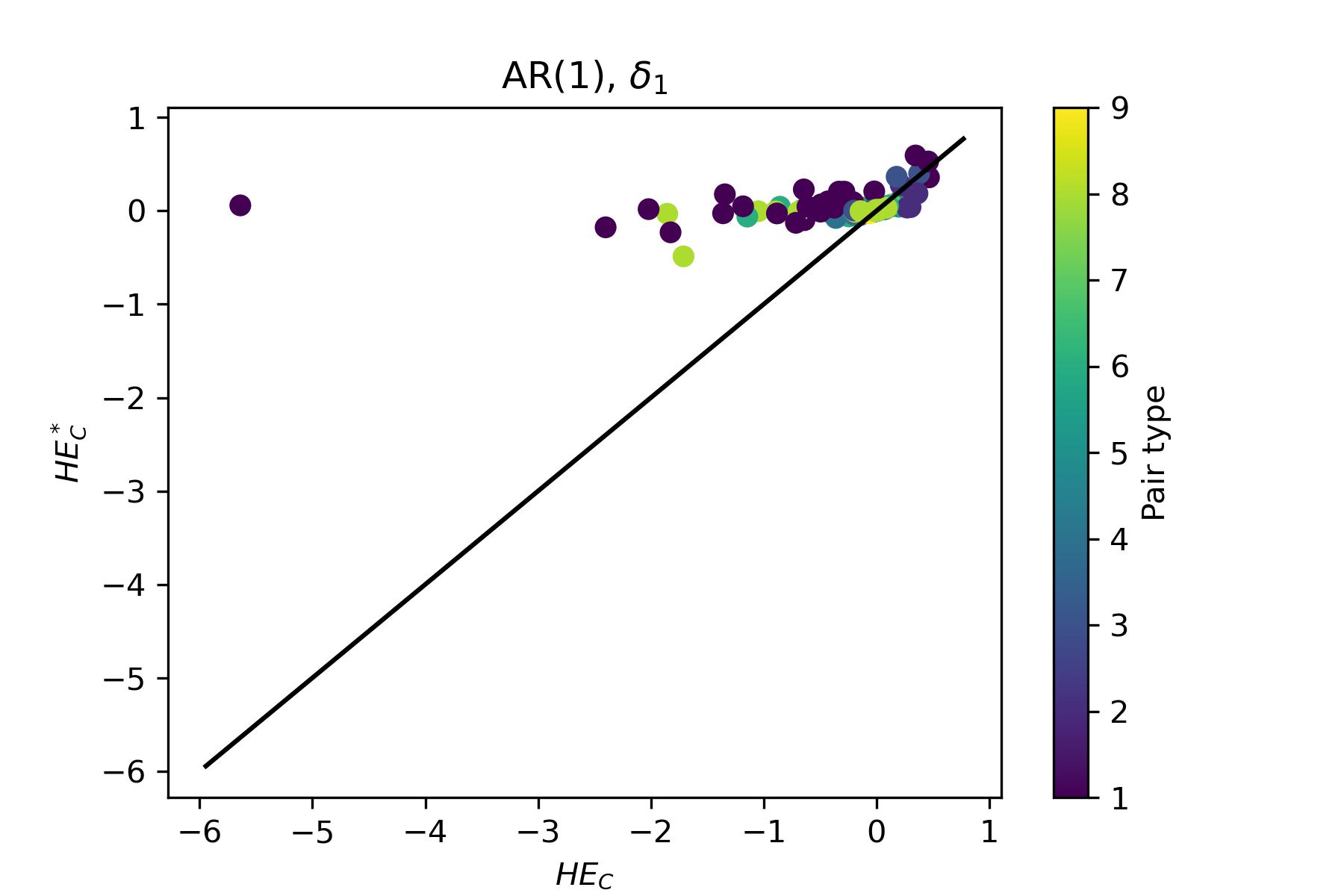}
\includegraphics[width=0.5\linewidth]{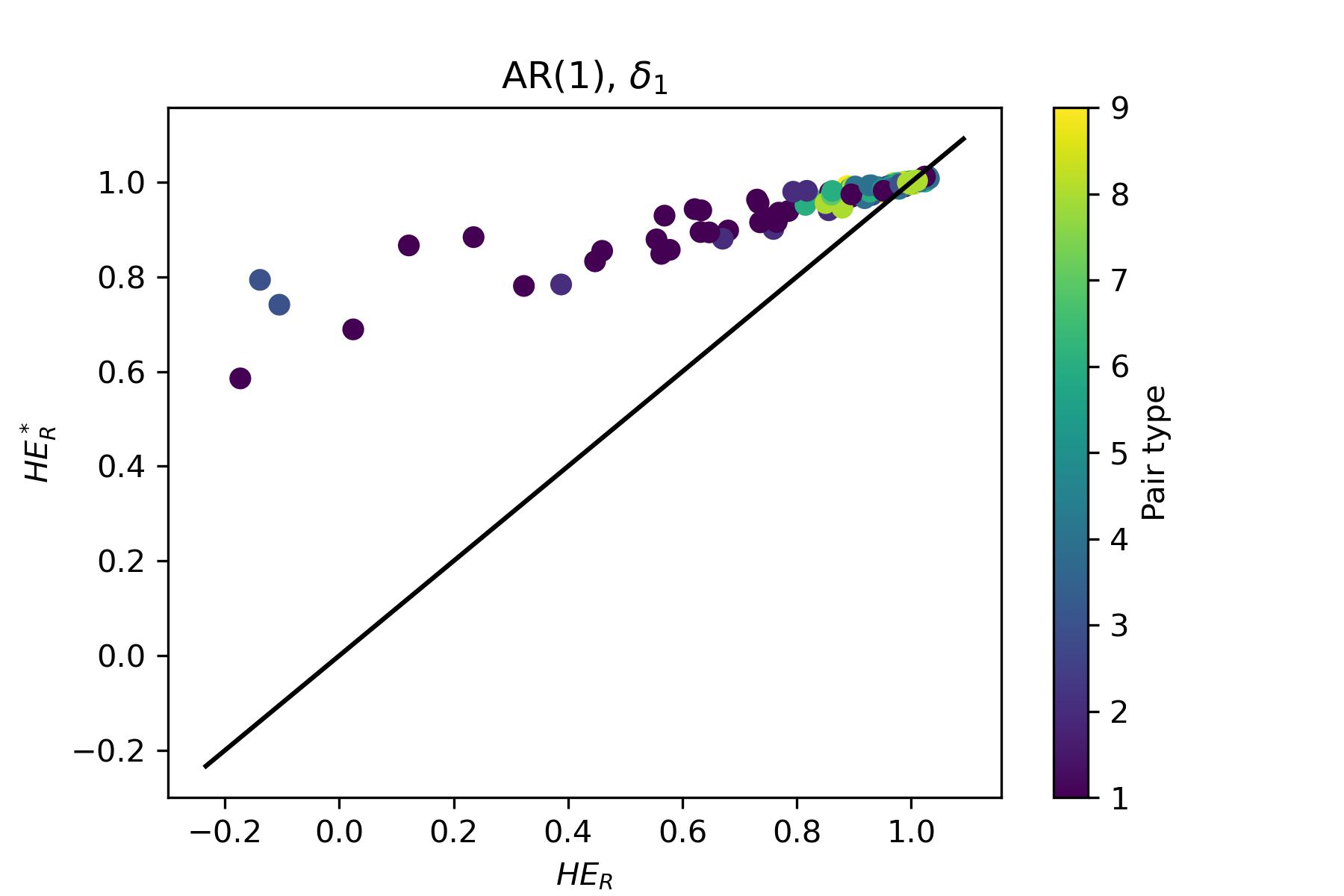}
    \caption{Comparison between the outputs of the standard and robust methodologies when realized variances and covariances are fitted by AR(1) processes. Each point is associated with a pair of instruments and its color is the \textit{Pair type}, i.e., a variable that represents the type of instruments. The threshold $\delta_1$ is the first quartile of the asset returns. The dark line corresponds to the bisector.}
    \label{fig:pairtype}
\end{figure}

\begin{figure}
    \centering
    \includegraphics[width=0.5\linewidth]{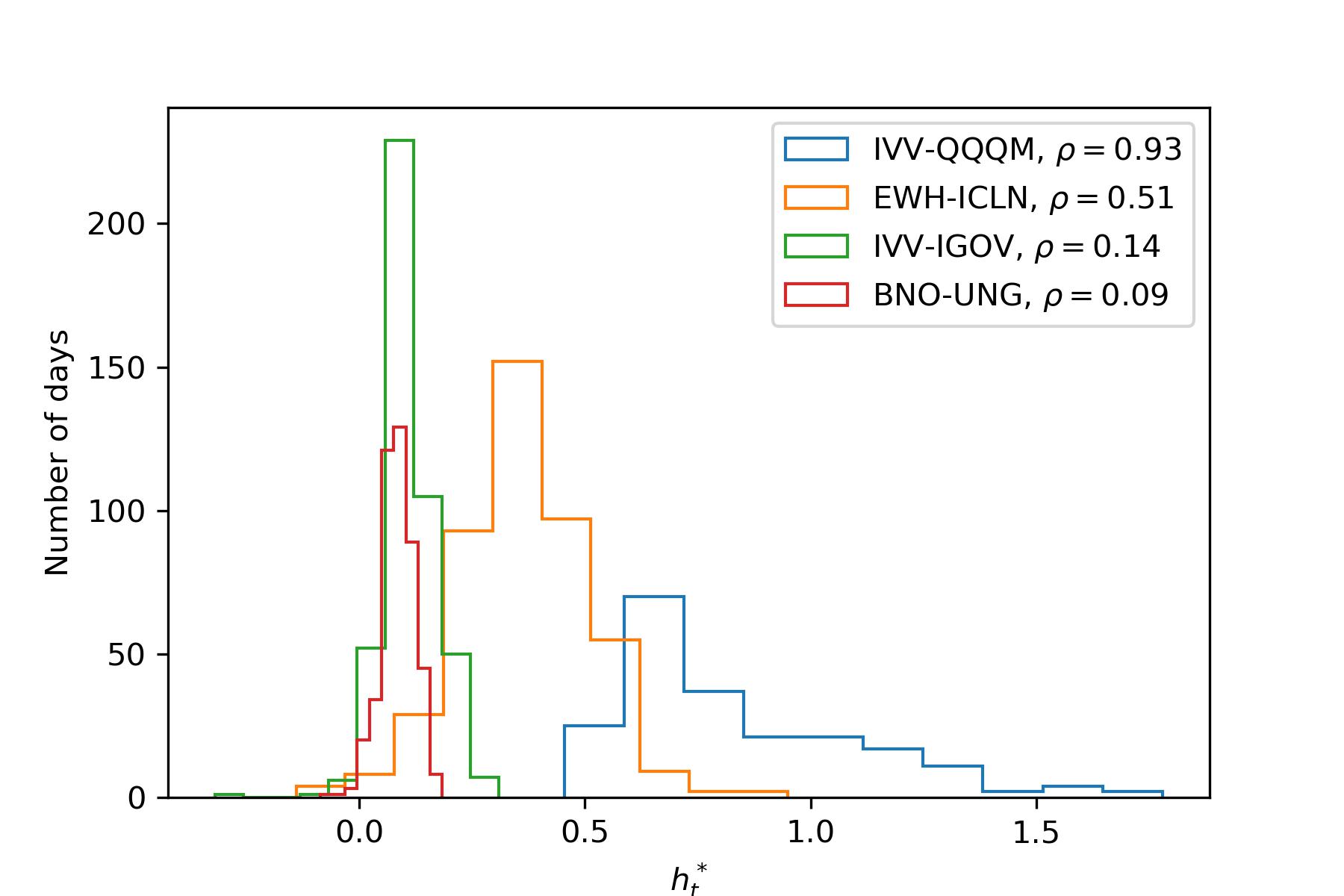}
    \caption{Comparison between the hedge ratios for pairs of instruments with different return correlations $\rho$. The first label refers to the asset and the second to the hedging instrument.}
    \label{fig:meanHR_compare}
\end{figure}

\begin{figure}
\includegraphics[width=0.5\linewidth]{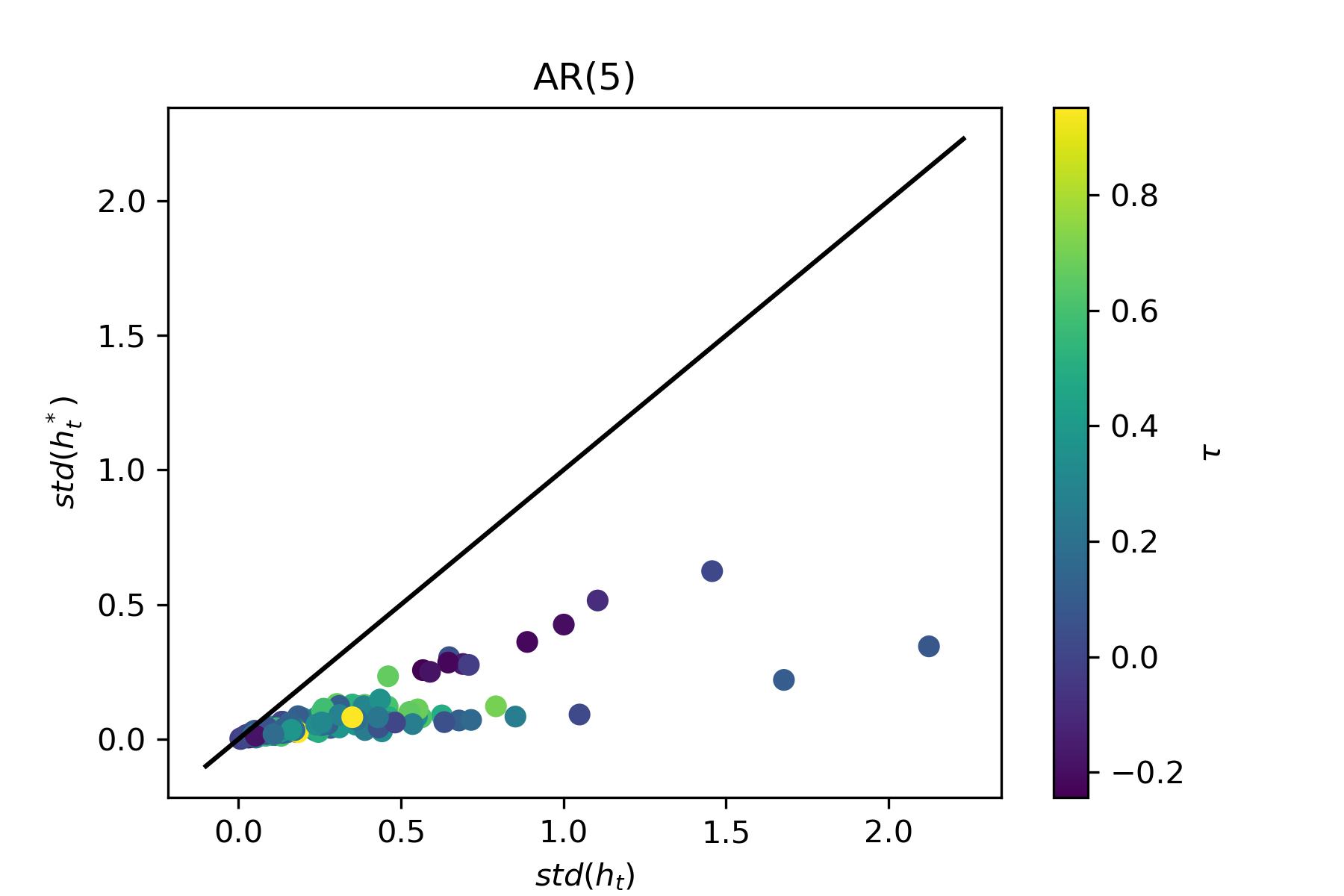} \includegraphics[width=0.5\linewidth]{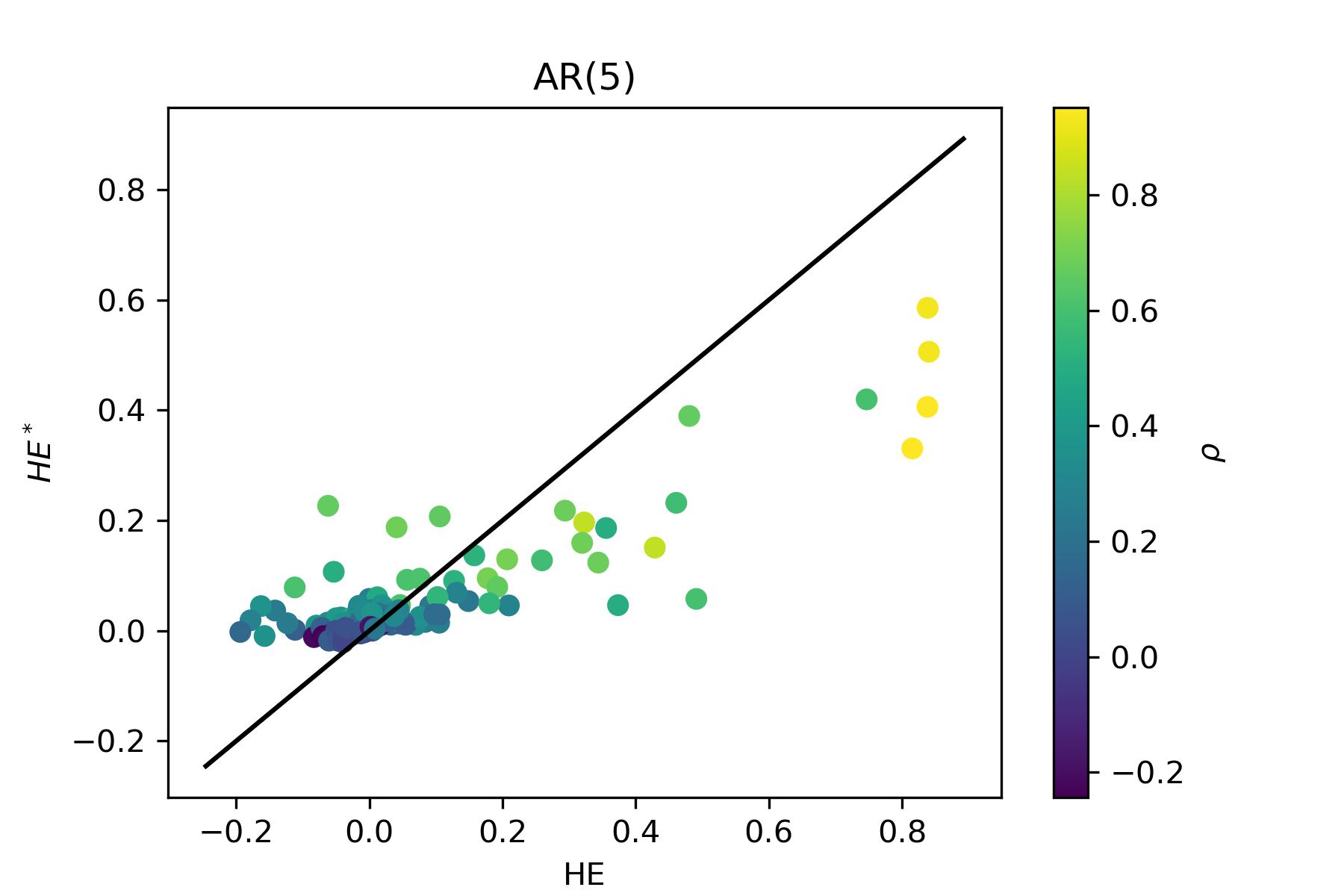}
\includegraphics[width=0.5\linewidth]{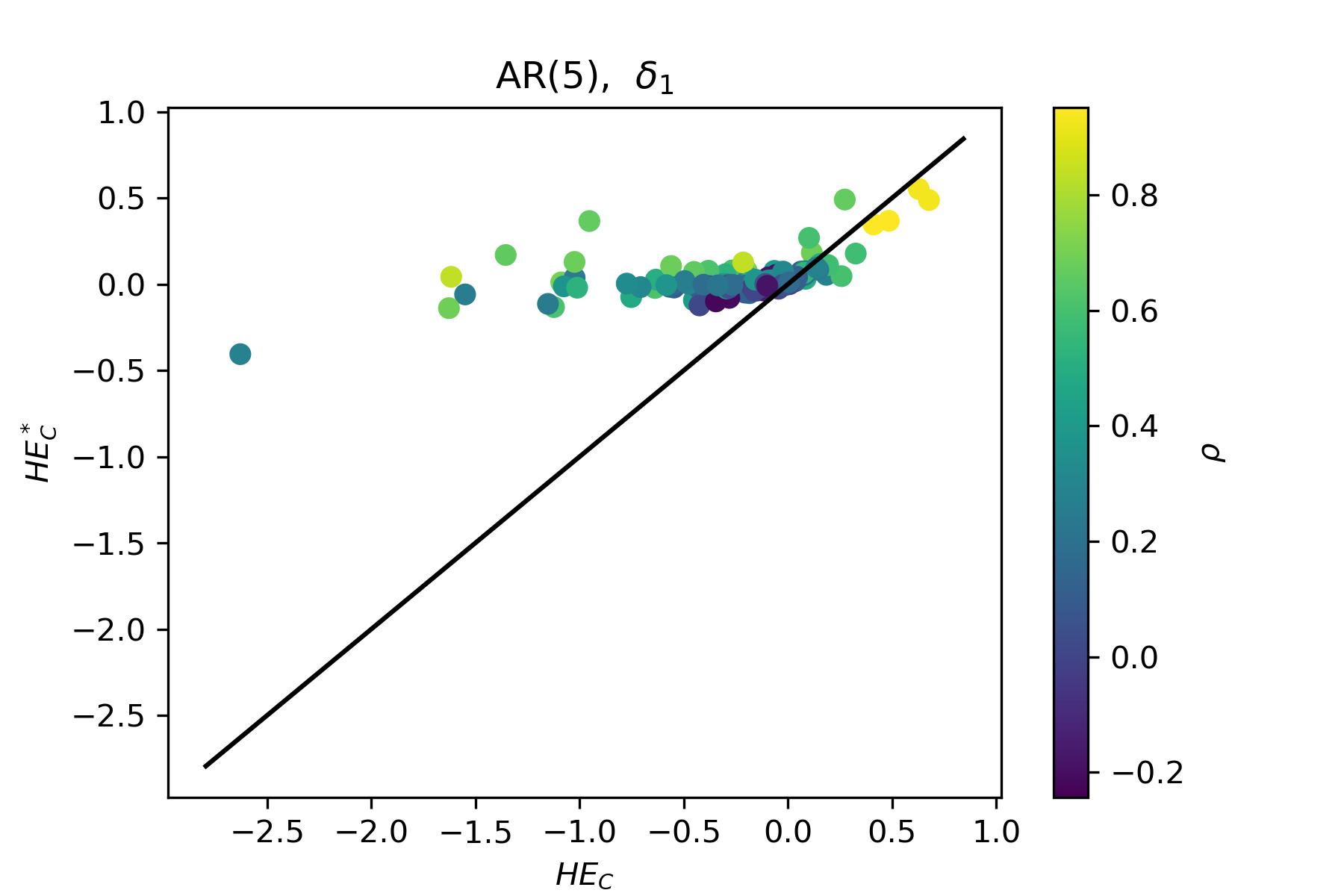}
\includegraphics[width=0.5\linewidth]{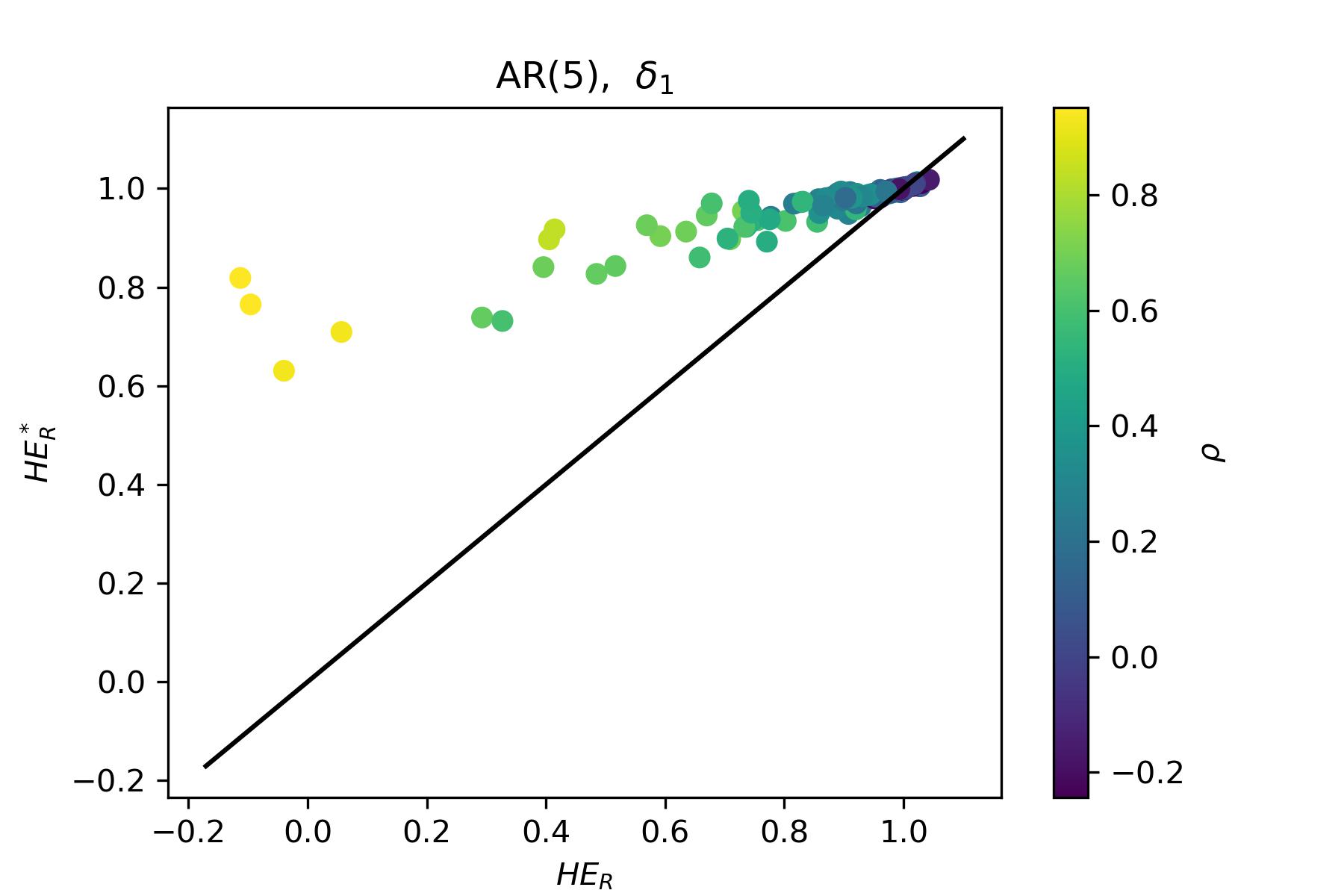}
    \caption{Comparison between the outputs of the standard and robust methodologies when realized variances and covariances are fitted by AR(5) processes. Each point is associated with a pair of instruments and its color is the return correlation $\rho$ of the pair. The threshold $\delta_1$ is the first quartile of the asset returns. The dark line corresponds to the bisector.}
    \label{fig:AR5}
\end{figure}

\section{HAR-type model for variance and covariance}\label{app:HAR}
In this appendix, we perform a robustness check by modeling variances and covariances with an HAR-type model \citep{corsi}. In particular, we consider
\begin{equation*}
    y_{t+1} = \phi_0 + \phi_1 y_t + \frac{\phi_2}{4}\sum_{i=1}^4 y_{t-i},
\end{equation*}
and the coefficients $\phi_j$ are estimated by means of an OLS regression. We notice that this HAR-type specification is analogous to the AR(5) model with parameters $\phi_0, \phi_1, \frac{\phi_2}{4}, \frac{\phi_2}{4}, \frac{\phi_2}{4}, ,\frac{\phi_2}{4}$. In Fig. \ref{fig:RMSE_HAR_AR} we show the Root Mean Squared Error (RMSE) that we obtain out-of-sample when realized variances are predicted with the HAR-type, the AR(5), and the AR(1) models. As expected, the latter is outperformed by the two other models. On the other hand, the HAR-type model performs slightly better than the AR(5).

Given the volatility predictions, we run our robust methodology for hedging and in Fig. \ref{fig:HAR_1}-\ref{fig:HAR_2}, a comparison between the results that we obtain by relying on the HAR-type and AR(5) models are provided. We observe that when $\tau =1$, the HAR-type specification is associated to a lower standard deviation of the hedge ratio than the AR(5). Similarly to what happens when we compare AR(1) and AR(5) in Subsection \ref{subsec:AR1_vs_AR5}, the HAR-type model adapts less effectively to shocks than the AR(5) model. Indeed, the latter weights differently observations at lower lags while the former considers their average. However, the effectiveness metrics are approximately equivalent for the two models.

\begin{figure}
    \centering
    \includegraphics[width=0.5\linewidth]{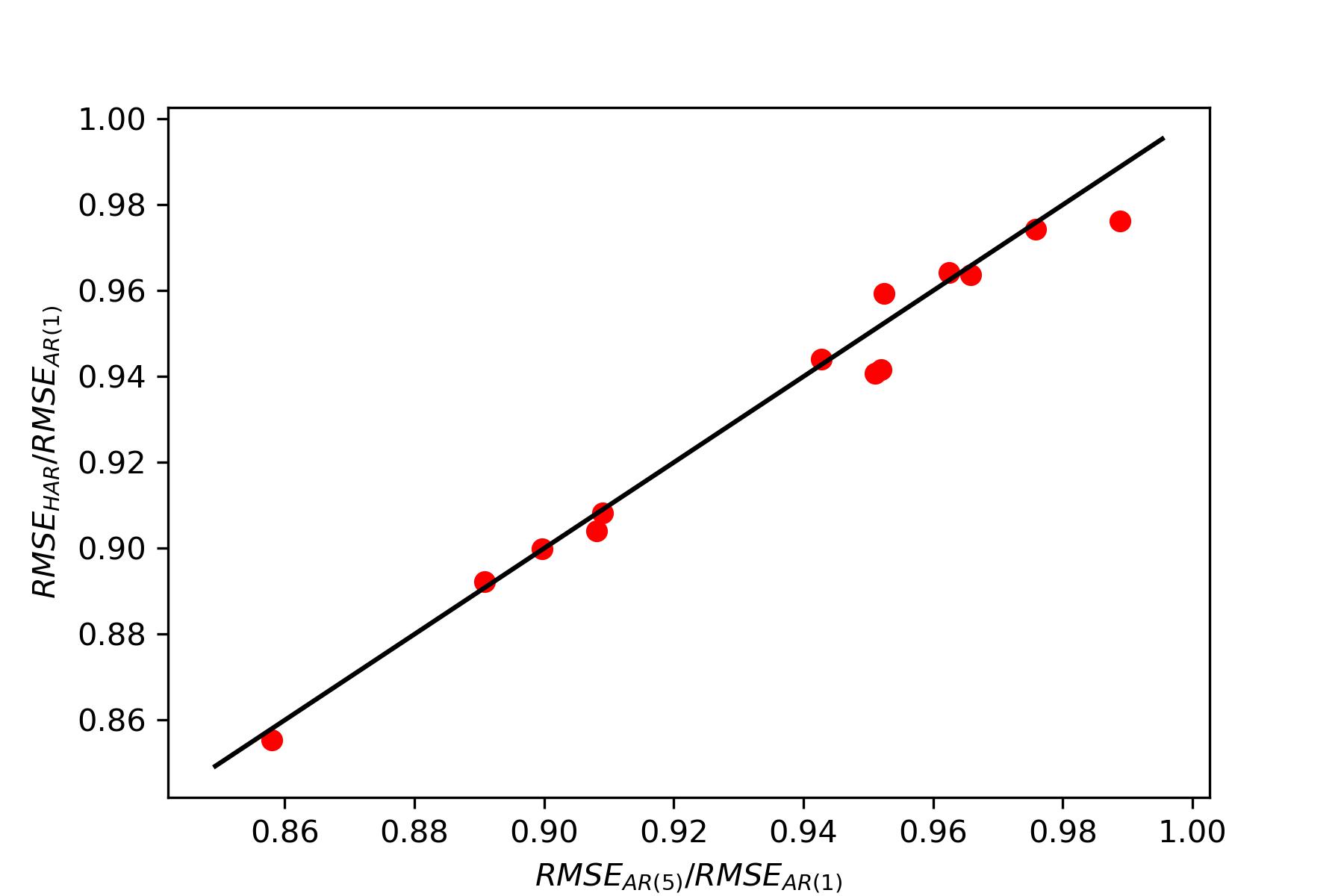}
    \caption{Comparison between the Root Mean Squared Error (RMSE) that we obtain out-of-sample when realized variances are predicted with the HAR-type and the AR(5) models. Values are normalized with the corresponding RMSEs obtained with the AR(1) model. The dark line corresponds to the bisector.}
    \label{fig:RMSE_HAR_AR}
\end{figure}

\begin{figure}
\includegraphics[width=0.5\linewidth]{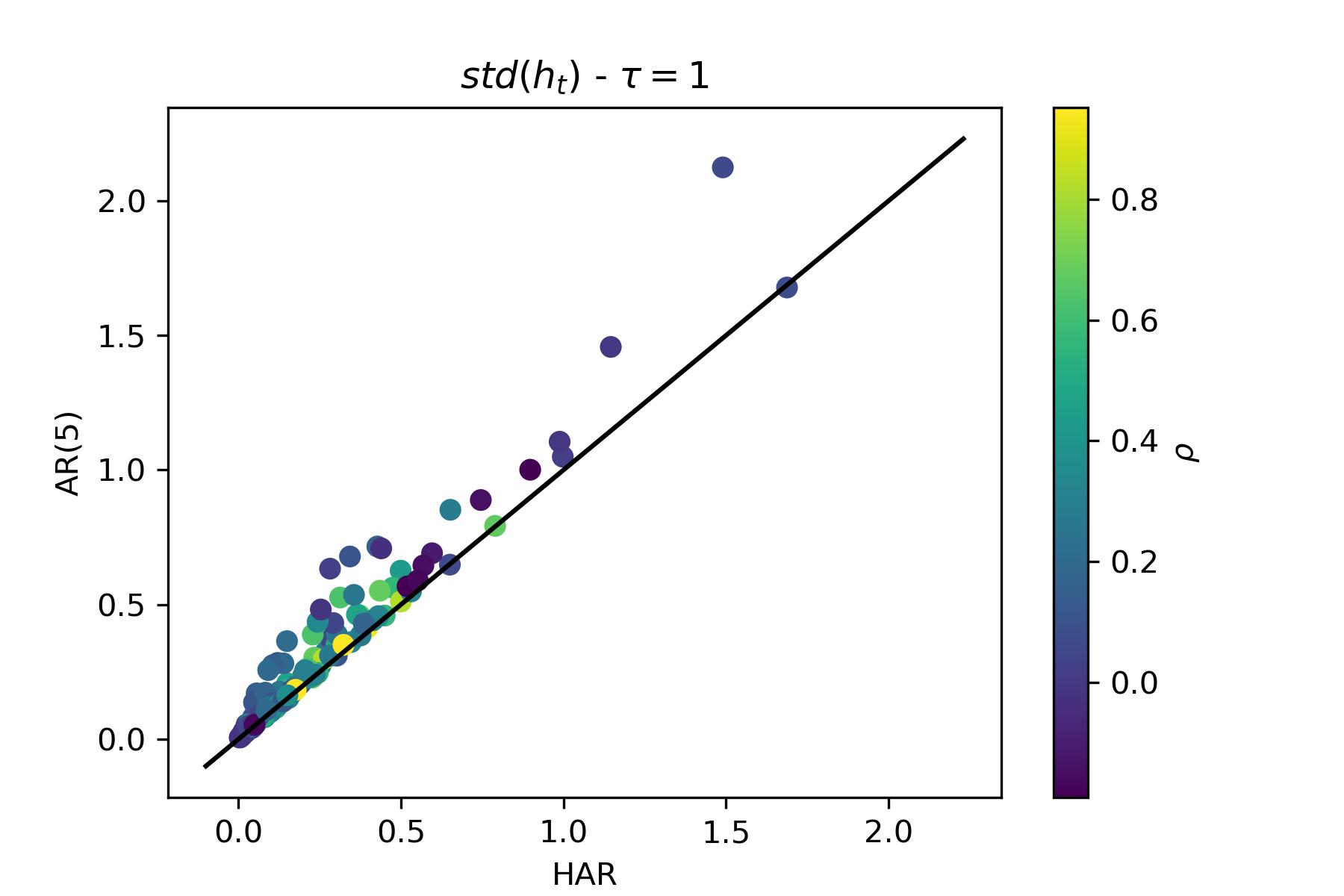}
\includegraphics[width=0.5\linewidth]{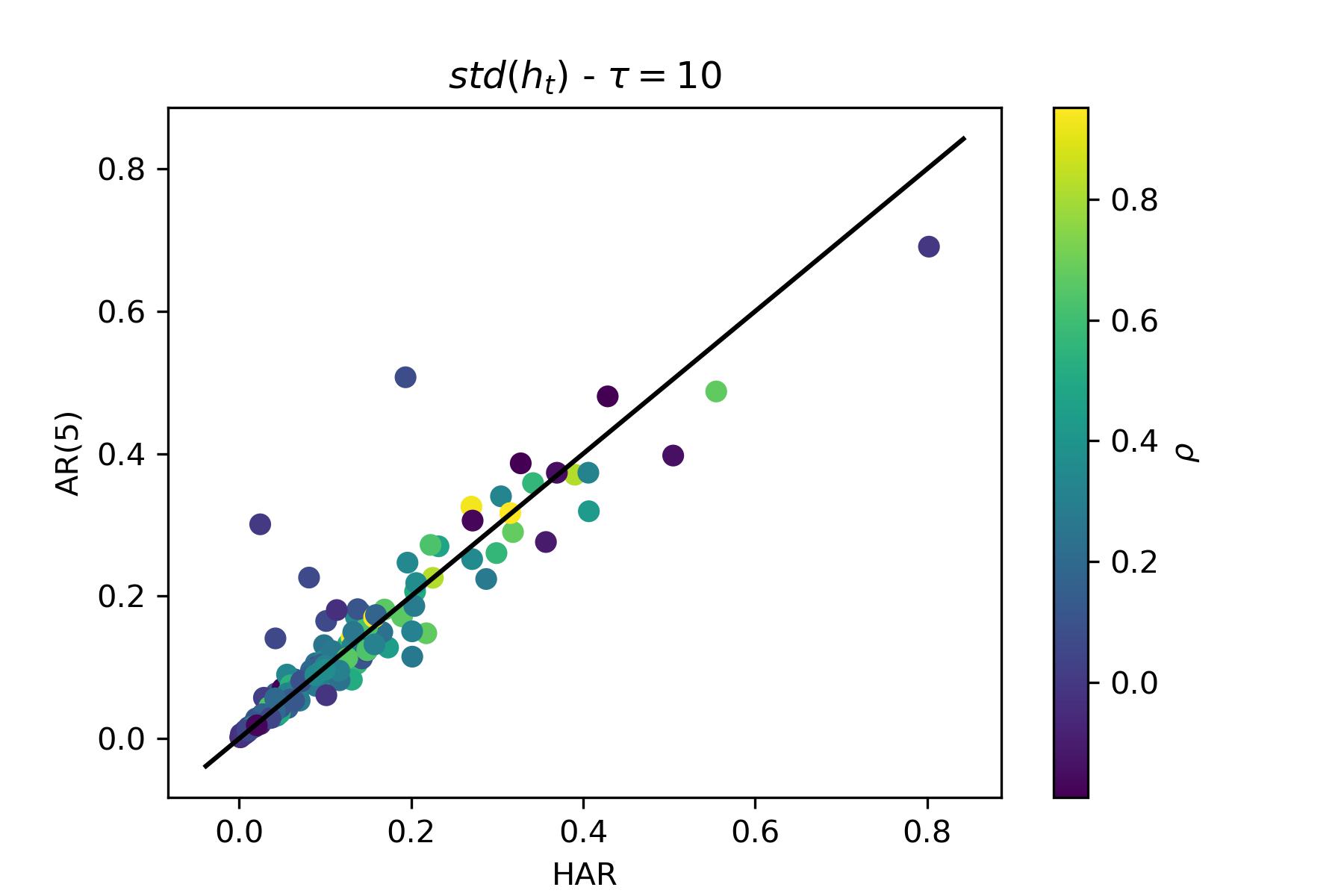}
\includegraphics[width=0.5\linewidth]{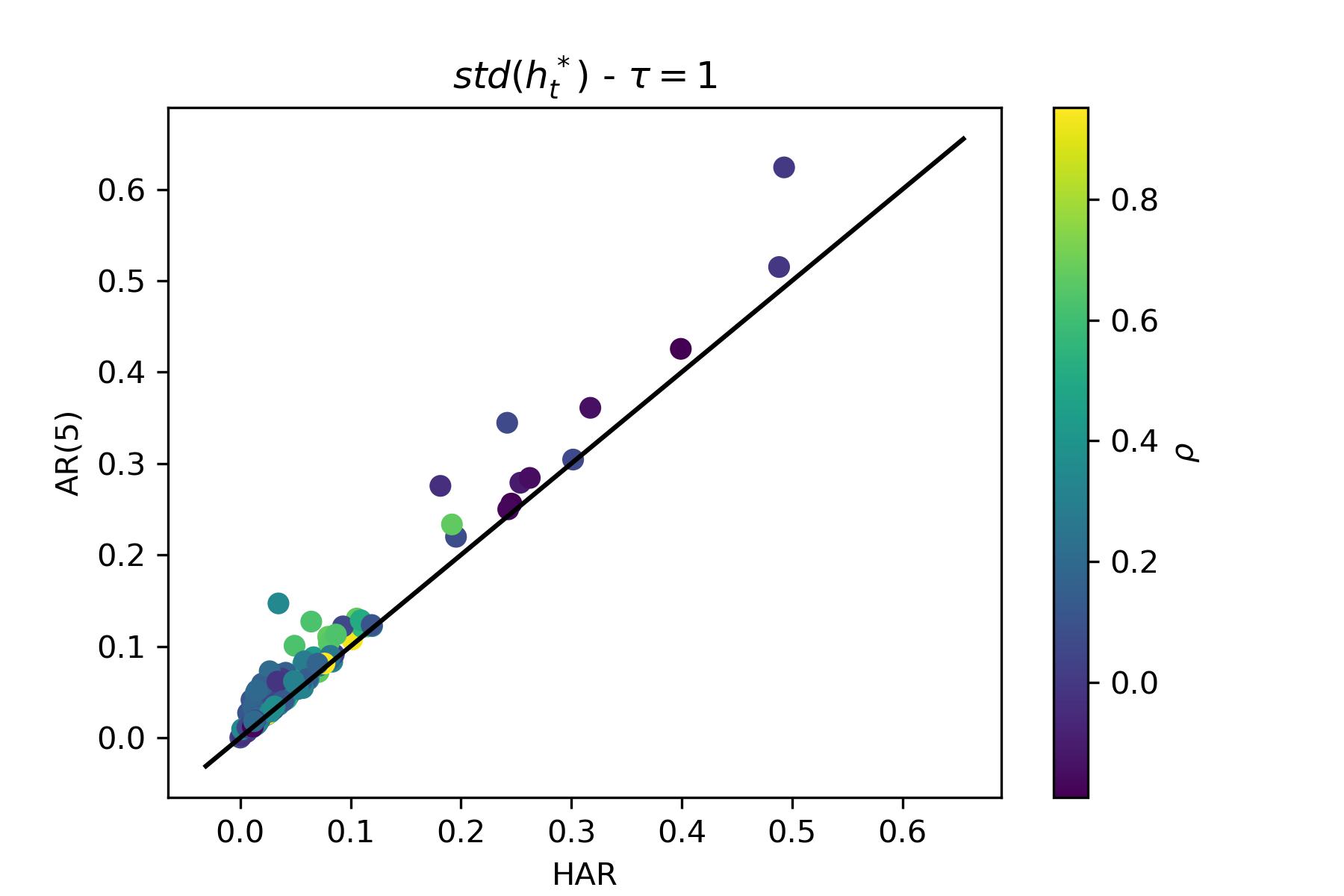}
\includegraphics[width=0.5\linewidth]{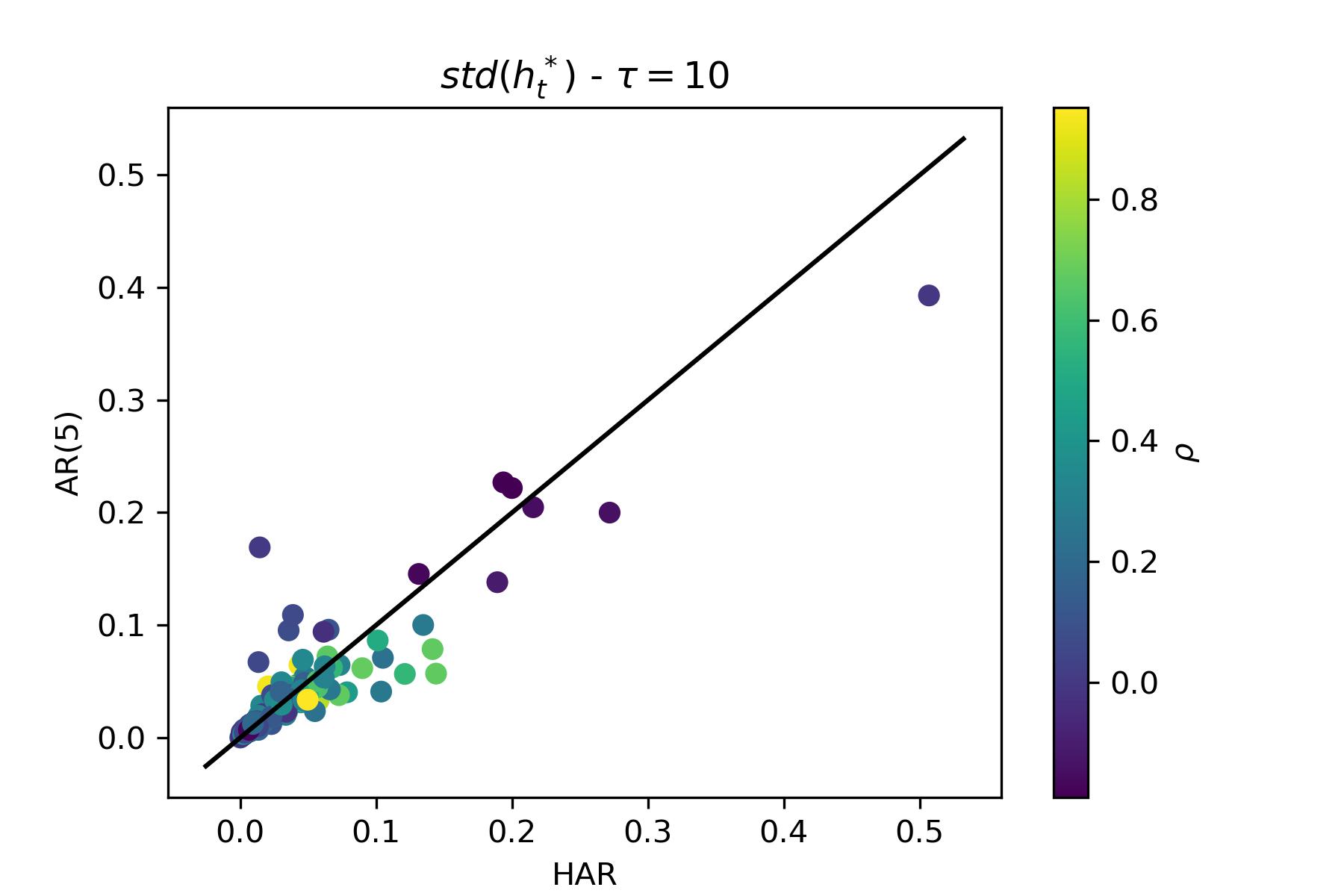}
\includegraphics[width=0.5\linewidth]{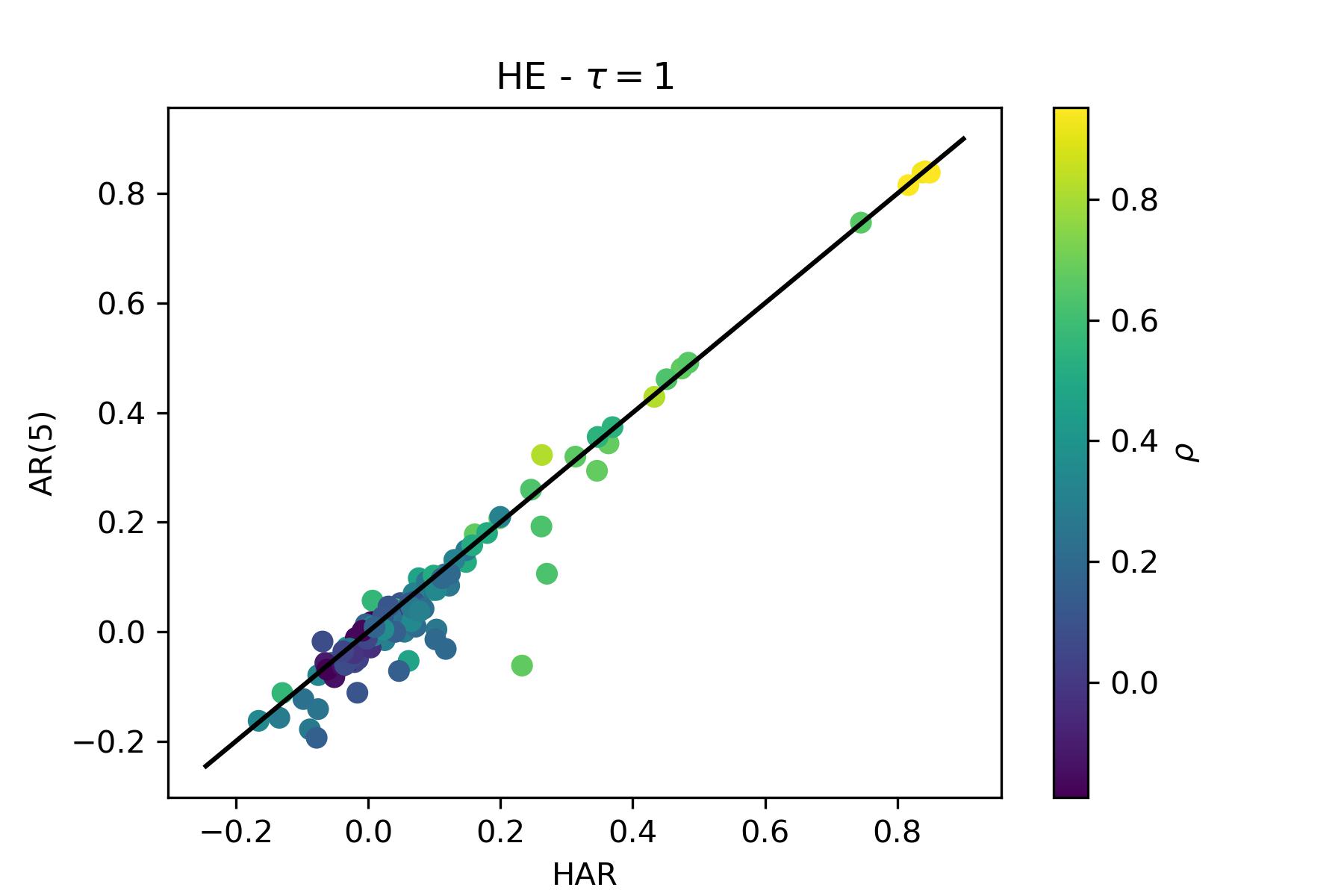}
\includegraphics[width=0.5\linewidth]{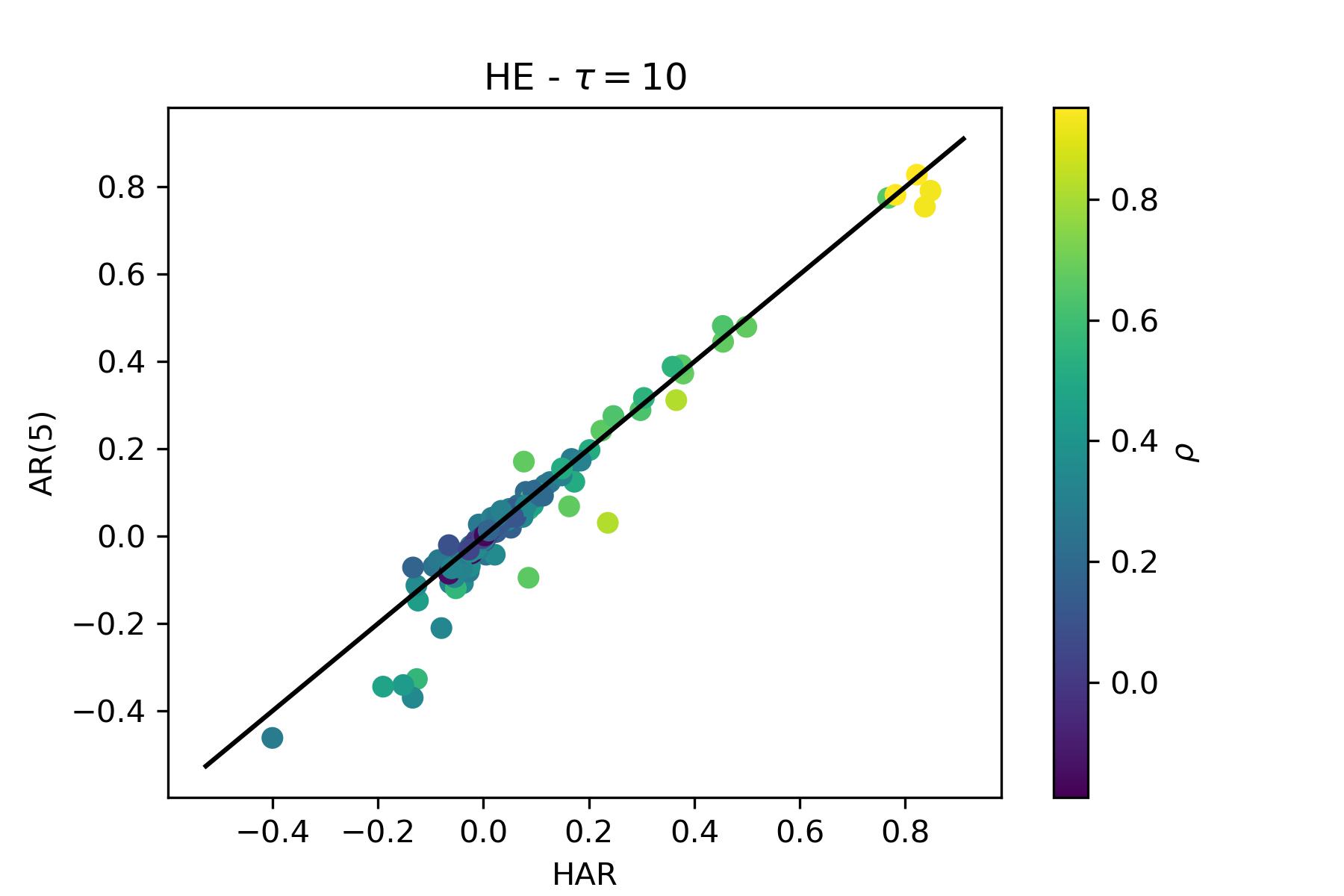}
\includegraphics[width=0.5\linewidth]{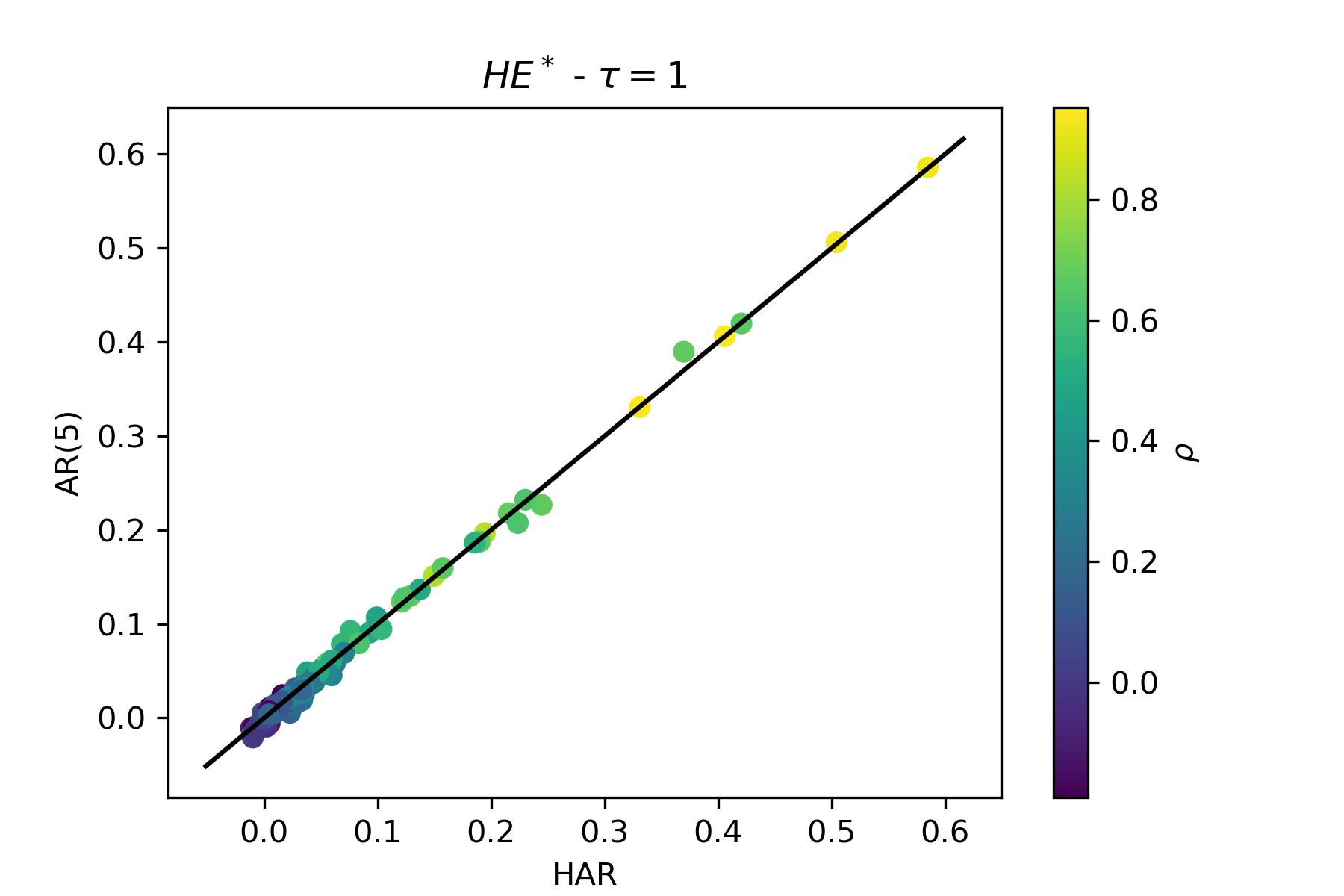}
\includegraphics[width=0.5\linewidth]{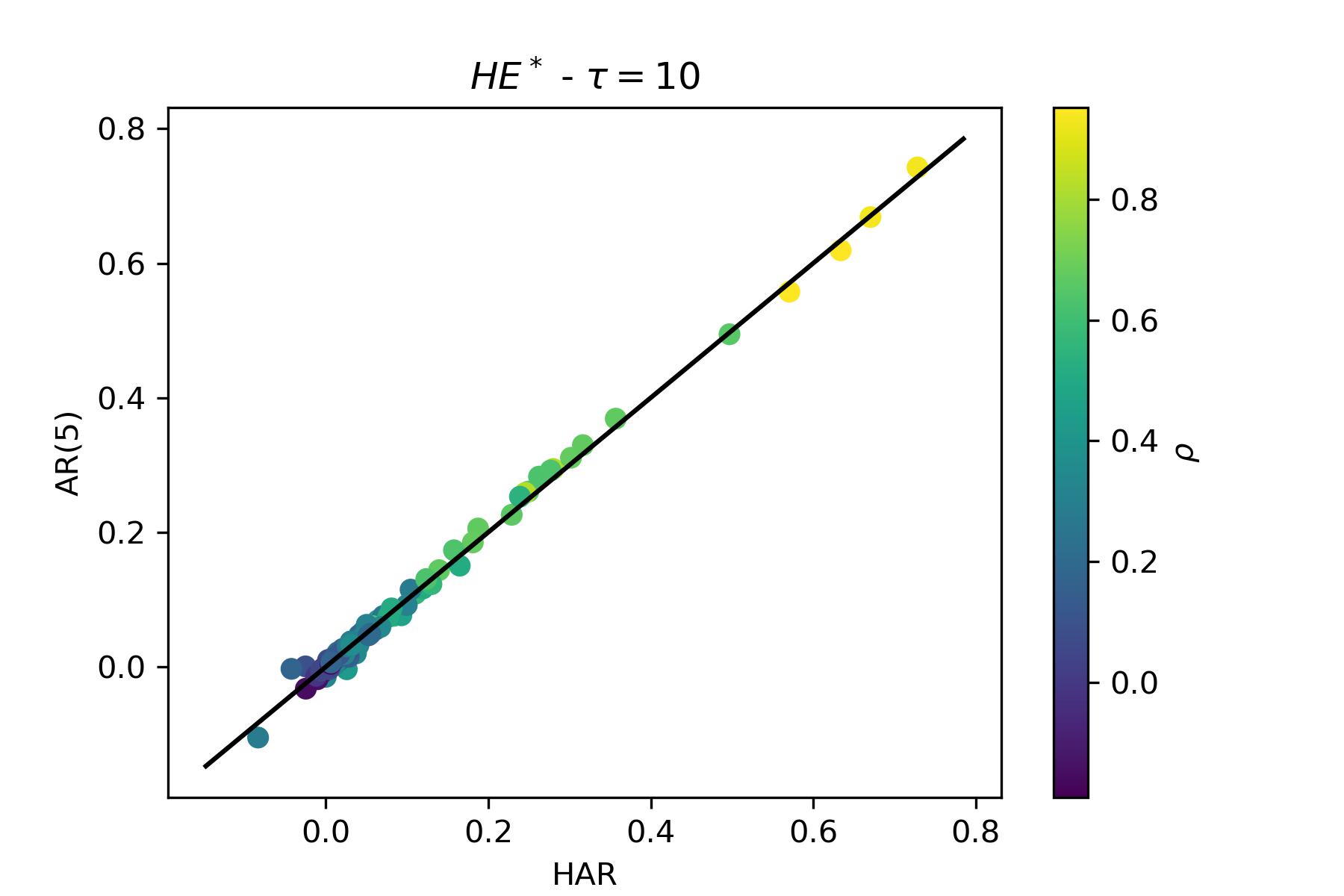}
    \caption{Comparison between the results that are obtained when realized variances and covariances are fitted by an HAR-type model and an AR(5) process. Two different prediction horizons are considered, i.e., $\tau = 1$ and $\tau = 10$. Each point is associated with a pair of instruments and its color is the return correlation $\rho$ of the pair. The dark line corresponds to the bisector.}
    \label{fig:HAR_1}
\end{figure}

\begin{figure}
\includegraphics[width=0.5\linewidth]{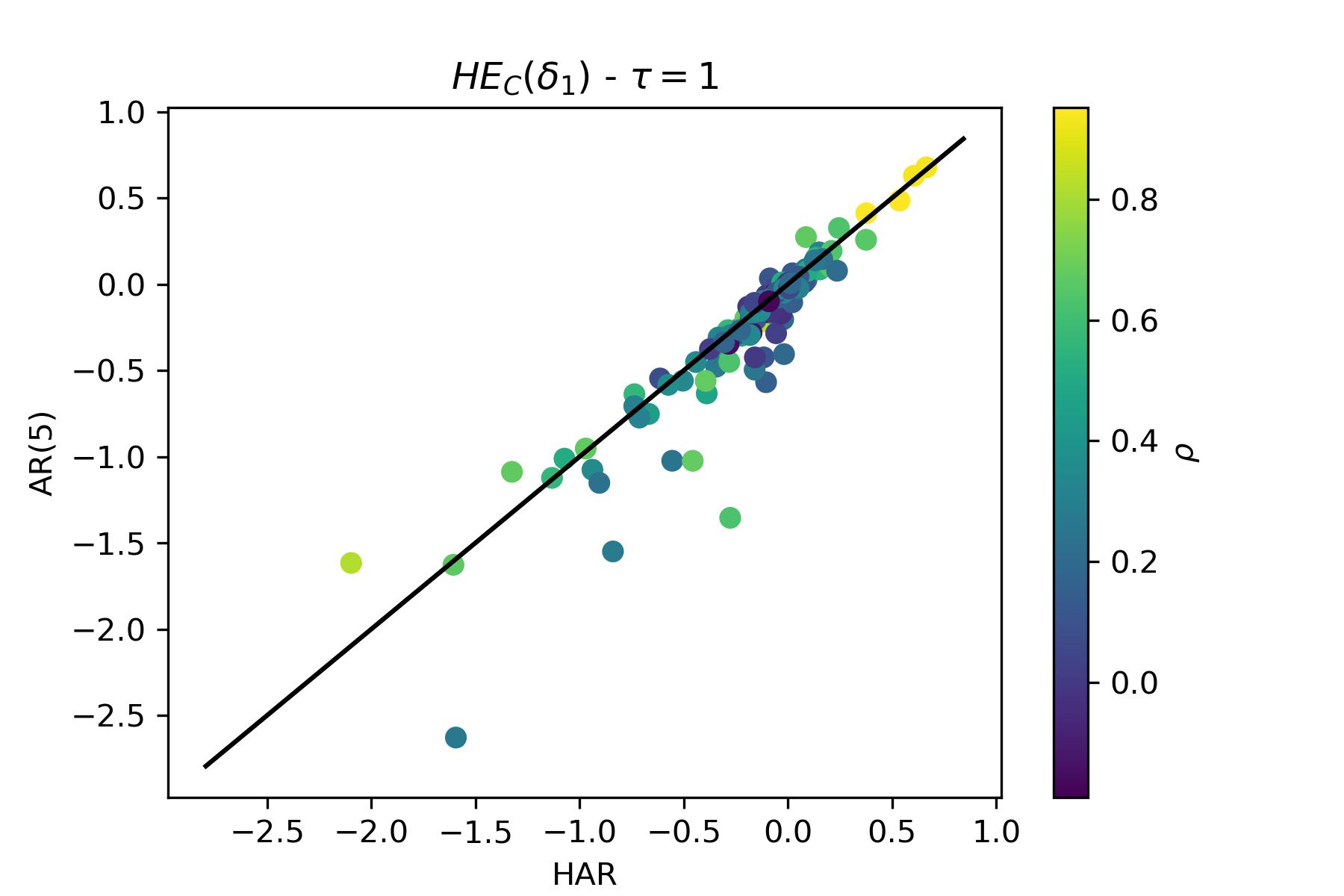}
\includegraphics[width=0.5\linewidth]{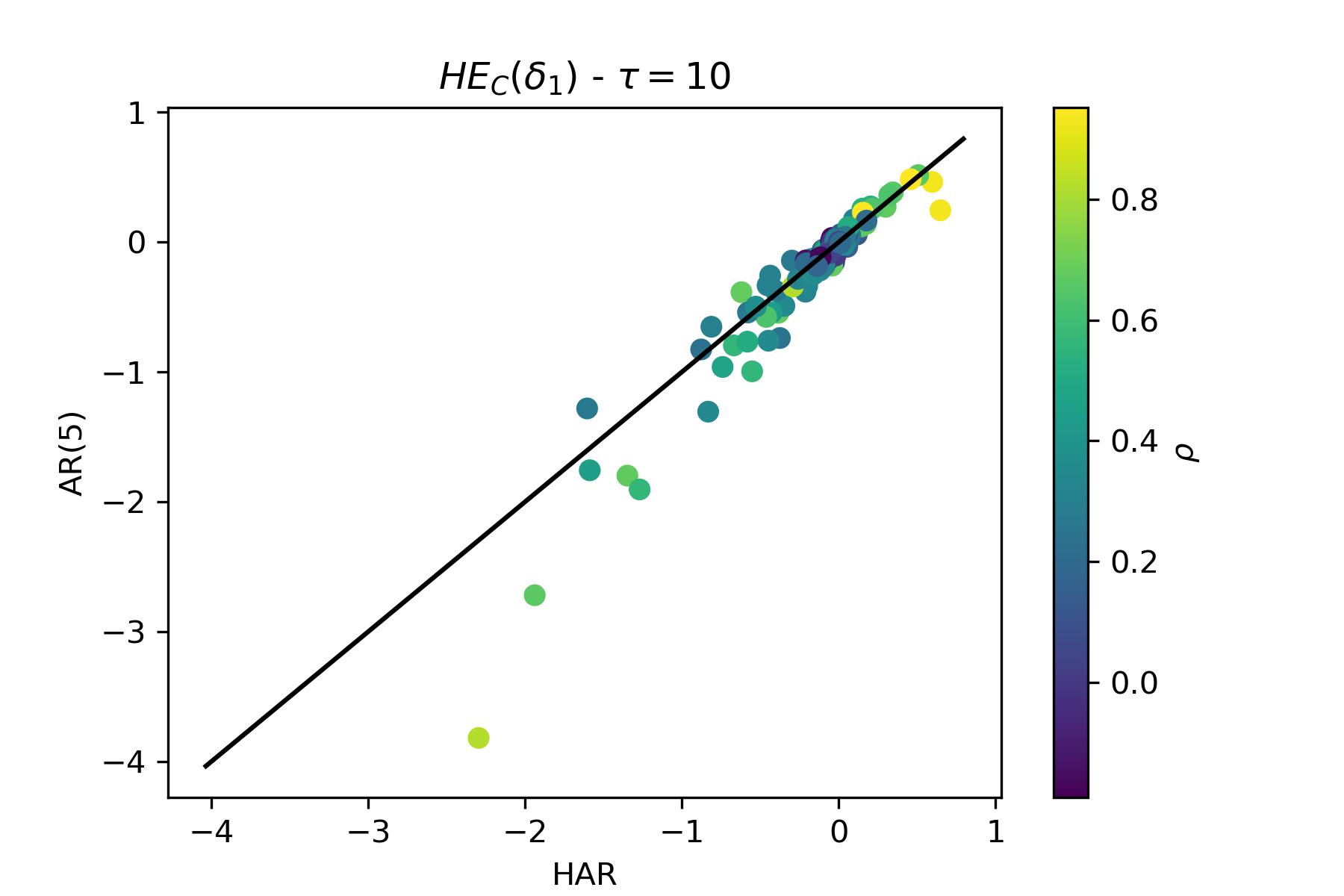}
\includegraphics[width=0.5\linewidth]{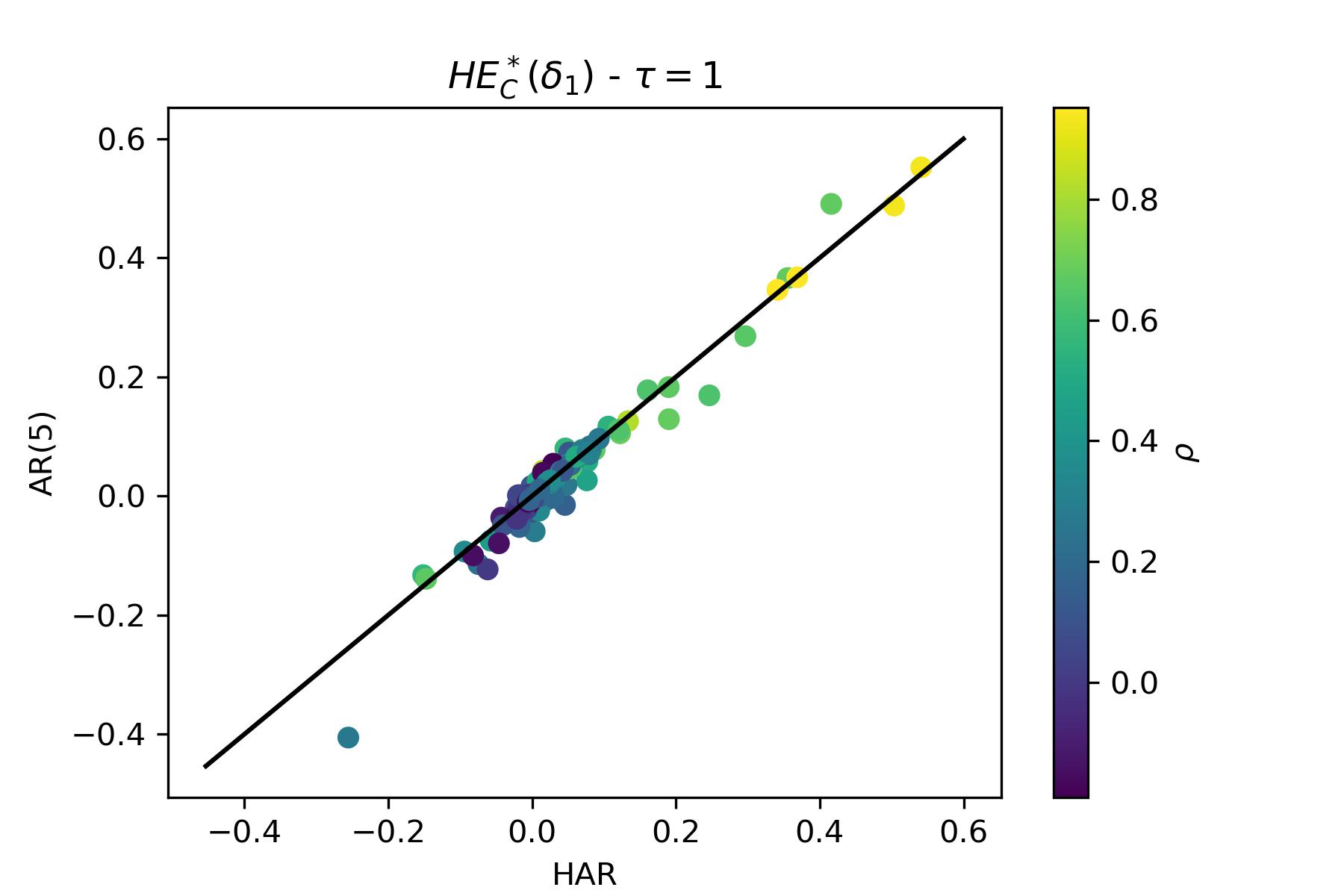}
\includegraphics[width=0.5\linewidth]{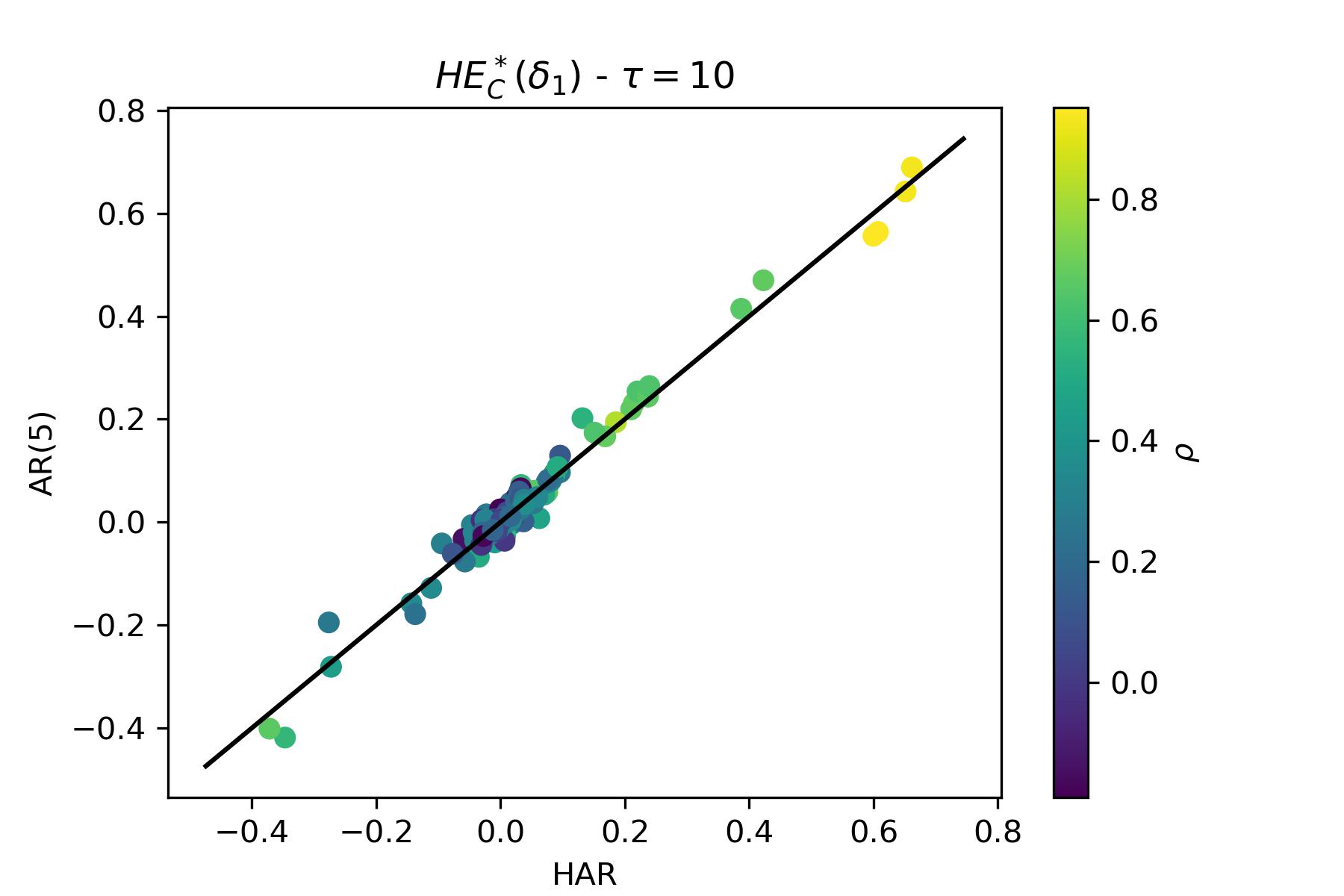}
\includegraphics[width=0.5\linewidth]{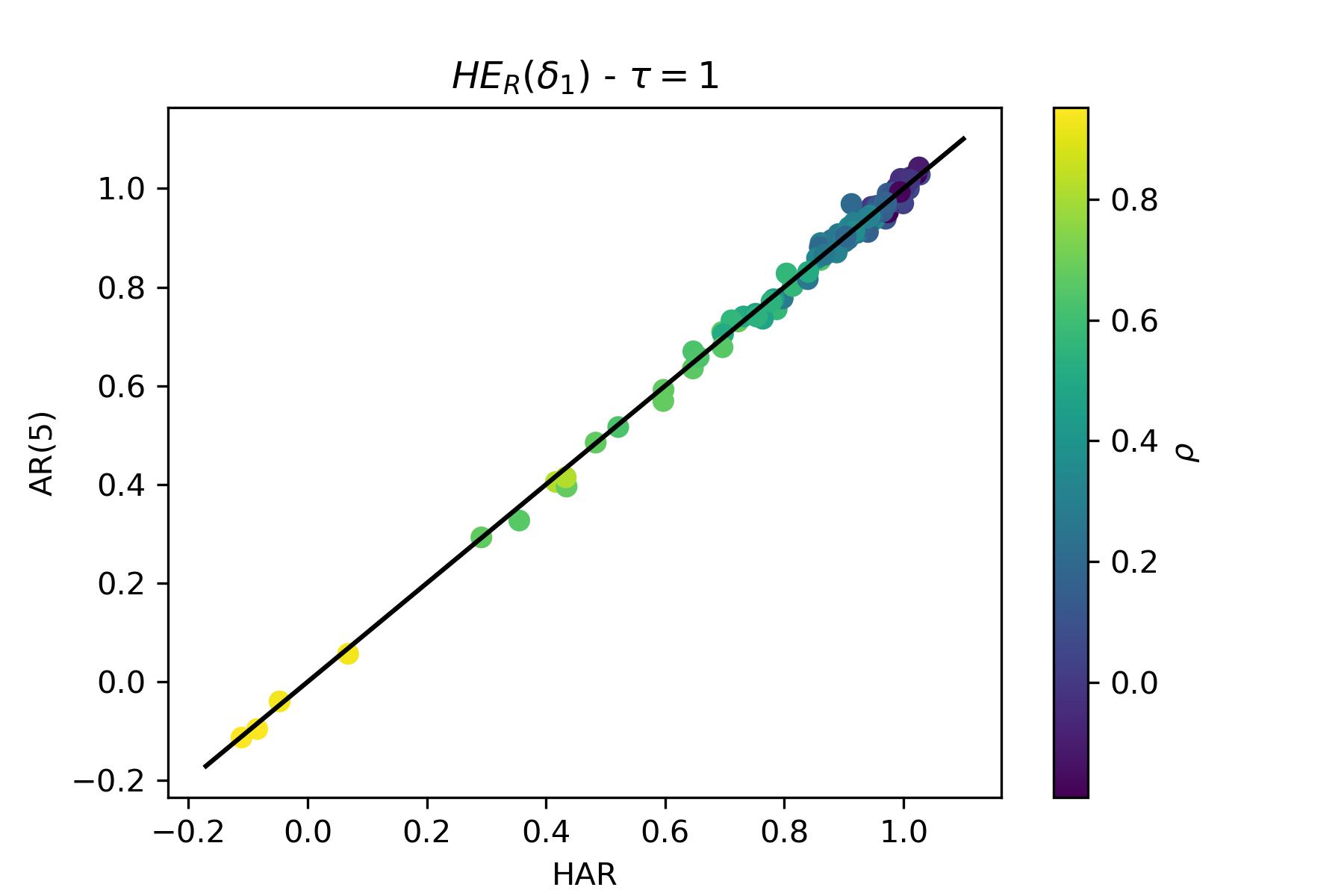}
\includegraphics[width=0.5\linewidth]{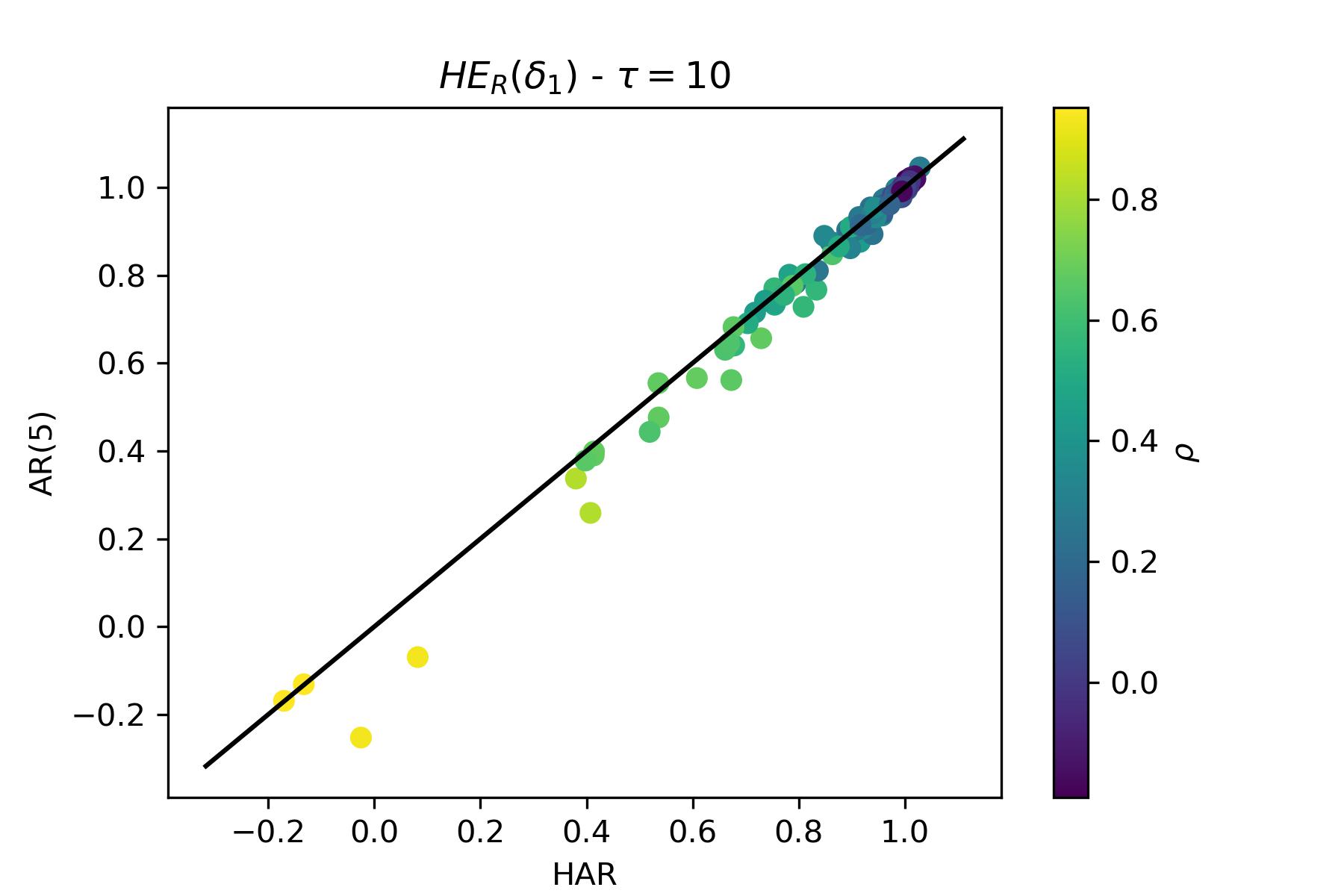}
\includegraphics[width=0.5\linewidth]{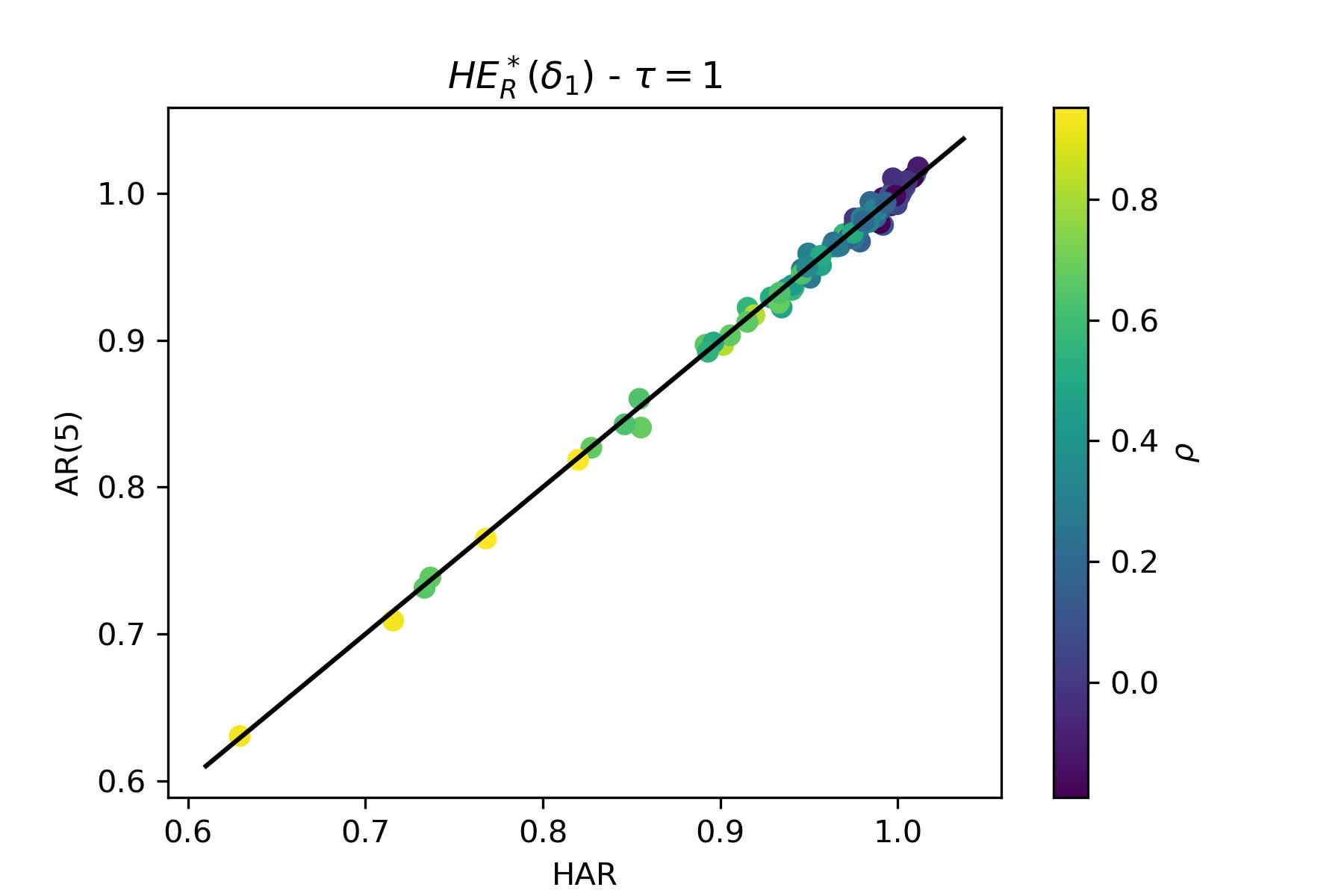}
\includegraphics[width=0.5\linewidth]{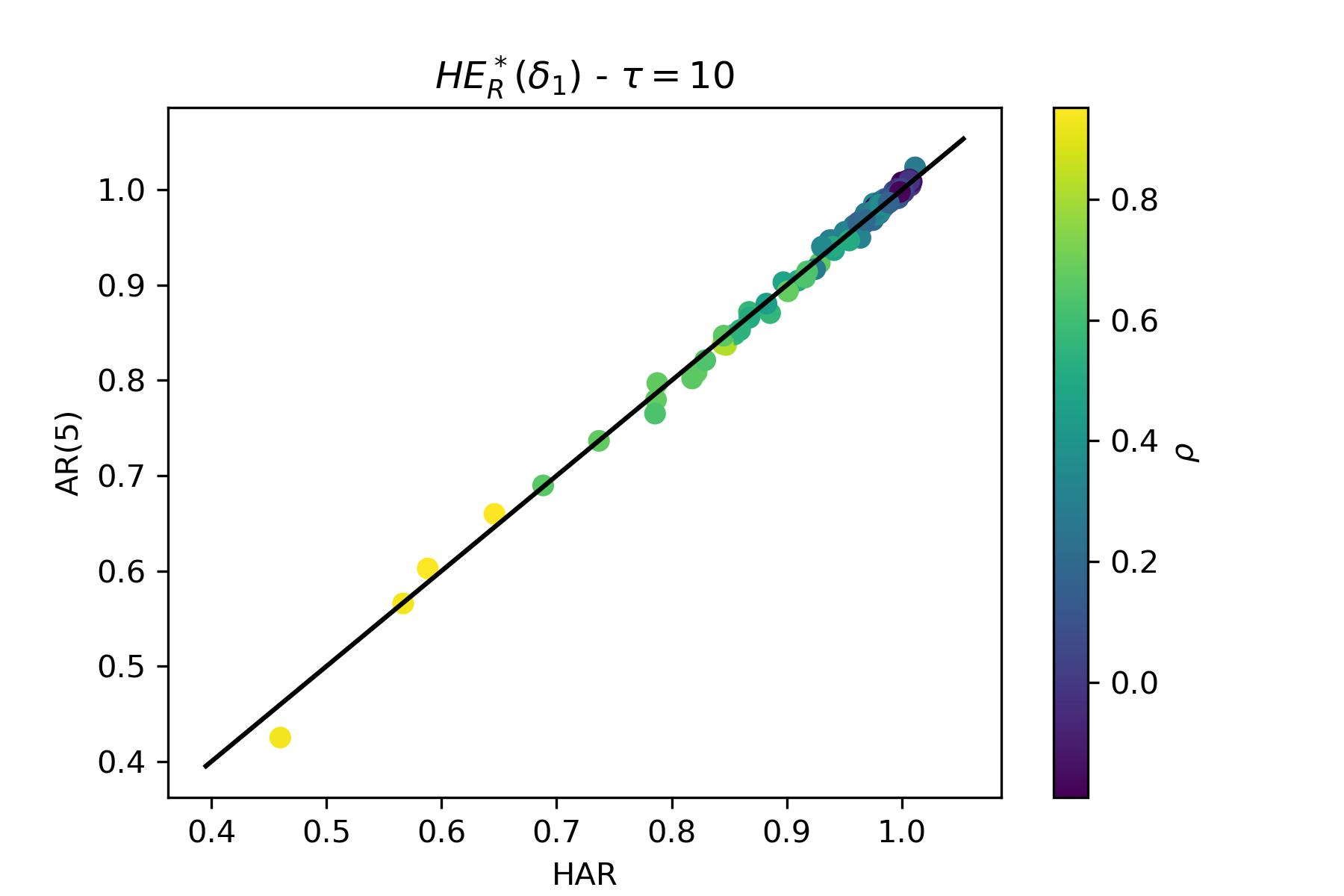}
    \caption{Comparison between the results that are obtained when realized variances and covariances are fitted by an HAR-type model and an AR(5) process. Two different prediction horizons are considered, i.e., $\tau = 1$ and $\tau = 10$. Each point is associated with a pair of instruments and its color is the return correlation $\rho$ of the pair. The dark line corresponds to the bisector.}
    \label{fig:HAR_2}
\end{figure}

\section{Performance and risk-adjusted metrics}
\label{app: performance}

In this appendix, we report Tables \ref{tab: transaction_costs_aaau_no_return}–\ref{tab:transaction_costs_ung}, which provide a detailed breakdown of the scatter plot results presented in Section \ref{performance}. For each hedged asset, the tables display the differences in performance metrics obtained under the robust and standard hedge ratios.

\begin{sidewaystable}
\begin{table}[H] \centering 
\scalebox{0.5}{
\begin{tabular}{@{\extracolsep{5pt}} ccccccccccccccccccccccccccc} 
\\[-1.8ex]\hline 
\hline \\[-1.8ex] 
Instrument & P\&L & P\&L(5bp) & P\&L(10bp) & SR & SR(5bp) & SR(10bp) & $\Omega$ & $\Omega$(5bp) & $\Omega$(10bp) & DD & DD(5bp) & DD(10bp) & VaR & VaR(5bp) & VaR(10bp) & ES & ES(5bp) & ES(10bp) \\ 
\hline \\[-1.8ex] 
AAAU/ASHR & 0.398 & 0.840 & 1.222 & 0.010 & 0.044 & 0.180 & -0.556 & -0.020 & 0.451 &
-0.149 & -0.089 & -0.029 & -0.035 & -0.035 & -0.035 & -0.058 & -0.056 & -0.054 \\
AAAU/BNO  & 1.545 & 2.002 & 2.442 & 0.216 & 0.370 & 0.524 & 0.767 & 1.302 & 1.828 &
-0.252 & -0.238 & -0.224 & 0.040 & 0.040 & 0.042 & -0.057 & -0.056 & -0.055 \\
AAAU/CORP & 7.589 & 13.601 & 19.706 & 2.600 & 4.658 & 6.743 & 8.839 & 15.524 & 21.863 &
-0.320 & -0.148 & 0.491 & 0.059 & 0.067 & 0.079 & -0.022 & 0.001 & 0.024 \\
AAAU/EWH  & 0.788 & 1.195 & 1.541 & 0.011 & 0.153 & 0.276 & -0.002 & 0.496 & 0.922 &
-0.282 & -0.269 & -0.257 & 0.008 & 0.009 & 0.009 & -0.068 & -0.066 & -0.064 \\
AAAU/GOVT & 2.235 & 6.597 & 10.698 & 0.577 & 2.137 & 3.601 & 2.097 & 7.287 & 11.871 &
-0.483 & -0.370 & -0.257 & 0.010 & 0.013 & 0.020 & -0.085 & -0.067 & -0.049 \\
AAAU/GSG  & 2.253 & 3.153 & 3.996 & 0.230 & 0.551 & 0.859 & 0.818 & 1.916 & 2.954 &
-0.425 & -0.401 & -0.377 & 0.018 & 0.018 & 0.019 & -0.102 & -0.099 & -0.097 \\
AAAU/ICLN & -1.996 & -1.199 & -0.453 & -0.982 & -0.715 & -0.451 & -3.400 & -2.470 & -1.519 &
-0.211 & -0.175 & -0.139 & -0.091 & -0.091 & -0.090 & -0.060 & -0.058 & -0.055 \\
AAAU/IEV  & 5.053 & 6.671 & 8.096 & 1.328 & 1.905 & 2.415 & 4.367 & 6.335 & 8.007 &
-0.235 & -0.197 & -0.160 & -0.097 & -0.083 & -0.074 & -0.100 & -0.096 & -0.092 \\
AAAU/IGOV & 1.925 & 3.619 & 5.110 & 0.444 & 1.054 & 1.590 & 1.505 & 3.610 & 5.396 &
-0.269 & -0.213 & -0.157 & 0.012 & 0.018 & 0.021 & -0.030 & -0.026 & -0.021 \\
AAAU/IVV  & 6.954 & 8.305 & 9.689 & 2.181 & 2.653 & 3.136 & 7.281 & 8.837 & 10.365 &
-0.019 & 0.038 & 0.095 & -0.031 & -0.018 & -0.007 & -0.066 & -0.061 & -0.056 \\
AAAU/QQQM & 4.338 & 4.854 & 5.588 & 1.842 & 2.169 & 2.550 & 6.215 & 7.105 & 8.283 &
-0.244 & -0.200 & -0.156 & -0.088 & -0.087 & -0.082 & -0.083 & -0.079 & -0.074 \\
AAAU/UNG  & -0.507 & -0.446 & -0.378 & -0.147 & -0.127 & -0.104 & -0.544 & -0.478 & -0.397 &
-0.035 & -0.032 & -0.029 & -0.007 & -0.007 & -0.006 & 0.006 & 0.005 & 0.005 \\
\hline \\[-1.8ex] 
\end{tabular}} 
  \caption{Hedged asset: Gold. We compute the difference, multiplied by 100, of various performance metrics obtained using the robust Hedge Ratios relative to those obtained using the standard Hedge Ratios. P\&L, SR, $\Omega$, DD, VaR, ES correspond to profit and loss, Sharpe ratio, Omega ratio, maximum drawdown, 95\% Value at Risk, and 95\% Expected Shortfall, respectively. Transaction costs are indicated in curly brackets.} 
  \label{tab: transaction_costs_aaau_no_return} 
\end{table}
\end{sidewaystable}

\begin{sidewaystable}
\begin{table}[H] \centering 
\scalebox{0.5}{
\begin{tabular}{@{\extracolsep{5pt}} ccccccccccccccccccccccccccc} 
\\[-1.8ex]\hline 
\hline \\[-1.8ex] 
Instrument & P\&L & P\&L(5bp) & P\&L(10bp) & SR & SR(5bp) & SR(10bp) & $\Omega$ & $\Omega$(5bp) & $\Omega$(10bp) & DD & DD(5bp) & DD(10bp) & VaR & VaR(5bp) & VaR(10bp) & ES & ES(5bp) & ES(10bp) \\ 
\hline \\[-1.8ex] 
ASHR/AAAU & 3.941 & 4.721 & 5.407 & 0.666 & 0.801 & 0.920 & 2.321 & 2.801 & 3.212 &
0.012 & 0.031 & 0.125 & 0.009 & 0.010 & 0.012 & -0.043 & -0.039 & -0.035 \\
ASHR/BNO  & 7.836 & 9.710 & 11.602 & 1.184 & 1.467 & 1.753 & 3.543 & 4.363 & 5.179 &
1.529 & 2.003 & 2.473 & -0.006 & -0.001 & 0.003 & -0.192 & -0.185 & -0.179 \\
ASHR/CORP & 2.255 & 5.327 & 8.349 & 0.300 & 0.709 & 1.111 & 0.968 & 2.265 & 3.521 &
0.632 & 1.521 & 2.717 & 0.013 & 0.015 & 0.018 & 0.000 & 0.008 & 0.015 \\
ASHR/EWH  & 0.384 & 4.217 & 8.071 & 0.164 & 0.701 & 1.242 & -0.038 & 2.584 & 5.100 &
-4.253 & -1.289 & 1.542 & -0.098 & -0.091 & -0.082 & -0.136 & -0.118 & -0.099 \\
ASHR/GOVT & -1.796 & 4.307 & 10.520 & -0.249 & 0.616 & 1.490 & -0.823 & 1.991 & 4.657 &
-0.860 & -0.181 & 1.354 & -0.041 & -0.024 & -0.008 & -0.071 & -0.033 & 0.006 \\
ASHR/GSG  & 7.456 & 10.039 & 12.651 & 1.123 & 1.513 & 1.907 & 3.374 & 4.501 & 5.618 &
1.501 & 2.227 & 2.942 & -0.002 & 0.012 & 0.022 & -0.157 & -0.148 & -0.139 \\
ASHR/ICLN & -13.147 & -9.661 & -6.234 & -1.935 & -1.427 & -0.926 & -6.716 & -4.893 & -3.139 &
-2.425 & -2.379 & -2.333 & -0.200 & -0.191 & -0.183 & -0.181 & -0.173 & -0.165 \\
ASHR/IEV  & 18.128 & 24.665 & 31.783 & 2.614 & 3.560 & 4.587 & 8.262 & 11.042 & 13.927 &
8.853 & 12.485 & 15.890 & -0.037 & -0.032 & -0.029 & -0.125 & -0.105 & -0.084 \\
ASHR/IGOV & -0.098 & 0.734 & 1.564 & -0.012 & 0.100 & 0.212 & -0.034 & 0.329 & 0.689 &
-0.443 & -0.100 & 0.241 & -0.017 & -0.016 & -0.015 & -0.021 & -0.019 & -0.017 \\
ASHR/IVV  & 48.212 & 58.874 & 70.199 & 6.664 & 8.125 & 9.657 & 18.583 & 22.102 & 25.531 &
23.605 & 27.653 & 31.300 & 0.132 & 0.163 & 0.193 & 0.138 & 0.172 & 0.208 \\
ASHR/QQQM & 10.641 & 11.987 & 13.904 & 2.409 & 2.712 & 3.138 & 8.930 & 9.832 & 11.211 &
1.912 & 2.404 & 2.892 & 0.169 & 0.180 & 0.189 & 0.064 & 0.081 & 0.097 \\
ASHR/UNG  & -0.525 & -0.380 & -0.206 & -0.067 & -0.048 & -0.024 & -0.208 & -0.147 & -0.073 &
-0.121 & -0.069 & -0.017 & -0.066 & -0.066 & -0.066 & -0.009 & -0.008 & -0.008 \\
\hline \\[-1.8ex] 
\end{tabular}} 
  \caption{Hedged asset: CSI 300. We compute the difference, multiplied by 100, of various performance metrics obtained using the robust Hedge Ratios relative to those obtained using the standard Hedge Ratios. P\&L, SR, $\Omega$, DD, VaR, ES correspond to profit and loss, Sharpe ratio, Omega ratio, maximum drawdown, 95\% Value at Risk, and 95\% Expected Shortfall, respectively. Transaction costs are indicated in curly brackets.} 
  \label{tab: transaction_costs_ashr_no_return} 
\end{table}
\end{sidewaystable}

\begin{sidewaystable}
\begin{table}[H] \centering 
\scalebox{0.5}{
\begin{tabular}{@{\extracolsep{5pt}} ccccccccccccccccccccccccccc} 
\\[-1.8ex]\hline 
\hline \\[-1.8ex] 
Instrument & P\&L & P\&L(5bp) & P\&L(10bp) & SR & SR(5bp) & SR(10bp) & $\Omega$ & $\Omega$(5bp) & $\Omega$(10bp) & DD & DD(5bp) & DD(10bp) & VaR & VaR(5bp) & VaR(10bp) & ES & ES(5bp) & ES(10bp) \\
\midrule 
BNO/AAAU & 7.343 & 8.434 & 9.374 & 1.433 & 1.645 & 1.828 & 3.584 & 4.098 & 4.539 &
2.865 & 3.463 & 4.057 & -0.156 & -0.153 & -0.151 & -0.003 & 0.000 & 0.003 \\
BNO/ASHR & -16.740 & -13.116 & -9.508 & -1.649 & -1.225 & -0.799 & -6.695 & -5.170 & -3.684 &
-1.294 & -1.107 & -0.921 & 0.216 & 0.240 & 0.252 & 0.022 & 0.032 & 0.041 \\
BNO/CORP & 1.222 & 3.182 & 5.085 & 0.191 & 0.440 & 0.682 & 0.524 & 1.220 & 1.892 &
0.517 & 0.688 & 0.858 & 0.042 & 0.047 & 0.051 & 0.060 & 0.066 & 0.072 \\
BNO/EWH & -3.537 & 0.916 & 5.400 & -0.424 & 0.168 & 0.766 & -1.459 & 0.307 & 2.035 &
0.795 & 2.020 & 4.361 & 0.078 & 0.109 & 0.139 & -0.038 & -0.029 & -0.021 \\
BNO/GOVT & -0.225 & 10.321 & 21.591 & -0.218 & 1.352 & 3.038 & -0.591 & 3.660 & 7.812 &
-0.491 & 0.886 & 5.827 & -0.244 & -0.199 & -0.142 & -0.246 & -0.220 & -0.193 \\
BNO/GSG & 34.491 & 44.006 & 53.692 & 8.227 & 11.712 & 15.213 & 20.990 & 28.503 & 35.406 &
2.061 & 7.431 & 14.599 & -1.032 & -0.999 & -0.955 & -1.442 & -1.426 & -1.409 \\
BNO/ICLN & -12.487 & -8.532 & -4.691 & -1.678 & -1.155 & -0.646 & -4.865 & -3.308 & -1.822 &
-0.724 & -0.508 & 0.646 & -0.003 & 0.007 & 0.019 & -0.112 & -0.105 & -0.098 \\
BNO/IEV & 28.368 & 36.482 & 45.600 & 3.716 & 4.786 & 5.983 & 9.910 & 12.561 & 15.435 &
14.599 & 17.922 & 21.066 & -0.062 & -0.045 & -0.014 & -0.028 & -0.009 & 0.010 \\
BNO/IGOV & -0.068 & 0.362 & 0.794 & -0.020 & 0.036 & 0.092 & -0.051 & 0.106 & 0.264 &
-0.101 & -0.050 & 0.000 & 0.001 & 0.001 & 0.002 & -0.025 & -0.024 & -0.022 \\
BNO/IVV & 63.646 & 76.336 & 90.450 & 8.074 & 9.668 & 11.415 & 19.843 & 23.248 & 26.770 &
29.598 & 36.073 & 41.543 & 0.141 & 0.169 & 0.217 & 0.278 & 0.313 & 0.346 \\
BNO/QQQM & 5.773 & 6.610 & 7.893 & 1.809 & 2.090 & 2.496 & 4.627 & 5.231 & 6.186 &
2.449 & 2.898 & 3.341 & -0.051 & -0.039 & -0.019 & -0.158 & -0.151 & -0.144 \\
BNO/UNG & -2.515 & -1.957 & -1.325 & -0.378 & -0.304 & -0.220 & -1.058 & -0.844 & -0.606 &
-1.013 & -0.881 & -0.749 & 0.032 & 0.026 & 0.027 & -0.088 & -0.087 & -0.086 \\
\hline \\[-1.8ex] 
\end{tabular}} 
  \caption{Hedged asset: Brent oil. We compute the difference, multiplied by 100, of various performance metrics obtained using the robust Hedge Ratios relative to those obtained using the standard Hedge Ratios. P\&L, SR, $\Omega$, DD, VaR, ES correspond to profit and loss, Sharpe ratio, Omega ratio, maximum drawdown, 95\% Value at Risk, and 95\% Expected Shortfall, respectively. Transaction costs are indicated in curly brackets.} 
  \label{tab: transaction_costs_bno_no_return} 
\end{table}
\end{sidewaystable}

\begin{sidewaystable}[htbp]
\centering
\scalebox{0.55}{
\begin{tabular}{@{\extracolsep{5pt}} ccccccccccccccccccccccccccc}
\toprule
Instrument & P\&L & P\&L(5bp) & P\&L(10bp) & SR & SR(5bp) & SR(10bp) & $\Omega$ & $\Omega$(5bp) & $\Omega$(10bp) & DD & DD(5bp) & DD(10bp) & VaR & VaR(5bp) & VaR(10bp) & ES & ES(5bp) & ES(10bp) \\
\midrule
CORP/AAAU & 5.260 & 6.319 & 7.281 & 4.434 & 5.314 & 6.111 & 11.623 & 13.673 & 15.517 & 2.361 & 3.112 & 3.856 & 0.030 & 0.033 & 0.036 & 0.022 & 0.025 & 0.028 \\
CORP/ASHR & -0.902 & -0.447 & 0.007 & -0.520 & -0.260 & 0.001 & -1.326 & -0.660 & 0.009 & -0.733 & -0.631 & -0.530 & -0.034 & -0.032 & -0.030 & -0.014 & -0.013 & -0.012 \\
CORP/BNO & 0.327 & 0.459 & 0.592 & 0.188 & 0.264 & 0.339 & 0.479 & 0.676 & 0.866 & 0.169 & 0.208 & 0.247 & 0.006 & 0.006 & 0.007 & 0.009 & 0.009 & 0.010 \\
CORP/EWH & 0.934 & 1.896 & 2.862 & 0.471 & 0.964 & 1.444 & 1.188 & 2.416 & 3.586 & 0.290 & 0.558 & 0.825 & 0.048 & 0.048 & 0.049 & 0.079 & 0.083 & 0.088 \\
CORP/GOVT & 0.210 & 4.475 & 8.675 & -0.003 & 3.951 & 7.878 & -0.154 & 10.555 & 19.892 & -2.271 & -1.039 & 2.497 & -0.155 & -0.147 & -0.132 & -0.113 & -0.099 & -0.085 \\
CORP/GSG & 1.199 & 1.566 & 1.934 & 0.689 & 0.901 & 1.113 & 1.726 & 2.262 & 2.785 & 0.518 & 0.618 & 0.717 & -0.013 & -0.013 & -0.013 & 0.000 & 0.000 & 0.001 \\
CORP/ICLN & -3.273 & -2.023 & -0.903 & -1.916 & -1.192 & -0.534 & -5.163 & -3.140 & -1.393 & -1.353 & -1.039 & -0.726 & -0.051 & -0.050 & -0.049 & -0.026 & -0.024 & -0.021 \\
CORP/IEV & 5.729 & 7.660 & 9.722 & 3.440 & 4.607 & 5.847 & 8.389 & 11.121 & 13.879 & 0.379 & 0.959 & 2.412 & -0.035 & -0.030 & -0.026 & -0.038 & -0.034 & -0.030 \\
CORP/IGOV & -0.076 & 1.020 & 2.113 & -0.094 & 0.650 & 1.413 & -0.242 & 1.650 & 3.540 & -1.259 & -0.845 & -0.393 & -0.066 & -0.059 & -0.054 & -0.064 & -0.061 & -0.058 \\
CORP/IVV & 17.006 & 20.920 & 25.063 & 9.316 & 11.506 & 13.763 & 20.544 & 24.751 & 28.680 & 9.395 & 12.834 & 16.120 & 0.006 & 0.014 & 0.020 & 0.032 & 0.041 & 0.053 \\
CORP/QQQM & 4.297 & 4.808 & 5.521 & 6.254 & 7.032 & 8.085 & 14.051 & 15.619 & 17.630 & 1.500 & 2.201 & 2.897 & -0.027 & -0.025 & -0.024 & 0.015 & 0.017 & 0.020 \\
CORP/UNG & 0.235 & 0.347 & 0.458 & 0.153 & 0.216 & 0.274 & 0.405 & 0.573 & 0.728 & 0.247 & 0.285 & 0.323 & 0.015 & 0.016 & 0.016 & 0.018 & 0.018 & 0.019 \\
\bottomrule
\end{tabular}}
\caption{Hedged asset: Corporate bonds. We compute the difference, multiplied by 100, of various performance metrics obtained using the robust Hedge Ratios relative to those obtained using the standard Hedge Ratios. P\&L, SR, $\Omega$, DD, VaR, ES correspond to profit and loss, Sharpe ratio, Omega ratio, maximum drawdown, 95\% Value at Risk, and 95\% Expected Shortfall, respectively. Transaction costs are indicated in curly brackets.}
\label{tab:transaction_costs_corp}
\end{sidewaystable}

\begin{sidewaystable}[htbp]
\centering
\scalebox{0.55}{
\begin{tabular}{@{\extracolsep{5pt}} ccccccccccccccccccccccccccc}
\toprule
Instrument & P\&L & P\&L(5bp) & P\&L(10bp) & SR & SR(5bp) & SR(10bp) & $\Omega$ & $\Omega$(5bp) & $\Omega$(10bp) & DD & DD(5bp) & DD(10bp) & VaR & VaR(5bp) & VaR(10bp) & ES & ES(5bp) & ES(10bp) \\
\midrule
EWH/AAAU & 2.614 & 3.126 & 3.567 & 0.563 & 0.669 & 0.763 & 0.475 & 0.601 & 0.727 & -0.074 & -0.071 & -0.069 & -0.035 & -0.037 & -0.039 & 1.498 & 1.792 & 2.042 \\
EWH/ASHR & -4.902 & -1.921 & 1.047 & -1.161 & -0.797 & -0.429 & -5.418 & -3.390 & -1.433 & -0.191 & -0.188 & -0.186 & 0.144 & 0.156 & 0.168 & -1.928 & -0.303 & 1.269 \\
EWH/BNO & 5.307 & 7.107 & 8.931 & 1.028 & 1.341 & 1.656 & 3.038 & 3.669 & 4.315 & 0.000 & 0.003 & 0.007 & -0.085 & -0.091 & -0.096 & 2.526 & 3.278 & 4.030 \\
EWH/CORP & 6.678 & 12.457 & 18.134 & 0.989 & 1.881 & 2.756 & 3.819 & 5.964 & 8.420 & 0.120 & 0.126 & 0.134 & 0.114 & 0.136 & 0.158 & 2.476 & 4.643 & 6.710 \\
EWH/GOVT & -6.029 & 0.810 & 7.809 & -0.969 & 0.237 & 1.467 & -3.843 & -1.046 & 1.518 & -0.109 & -0.083 & -0.040 & -0.095 & -0.129 & -0.159 & -2.550 & 0.605 & 3.597 \\
EWH/GSG & 4.870 & 7.257 & 9.679 & 0.966 & 1.384 & 1.807 & 2.771 & 3.607 & 4.426 & -0.119 & -0.114 & -0.110 & -0.060 & -0.069 & -0.078 & 2.367 & 3.367 & 4.364 \\
EWH/ICLN & -12.166 & -9.476 & -6.852 & -2.117 & -1.594 & -1.083 & -7.184 & -6.050 & -4.938 & -0.247 & -0.248 & -0.248 & -0.368 & -0.375 & -0.382 & -5.630 & -4.201 & -2.838 \\
EWH/IEV & 15.802 & 21.978 & 28.802 & 3.373 & 4.584 & 5.917 & 6.248 & 9.711 & 12.960 & -0.275 & -0.266 & -0.260 & -0.325 & -0.337 & -0.348 & 8.560 & 11.352 & 14.279 \\
EWH/IGOV & -0.296 & 0.340 & 0.974 & -0.033 & 0.069 & 0.171 & -0.364 & -0.074 & 0.215 & 0.001 & 0.001 & 0.003 & -0.011 & -0.012 & -0.014 & -0.075 & 0.186 & 0.445 \\
EWH/IVV & 49.291 & 59.581 & 70.644 & 8.743 & 10.512 & 12.381 & 22.454 & 26.305 & 30.106 & 0.108 & 0.122 & 0.139 & -0.019 & -0.051 & -0.084 & 19.527 & 22.868 & 26.120 \\
EWH/QQQM & 10.174 & 11.649 & 13.592 & 3.195 & 3.668 & 4.276 & 2.805 & 3.053 & 3.299 & 0.106 & 0.117 & 0.136 & 0.014 & 0.026 & 0.039 & 8.762 & 10.036 & 11.580 \\
EWH/UNG & -0.086 & 0.177 & 0.441 & 0.015 & 0.054 & 0.093 & -0.541 & -0.451 & -0.362 & -0.029 & -0.029 & -0.029 & -0.005 & -0.005 & -0.006 & 0.149 & 0.250 & 0.351 \\
\bottomrule
\end{tabular}}
\caption{Hedged asset: Hong Kong index. We compute the difference, multiplied by 100, of various performance metrics obtained using the robust Hedge Ratios relative to those obtained using the standard Hedge Ratios. P\&L, SR, $\Omega$, DD, VaR, ES correspond to profit and loss, Sharpe ratio, Omega ratio, maximum drawdown, 95\% Value at Risk, and 95\% Expected Shortfall, respectively. Transaction costs are indicated in curly brackets.}
\label{tab:transaction_costs_ewh}
\end{sidewaystable}

\begin{sidewaystable}
\begin{table}[H] 
\centering 
\scalebox{0.5}{
\begin{tabular}{@{\extracolsep{5pt}} ccccccccccccccccccccccccccc} 
\\[-1.8ex]\hline 
\hline \\[-1.8ex] 
Instrument & P\&L & P\&L(5bp) & P\&L(10bp) & SR & SR(5bp) & SR(10bp) & $\Omega$ & $\Omega$(5bp) & $\Omega$(10bp) & DD & DD(5bp) & DD(10bp) & VaR & VaR(5bp) & VaR(10bp) & ES & ES(5bp) & ES(10bp) \\
\midrule
GOVT/AAAU & 2.454 & 2.866 & 3.224 & 2.207 & 2.581 & 2.908 & 0.360 & 0.494 & 0.759 & -0.008 & -0.001 & 0.000 & -0.033 & -0.034 & -0.035 & 5.626 & 6.652 & 7.459 \\
GOVT/ASHR & -0.599 & -0.211 & 0.179 & -0.329 & -0.096 & 0.150 & 0.161 & 0.267 & 0.372 & -0.019 & -0.018 & -0.012 & -0.024 & -0.025 & -0.027 & -0.818 & -0.242 & 0.367 \\
GOVT/BNO & -0.116 & 0.346 & 0.798 & 0.053 & 0.332 & 0.626 & -0.204 & -0.072 & 0.061 & -0.030 & -0.025 & -0.024 & -0.024 & -0.025 & -0.026 & 0.135 & 0.810 & 1.512 \\
GOVT/CORP & 1.614 & 3.865 & 6.084 & 1.388 & 2.888 & 4.408 & -0.974 & -0.077 & 1.890 & -0.067 & -0.076 & -0.070 & -0.041 & -0.048 & -0.052 & 3.621 & 7.231 & 10.745 \\
GOVT/EWH & -0.556 & 0.048 & 0.649 & -0.257 & 0.123 & 0.522 & 0.085 & 0.258 & 0.431 & -0.057 & -0.052 & -0.052 & -0.027 & -0.029 & -0.029 & -0.618 & 0.314 & 1.284 \\
GOVT/GSG & 0.026 & 0.671 & 1.302 & 0.123 & 0.523 & 0.931 & 0.055 & 0.245 & 0.435 & -0.012 & -0.011 & -0.010 & -0.021 & -0.023 & -0.024 & 0.310 & 1.273 & 2.236 \\
GOVT/ICLN & 0.728 & 1.140 & 1.567 & 0.558 & 0.816 & 1.096 & 0.035 & 0.166 & 0.297 & -0.007 & -0.001 & -0.001 & -0.023 & -0.024 & -0.025 & 1.351 & 1.974 & 2.630 \\
GOVT/IEV & -1.229 & -0.164 & 0.794 & -0.693 & -0.039 & 0.565 & -0.081 & 0.179 & 0.438 & -0.028 & -0.021 & -0.014 & -0.025 & -0.028 & -0.031 & -1.726 & -0.140 & 1.299 \\
GOVT/IGOV & -0.154 & 0.365 & 0.881 & 0.075 & 0.420 & 0.796 & -0.965 & -0.803 & -0.622 & -0.066 & -0.068 & -0.067 & -0.061 & -0.062 & -0.061 & 0.181 & 1.011 & 1.899 \\
GOVT/IVV & -3.220 & -1.536 & -0.041 & -1.961 & -0.907 & 0.046 & -0.695 & -0.246 & 0.200 & 0.011 & 0.013 & 0.015 & -0.013 & -0.018 & -0.022 & -4.862 & -2.246 & 0.073 \\
GOVT/QQQM & 1.365 & 1.543 & 1.777 & 2.087 & 2.367 & 2.721 & 0.225 & 0.316 & 0.406 & 0.000 & 0.002 & 0.002 & -0.004 & -0.005 & -0.005 & 5.132 & 5.746 & 6.554 \\
GOVT/UNG & -0.043 & -0.035 & -0.027 & -0.021 & -0.017 & -0.012 & -0.006 & -0.003 & -0.000 & 0.002 & 0.001 & 0.001 & -0.002 & -0.002 & -0.002 & -0.052 & -0.044 & -0.031 \\
\hline \\[-1.8ex] 
\end{tabular}} 
\caption{Hedged asset: US Treasury bonds. We compute the difference, multiplied by 100, of various performance metrics obtained using the robust Hedge Ratios relative to those obtained using the standard Hedge Ratios. P\&L, SR, $\Omega$, DD, VaR, ES correspond to profit and loss, Sharpe ratio, Omega ratio, maximum drawdown, 95\% Value at Risk, and 95\% Expected Shortfall, respectively. Transaction costs are indicated in curly brackets.} 
\label{tab:transaction_costs_govt} 
\end{table}
\end{sidewaystable}

\begin{sidewaystable}
\begin{table}[H] 
\centering 
\scalebox{0.5}{
\begin{tabular}{@{\extracolsep{5pt}} ccccccccccccccccccccccccccc} 
\\[-1.8ex]\hline 
\hline \\[-1.8ex] 
Instrument & P\&L & P\&L(5bp) & P\&L(10bp) & SR & SR(5bp) & SR(10bp) & $\Omega$ & $\Omega$(5bp) & $\Omega$(10bp) & DD & DD(5bp) & DD(10bp) & VaR & VaR(5bp) & VaR(10bp) & ES & ES(5bp) & ES(10bp) \\
\midrule
GSG/AAAU & 6.694 & 7.661 & 8.487 & 2.210 & 2.522 & 2.790 & 4.120 & 4.692 & 5.259 & -0.123 & -0.115 & -0.110 & -0.003 & -0.007 & -0.010 & 5.283 & 6.007 & 6.618 \\
GSG/ASHR & -9.092 & -6.787 & -4.493 & -1.556 & -1.106 & -0.653 & -1.412 & -1.281 & -1.068 & 0.146 & 0.161 & 0.171 & 0.032 & 0.037 & 0.043 & -5.812 & -4.307 & -2.839 \\
GSG/BNO & 25.992 & 30.265 & 34.600 & 10.643 & 13.179 & 15.681 & 3.154 & 5.820 & 8.539 & -0.717 & -0.708 & -0.696 & -0.887 & -0.896 & -0.905 & 25.702 & 30.812 & 35.566 \\
GSG/CORP & 1.565 & 3.982 & 6.392 & 0.314 & 0.820 & 1.323 & 0.657 & 1.542 & 2.670 & 0.008 & 0.020 & 0.029 & 0.017 & 0.011 & 0.005 & 0.846 & 2.212 & 3.555 \\
GSG/EWH & -1.194 & 1.480 & 4.172 & -0.258 & 0.328 & 0.920 & 0.516 & 1.156 & 2.638 & -0.028 & -0.012 & 0.002 & -0.010 & -0.016 & -0.021 & -0.991 & 0.713 & 2.381 \\
GSG/GOVT & -0.694 & 6.302 & 13.655 & -0.346 & 1.326 & 3.085 & -0.481 & 0.385 & 2.653 & -0.056 & -0.037 & -0.017 & -0.109 & -0.128 & -0.147 & -0.924 & 3.500 & 7.732 \\
GSG/ICLN & -8.462 & -5.882 & -3.368 & -1.885 & -1.327 & -0.782 & -0.795 & -0.563 & 0.436 & -0.033 & -0.024 & -0.015 & -0.015 & -0.019 & -0.023 & -5.450 & -3.806 & -2.234 \\
GSG/IEV & 18.833 & 24.431 & 30.649 & 4.008 & 5.213 & 6.544 & 10.523 & 13.025 & 15.441 & 0.123 & 0.131 & 0.142 & 0.061 & 0.071 & 0.081 & 10.406 & 13.307 & 16.411 \\
GSG/IGOV & -0.062 & 0.165 & 0.392 & -0.019 & 0.028 & 0.076 & 0.022 & 0.062 & 0.103 & 0.003 & 0.003 & 0.004 & -0.002 & -0.003 & -0.004 & -0.053 & 0.077 & 0.207 \\
GSG/IVV & 41.902 & 50.203 & 59.324 & 8.604 & 10.293 & 12.125 & 21.174 & 26.453 & 31.238 & 0.144 & 0.162 & 0.177 & 0.282 & 0.264 & 0.246 & 20.735 & 24.244 & 27.844 \\
GSG/QQQM & 4.953 & 5.606 & 6.557 & 2.539 & 2.899 & 3.394 & 2.287 & 2.635 & 2.981 & -0.102 & -0.101 & -0.102 & -0.047 & -0.051 & -0.056 & 6.351 & 7.109 & 8.237 \\
GSG/UNG & -2.584 & -2.051 & -1.409 & -0.601 & -0.485 & -0.346 & -0.867 & -0.799 & -0.732 & -0.008 & -0.008 & -0.006 & -0.052 & -0.053 & -0.054 & -1.613 & -1.290 & -0.908 \\
\hline \\[-1.8ex] 
\end{tabular}} 
\caption{Hedged asset: GSCI Commodity index. We compute the difference, multiplied by 100, of various performance metrics obtained using the robust Hedge Ratios relative to those obtained using the standard Hedge Ratios. P\&L, SR, $\Omega$, DD, VaR, ES correspond to profit and loss, Sharpe ratio, Omega ratio, maximum drawdown, 95\% Value at Risk, and 95\% Expected Shortfall, respectively. Transaction costs are indicated in curly brackets.} 
\label{tab:transaction_costs_gsg} 
\end{table}
\end{sidewaystable}

\begin{sidewaystable}
\begin{table}[H] 
\centering 
\scalebox{0.5}{
\begin{tabular}{@{\extracolsep{5pt}} ccccccccccccccccccccccccccc} 
\\[-1.8ex]\hline 
\hline \\[-1.8ex] 
Instrument & P\&L & P\&L(5bp) & P\&L(10bp) & SR & SR(5bp) & SR(10bp) & DD & DD(5bp) & DD(10bp) & VaR & VaR(5bp) & VaR(10bp) & ES & ES(5bp) & ES(10bp) & $\Omega$ & $\Omega$(5bp) & $\Omega$(10bp) \\ 
\hline \\[-1.8ex] 
ICLN/AAAU & 6.453 & 7.606 & 8.537 & 1.439 & 1.673 & 1.862 & 0.009 & 0.011 & 0.012 & -0.114 & -0.111 & -0.108 & 3.513 & 3.977 & 4.438 & 3.244 & 3.768 & 4.188 \\
ICLN/ASHR & -11.550 & -7.395 & -3.254 & -1.530 & -0.866 & -0.200 & -7.485 & -4.941 & -2.522 & -0.145 & -0.136 & -0.126 & -0.049 & -0.047 & -0.045 & -3.567 & -2.046 & -0.555 \\
ICLN/BNO & 8.728 & 10.923 & 13.148 & 1.420 & 1.749 & 2.083 & 3.939 & 5.038 & 6.107 & -0.123 & -0.117 & -0.111 & 0.038 & 0.041 & 0.046 & 3.028 & 3.717 & 4.402 \\
ICLN/CORP & 8.504 & 17.608 & 27.121 & 1.362 & 2.734 & 4.152 & 4.104 & 8.589 & 12.647 & -0.022 & -0.015 & 0.010 & -0.014 & 0.016 & 0.043 & 3.020 & 5.939 & 8.817 \\
ICLN/EWH & -7.136 & -3.234 & 0.693 & -0.513 & 0.233 & 0.989 & -5.635 & -3.144 & -0.810 & -0.328 & -0.317 & -0.307 & -0.290 & -0.276 & -0.262 & -1.005 & 0.660 & 2.293 \\
ICLN/GOVT & -4.964 & 4.373 & 14.032 & -0.409 & 1.234 & 2.931 & -3.409 & 2.020 & 6.410 & -0.194 & -0.161 & -0.129 & -0.134 & -0.097 & -0.115 & -0.973 & 2.472 & 5.702 \\
ICLN/GSG & 7.237 & 10.329 & 13.469 & 1.279 & 1.751 & 2.230 & 2.919 & 4.477 & 5.973 & -0.181 & -0.173 & -0.164 & 0.019 & 0.025 & 0.036 & 2.708 & 3.690 & 4.665 \\
ICLN/IEV & 27.040 & 35.567 & 45.051 & 6.658 & 8.333 & 10.181 & 11.998 & 15.487 & 19.153 & -0.492 & -0.472 & -0.451 & -0.309 & -0.294 & -0.278 & 13.285 & 16.262 & 19.329 \\
ICLN/IGOV & -0.124 & 1.612 & 3.342 & 0.101 & 0.362 & 0.624 & -0.380 & 0.529 & 1.417 & -0.060 & -0.057 & -0.054 & -0.046 & -0.041 & -0.037 & 0.237 & 0.813 & 1.384 \\
ICLN/IVV & 73.399 & 87.629 & 103.078 & 13.862 & 16.426 & 19.122 & 27.456 & 30.572 & 33.463 & 0.004 & 0.044 & 0.090 & 0.058 & 0.095 & 0.135 & 25.593 & 29.172 & 32.529 \\
ICLN/QQQM & 21.426 & 24.363 & 28.429 & 6.636 & 7.568 & 8.898 & 11.507 & 13.858 & 16.093 & 0.232 & 0.280 & 0.331 & 0.511 & 0.540 & 0.569 & 13.719 & 15.030 & 17.174 \\
ICLN/UNG & -2.160 & -1.930 & -1.667 & -0.253 & -0.219 & -0.180 & -1.163 & -1.014 & -0.866 & -0.062 & -0.062 & -0.061 & 0.019 & 0.020 & 0.020 & -0.590 & -0.515 & -0.428 \\
\hline \\[-1.8ex] 
\end{tabular}} 
\caption{Hedged asset: Clean energy index. We compute the difference, multiplied by 100, of various performance metrics obtained using the robust Hedge Ratios relative to those obtained using the standard Hedge Ratios. P\&L, SR, $\Omega$, DD, VaR, ES correspond to profit and loss, Sharpe ratio, Omega ratio, maximum drawdown, 95\% Value at Risk, and 95\% Expected Shortfall, respectively. Transaction costs are indicated in curly brackets.} 
\label{tab:transaction_costs_icln} 
\end{table}
\end{sidewaystable}

\begin{sidewaystable}
\begin{table}[H] 
\centering 
\scalebox{0.5}{
\begin{tabular}{@{\extracolsep{5pt}} ccccccccccccccccccccccccccc} 
\\[-1.8ex]\hline 
\hline \\[-1.8ex] 
Instrument & P\&L & P\&L(5bp) & P\&L(10bp) & SR & SR(5bp) & SR(10bp) & $\Omega$ & $\Omega$(5bp) & $\Omega$(10bp) & DD & DD(5bp) & DD(10bp) & VaR & VaR(5bp) & VaR(10bp) & ES & ES(5bp) & ES(10bp) \\
\midrule
IEV/AAAU & 4.648 & 5.474 & 6.156 & 1.693 & 2.012 & 2.284 & 0.970 & 1.261 & 1.551 & -0.093 & -0.090 & -0.088 & -0.109 & -0.106 & -0.103 & 4.587 & 5.494 & 6.216 \\
IEV/ASHR & -6.818 & -3.953 & -1.098 & -2.139 & -1.294 & -0.458 & 0.955 & 1.314 & 1.670 & -0.185 & -0.179 & -0.174 & -0.120 & -0.115 & -0.110 & -5.848 & -3.392 & -1.069 \\
IEV/BNO & 8.614 & 10.382 & 12.169 & 2.262 & 2.742 & 3.215 & 3.538 & 3.889 & 4.238 & -0.037 & -0.033 & -0.029 & -0.048 & -0.042 & -0.036 & 6.111 & 7.269 & 8.454 \\
IEV/CORP & 4.279 & 10.054 & 15.737 & 1.031 & 2.620 & 4.167 & -0.880 & 0.512 & 1.880 & -0.017 & -0.006 & 0.006 & -0.040 & -0.028 & -0.017 & 2.809 & 7.000 & 10.937 \\
IEV/EWH & -6.318 & -3.135 & 0.064 & -2.685 & -1.558 & -0.445 & 1.353 & 2.007 & 2.654 & -0.194 & -0.188 & -0.188 & -0.107 & -0.100 & -0.093 & -8.539 & -4.901 & -1.558 \\
IEV/GOVT & -1.958 & 3.611 & 9.336 & -0.672 & 0.943 & 2.580 & 0.655 & 0.988 & 2.057 & -0.152 & -0.122 & -0.084 & -0.077 & -0.059 & -0.041 & -1.769 & 2.554 & 6.556 \\
IEV/GSG & 7.492 & 10.096 & 12.732 & 1.980 & 2.694 & 3.404 & 2.760 & 3.255 & 3.746 & -0.044 & -0.037 & -0.032 & -0.066 & -0.058 & -0.049 & 5.384 & 7.131 & 8.888 \\
IEV/ICLN & -13.227 & -9.933 & -6.706 & -5.704 & -4.485 & -3.315 & -2.312 & -1.958 & -1.605 & -0.344 & -0.335 & -0.327 & -0.316 & -0.309 & -0.303 & -17.815 & -13.545 & -9.818 \\
IEV/IGOV & -0.165 & 1.408 & 2.976 & -0.157 & 0.294 & 0.733 & -0.756 & -0.439 & -0.124 & -0.082 & -0.077 & -0.076 & -0.032 & -0.029 & -0.027 & -0.439 & 0.815 & 2.009 \\
IEV/IVV & 73.343 & 87.386 & 102.371 & 21.758 & 25.806 & 30.006 & 41.157 & 47.296 & 52.601 & 0.211 & 0.274 & 0.328 & 0.300 & 0.341 & 0.392 & 45.532 & 51.091 & 56.574 \\
IEV/QQQM & 14.455 & 16.265 & 18.803 & 9.243 & 10.418 & 12.041 & 10.059 & 11.518 & 12.939 & -0.023 & -0.012 & -0.001 & 0.109 & 0.126 & 0.143 & 20.501 & 22.482 & 25.273 \\
IEV/UNG & -0.470 & -0.289 & -0.076 & -0.112 & -0.069 & -0.019 & 0.106 & 0.159 & 0.212 & -0.052 & -0.051 & -0.050 & -0.014 & -0.013 & -0.012 & -0.339 & -0.205 & -0.061 \\
\hline \\[-1.8ex] 
\end{tabular}} 
\caption{Hedged asset: S\&P Europe 350. We compute the difference, multiplied by 100, of various performance metrics obtained using the robust Hedge Ratios relative to those obtained using the standard Hedge Ratios. P\&L, SR, $\Omega$, DD, VaR, ES correspond to profit and loss, Sharpe ratio, Omega ratio, maximum drawdown, 95\% Value at Risk, and 95\% Expected Shortfall, respectively. Transaction costs are indicated in curly brackets.} 
\label{tab:transaction_costs_iev} 
\end{table}
\end{sidewaystable}

\begin{sidewaystable}
\begin{table}[H] 
\centering 
\scalebox{0.5}{
\begin{tabular}{@{\extracolsep{5pt}} ccccccccccccccccccccccccccc} 
\\[-1.8ex]\hline 
\hline \\[-1.8ex] 
Instrument & P\&L & P\&L(5bp) & P\&L(10bp) & SR & SR(5bp) & SR(10bp) & $\Omega$ & $\Omega$(5bp) & $\Omega$(10bp) & DD & DD(5bp) & DD(10bp) & VaR & VaR(5bp) & VaR(10bp) & ES & ES(5bp) & ES(10bp) \\
\midrule
IGOV/AAAU & 6.039 & 7.069 & 7.949 & 3.402 & 3.990 & 4.497 & 4.538 & 5.117 & 5.691 & -0.023 & -0.021 & -0.019 & -0.096 & -0.093 & -0.091 & 8.659 & 10.121 & 11.320 \\
IGOV/ASHR & -1.039 & -0.648 & -0.258 & -0.388 & -0.237 & -0.086 & -0.404 & -0.364 & -0.325 & -0.023 & -0.021 & -0.020 & -0.024 & -0.024 & -0.023 & -0.995 & -0.606 & -0.219 \\
IGOV/BNO & 0.019 & 0.112 & 0.204 & 0.010 & 0.045 & 0.080 & -0.109 & -0.101 & -0.093 & 0.001 & 0.002 & 0.002 & -0.001 & 0.000 & 0.000 & 0.029 & 0.118 & 0.207 \\
IGOV/CORP & 2.706 & 6.159 & 9.553 & 1.246 & 2.715 & 4.158 & -0.517 & 1.266 & 3.932 & -0.143 & -0.141 & -0.138 & -0.112 & -0.102 & -0.091 & 3.208 & 6.810 & 10.212 \\
IGOV/EWH & -0.329 & 0.035 & 0.400 & -0.113 & 0.029 & 0.171 & -0.408 & -0.368 & -0.327 & -0.026 & -0.025 & -0.025 & -0.039 & -0.038 & -0.037 & -0.291 & 0.070 & 0.433 \\
IGOV/GOVT & -0.359 & 2.567 & 5.387 & -0.069 & 1.544 & 3.092 & -0.114 & 1.830 & 3.438 & -0.179 & -0.177 & -0.171 & -0.186 & -0.179 & -0.171 & -0.150 & 3.897 & 7.469 \\
IGOV/GSG & 0.414 & 0.530 & 0.647 & 0.159 & 0.203 & 0.247 & -0.021 & 0.034 & 0.050 & -0.012 & -0.011 & -0.011 & -0.005 & -0.004 & -0.004 & 0.402 & 0.513 & 0.624 \\
IGOV/ICLN & -2.164 & -1.474 & -0.793 & -0.838 & -0.566 & -0.295 & -1.138 & -0.989 & -0.801 & -0.042 & -0.042 & -0.040 & -0.055 & -0.053 & -0.052 & -2.178 & -1.462 & -0.757 \\
IGOV/IEV & 5.189 & 6.854 & 8.631 & 2.193 & 2.879 & 3.610 & 1.377 & 2.286 & 3.887 & -0.018 & -0.015 & -0.010 & -0.094 & -0.090 & -0.086 & 5.452 & 7.087 & 8.790 \\
IGOV/IVV & 6.531 & 7.971 & 9.467 & 2.569 & 3.135 & 3.721 & 2.532 & 3.826 & 5.202 & 0.047 & 0.051 & 0.055 & -0.016 & -0.014 & -0.011 & 6.278 & 7.587 & 8.917 \\
IGOV/QQQM & 4.824 & 5.405 & 6.205 & 4.155 & 4.667 & 5.360 & 2.172 & 2.932 & 3.685 & 0.044 & 0.047 & 0.048 & -0.014 & -0.010 & -0.006 & 9.114 & 10.063 & 11.434 \\
IGOV/UNG & -0.229 & -0.144 & -0.061 & -0.066 & -0.037 & -0.009 & 0.084 & 0.098 & 0.113 & 0.005 & 0.006 & 0.006 & 0.020 & 0.020 & 0.021 & -0.339 & -0.205 & -0.034 \\
\hline \\[-1.8ex] 
\end{tabular}} 
\caption{Hedged asset: Treasury bonds. We compute the difference, multiplied by 100, of various performance metrics obtained using the robust Hedge Ratios relative to those obtained using the standard Hedge Ratios. P\&L, SR, $\Omega$, DD, VaR, ES correspond to profit and loss, Sharpe ratio, Omega ratio, maximum drawdown, 95\% Value at Risk, and 95\% Expected Shortfall, respectively. Transaction costs are indicated in curly brackets.} 
\label{tab:transaction_costs_igov} 
\end{table}
\end{sidewaystable}

\begin{sidewaystable}
\begin{table}[H] 
\centering 
\scalebox{0.5}{
\begin{tabular}{@{\extracolsep{5pt}} ccccccccccccccccccccccccccc} 
\\[-1.8ex]\hline 
\hline \\[-1.8ex] 
Instrument & P\&L & P\&L(5bp) & P\&L(10bp) & SR & SR(5bp) & SR(10bp) & $\Omega$ & $\Omega$(5bp) & $\Omega$(10bp) & DD & DD(5bp) & DD(10bp) & VaR & VaR(5bp) & VaR(10bp) & ES & ES(5bp) & ES(10bp) \\
\midrule
IVV/AAAU & 2.985 & 3.439 & 3.837 & 0.907 & 1.099 & 1.264 & 0.050 & -0.080 & -0.110 & -0.019 & -0.018 & -0.017 & -0.046 & -0.044 & -0.042 & 3.527 & 4.187 & 4.766 \\
IVV/ASHR & 0.831 & 3.227 & 5.615 & 0.037 & 0.773 & 1.498 & -0.323 & -0.388 & -0.453 & -0.129 & -0.128 & -0.127 & 0.031 & 0.036 & 0.041 & 0.524 & 3.020 & 5.429 \\
IVV/BNO & 10.219 & 11.616 & 13.033 & 2.925 & 3.348 & 3.758 & 1.814 & 1.621 & 1.427 & 0.003 & 0.004 & 0.006 & -0.010 & -0.007 & -0.005 & 10.794 & 11.949 & 13.202 \\
IVV/CORP & 4.191 & 10.368 & 16.425 & 1.116 & 2.897 & 4.644 & 1.077 & 0.326 & -0.432 & 0.012 & 0.021 & 0.030 & 0.030 & 0.044 & 0.059 & 3.769 & 9.819 & 15.490 \\
IVV/EWH & 2.350 & 5.375 & 8.415 & -0.165 & 0.911 & 1.966 & 3.021 & 2.547 & 2.070 & -0.161 & -0.153 & -0.147 & -0.156 & -0.147 & -0.138 & 1.121 & 4.742 & 8.191 \\
IVV/GOVT & 3.098 & 8.859 & 14.822 & 0.672 & 2.505 & 4.380 & 1.021 & 0.028 & -0.942 & -0.077 & -0.063 & -0.051 & -0.003 & 0.012 & 0.026 & 2.354 & 8.333 & 13.790 \\
IVV/GSG & 8.432 & 10.382 & 12.365 & 2.348 & 2.943 & 3.529 & 1.254 & 0.983 & 0.710 & -0.015 & -0.013 & -0.011 & -0.048 & -0.044 & -0.041 & 8.903 & 10.623 & 12.407 \\
IVV/ICLN & -7.308 & -4.722 & -2.201 & -4.865 & -3.874 & -2.905 & 0.365 & -0.482 & -0.600 & -0.200 & -0.197 & -0.196 & -0.257 & -0.253 & -0.249 & -19.810 & -15.214 & -11.227 \\
IVV/IEV & 26.224 & 32.757 & 39.992 & 7.106 & 10.154 & 13.291 & 0.620 & -2.607 & -3.988 & -0.250 & -0.227 & -0.216 & -0.462 & -0.446 & -0.428 & 24.481 & 32.611 & 40.559 \\
IVV/IGOV & 0.439 & 1.184 & 1.927 & -0.008 & 0.217 & 0.437 & -0.125 & -0.264 & -0.404 & -0.022 & -0.021 & -0.019 & -0.019 & -0.016 & -0.014 & -0.065 & 0.732 & 1.500 \\
IVV/QQQM & 14.904 & 16.359 & 18.604 & 16.569 & 19.243 & 22.692 & 6.323 & 4.344 & 2.420 & -0.169 & -0.158 & -0.137 & -0.356 & -0.346 & -0.335 & 47.563 & 52.442 & 58.892 \\
IVV/UNG & -0.247 & -0.062 & 0.158 & -0.148 & -0.092 & -0.030 & 0.084 & 0.056 & 0.029 & -0.044 & -0.043 & -0.044 & -0.014 & -0.013 & -0.013 & -0.421 & -0.225 & -0.008 \\
\hline \\[-1.8ex] 
\end{tabular}} 
\caption{Hedged asset: S\&P 500. We compute the difference, multiplied by 100, of various performance metrics obtained using the robust Hedge Ratios relative to those obtained using the standard Hedge Ratios. P\&L, SR, $\Omega$, DD, VaR, ES correspond to profit and loss, Sharpe ratio, Omega ratio, maximum drawdown, 95\% Value at Risk, and 95\% Expected Shortfall, respectively. Transaction costs are indicated in curly brackets.} 
\label{tab:transaction_costs_ivv} 
\end{table}
\end{sidewaystable}

\begin{sidewaystable}
\begin{table}[H] 
\centering 
\scalebox{0.5}{
\begin{tabular}{@{\extracolsep{5pt}} ccccccccccccccccccccccccccc} 
\\[-1.8ex]\hline 
\hline \\[-1.8ex] 
Instrument & P\&L & P\&L(5bp) & P\&L(10bp) & SR & SR(5bp) & SR(10bp) & $\Omega$ & $\Omega$(5bp) & $\Omega$(10bp) & DD & DD(5bp) & DD(10bp) & VaR & VaR(5bp) & VaR(10bp) & ES & ES(5bp) & ES(10bp) \\
\midrule
QQQM/AAAU & 6.948 & 7.363 & 8.180 & 2.860 & 3.107 & 3.490 & -0.022 & -0.204 & -0.387 & -0.230 & -0.219 & -0.212 & -0.078 & -0.076 & -0.074 & 8.604 & 9.121 & 10.114 \\
QQQM/ASHR & 5.095 & 5.447 & 6.108 & 2.261 & 2.388 & 2.649 & 1.348 & 1.154 & 0.959 & 0.249 & 0.250 & 0.252 & 0.043 & 0.046 & 0.049 & 7.033 & 7.220 & 7.904 \\
QQQM/BNO & 1.734 & 1.953 & 2.325 & 0.574 & 0.693 & 0.856 & -0.472 & -0.506 & -0.541 & -0.024 & -0.021 & -0.021 & -0.069 & -0.068 & -0.066 & 1.930 & 2.253 & 2.745 \\
QQQM/CORP & 1.281 & 2.798 & 4.350 & 0.419 & 1.089 & 1.760 & 0.534 & 0.357 & 0.180 & 0.046 & 0.048 & 0.052 & -0.054 & -0.043 & -0.032 & 1.329 & 3.389 & 5.417 \\
QQQM/EWH & 3.282 & 4.139 & 4.896 & 1.300 & 1.707 & 2.048 & 1.032 & 0.812 & 0.591 & -0.095 & -0.089 & -0.060 & -0.156 & -0.151 & -0.146 & 4.539 & 5.531 & 6.452 \\
QQQM/GOVT & 0.461 & 2.000 & 3.588 & 0.099 & 0.771 & 1.454 & 0.445 & 0.288 & 0.130 & -0.057 & -0.043 & -0.038 & 0.002 & 0.012 & 0.022 & 0.250 & 2.356 & 4.431 \\
QQQM/GSG & 1.861 & 2.290 & 2.902 & 0.599 & 0.816 & 1.086 & -0.581 & -0.647 & -0.714 & -0.075 & -0.060 & -0.060 & -0.066 & -0.063 & -0.059 & 2.051 & 2.663 & 3.468 \\
QQQM/ICLN & -4.063 & -3.245 & -1.735 & -2.235 & -1.810 & -1.099 & 0.030 & -0.088 & -0.207 & -0.177 & -0.176 & -0.166 & -0.329 & -0.322 & -0.314 & -6.515 & -5.058 & -2.763 \\
QQQM/IEV & 7.449 & 8.895 & 11.152 & 2.753 & 3.723 & 5.007 & -0.530 & -0.945 & -1.361 & -0.233 & -0.230 & -0.215 & -0.446 & -0.436 & -0.426 & 9.003 & 11.523 & 14.946 \\
QQQM/IGOV & -0.025 & 0.569 & 1.227 & -0.057 & 0.202 & 0.481 & 0.231 & 0.160 & 0.089 & -0.108 & -0.100 & -0.094 & 0.003 & 0.005 & 0.008 & -0.164 & 0.653 & 1.525 \\
QQQM/IVV & 27.443 & 30.925 & 35.923 & 21.710 & 25.335 & 29.843 & 16.010 & 12.397 & 9.080 & -0.238 & -0.215 & -0.183 & -0.484 & -0.447 & -0.410 & 52.807 & 58.027 & 64.753 \\
QQQM/UNG & -0.091 & -0.034 & 0.009 & -0.037 & -0.014 & 0.004 & -0.047 & -0.051 & -0.055 & -0.015 & -0.013 & -0.007 & -0.007 & -0.007 & -0.007 & -0.032 & 0.034 & 0.090 \\
\hline \\[-1.8ex] 
\end{tabular}} 
\caption{Hedged asset: NASDAQ 100. We compute the difference, multiplied by 100, of various performance metrics obtained using the robust Hedge Ratios relative to those obtained using the standard Hedge Ratios. P\&L, SR, $\Omega$, DD, VaR, ES correspond to profit and loss, Sharpe ratio, Omega ratio, maximum drawdown, 95\% Value at Risk, and 95\% Expected Shortfall, respectively. Transaction costs are indicated in curly brackets.} 
\label{tab:transaction_costs_qqqm} 
\end{table}
\end{sidewaystable}

\begin{sidewaystable}
\begin{table}[H] 
\centering 
\scalebox{0.5}{
\begin{tabular}{@{\extracolsep{5pt}} ccccccccccccccccccccccccccc} 
\\[-1.8ex]\hline 
\hline \\[-1.8ex] 
Instrument & P\&L & P\&L(5bp) & P\&L(10bp) & SR & SR(5bp) & SR(10bp) & $\Omega$ & $\Omega$(5bp) & $\Omega$(10bp) & DD & DD(5bp) & DD(10bp) & VaR & VaR(5bp) & VaR(10bp) & ES & ES(5bp) & ES(10bp) \\
\midrule
UNG/AAAU & 3.427 & 4.029 & 4.555 & 0.062 & 0.072 & 0.081 & 0.262 & 0.311 & 0.367 & 0.020 & 0.024 & 0.026 & 0.006 & 0.007 & 0.008 & 0.981 & 1.138 & 1.280 \\
UNG/ASHR & 1.143 & 2.324 & 3.504 & -0.035 & -0.018 & -0.001 & -0.651 & -0.860 & -1.070 & -0.075 & -0.071 & -0.066 & -0.025 & -0.022 & -0.019 & -0.081 & 0.141 & 0.360 \\
UNG/BNO & 4.954 & 6.750 & 8.551 & 0.066 & 0.092 & 0.118 & 0.824 & 0.474 & 0.121 & 0.082 & 0.082 & 0.084 & -0.098 & -0.093 & -0.089 & 0.236 & 0.578 & 0.915 \\
UNG/CORP & 6.366 & 12.860 & 19.355 & 0.073 & 0.160 & 0.248 & 3.933 & 2.804 & 1.640 & 0.077 & 0.103 & 0.152 & 0.096 & 0.117 & 0.146 & 1.558 & 2.702 & 3.839 \\
UNG/EWH & 13.162 & 16.160 & 19.159 & 0.006 & 0.053 & 0.097 & -0.566 & -1.101 & -1.647 & -0.073 & -0.070 & -0.056 & -0.162 & -0.154 & -0.130 & 1.156 & 1.733 & 2.293 \\
UNG/GOVT & -0.461 & 0.557 & 1.568 & -0.004 & 0.010 & 0.024 & 0.204 & 0.013 & -0.179 & 0.000 & 0.002 & 0.004 & -0.015 & -0.011 & -0.007 & -0.165 & 0.023 & 0.208 \\
UNG/GSG & 12.605 & 16.581 & 20.582 & 0.150 & 0.208 & 0.265 & 1.988 & 1.219 & 0.435 & 0.013 & 0.027 & 0.038 & -0.065 & -0.056 & -0.048 & 1.230 & 1.985 & 2.730 \\
UNG/ICLN & -6.839 & -5.381 & -3.932 & -0.095 & -0.075 & -0.054 & -1.619 & -1.929 & -2.241 & -0.042 & -0.042 & -0.039 & -0.094 & -0.090 & -0.086 & -1.569 & -1.290 & -1.019 \\
UNG/IEV & 8.834 & 13.187 & 17.677 & 0.047 & 0.109 & 0.173 & 0.810 & -0.077 & -0.984 & -0.045 & -0.044 & 0.032 & -0.115 & -0.103 & -0.086 & 1.013 & 1.820 & 2.636 \\
UNG/IGOV & -0.101 & 1.894 & 3.890 & 0.020 & 0.045 & 0.069 & 1.298 & 0.958 & 0.613 & 0.026 & 0.038 & 0.084 & 0.069 & 0.074 & 0.091 & 0.247 & 0.601 & 0.954 \\
UNG/IVV & 18.512 & 24.861 & 31.417 & 0.254 & 0.342 & 0.433 & 5.873 & 4.767 & 3.627 & -0.066 & -0.065 & -0.018 & -0.020 & -0.006 & 0.022 & 3.156 & 4.290 & 5.443 \\
UNG/QQQM & 3.222 & 3.645 & 4.224 & 0.426 & 0.482 & 0.559 & 1.422 & 1.269 & 1.115 & 0.062 & 0.065 & 0.068 & -0.002 & 0.000 & 0.002 & 1.113 & 1.261 & 1.457 \\
\hline \\[-1.8ex] 
\end{tabular}} 
\caption{Hedged asset: Natural gas. We compute the difference, multiplied by 100, of various performance metrics obtained using the robust Hedge Ratios relative to those obtained using the standard Hedge Ratios. P\&L, SR, $\Omega$, DD, VaR, ES correspond to profit and loss, Sharpe ratio, Omega ratio, maximum drawdown, 95\% Value at Risk, and 95\% Expected Shortfall, respectively. Transaction costs are indicated in curly brackets.} 
\label{tab:transaction_costs_ung} 
\end{table}
\end{sidewaystable}

\end{document}